\newcommand{\multiline}[1]{%
  \begin{tabularx}{\dimexpr\linewidth-\ALG@thistlm}[t]{@{}X@{}}
    #1
  \end{tabularx}
}
\newtheorem{theorem}{Theorem}
\newtheorem{definition}{Definition}
\newtheorem{lemma}{Lemma}
\newtheorem{corollary}{Corollary}
\newtheorem{remark}{Remark}
\newtheorem{problem}{Problem}
\newcommand{\eq}[1]{Eq.~(\ref{eq:#1})}
\newcommand{\thm}[1]{\hyperref[thm:#1]{Theorem~\ref*{thm:#1}}}
\newcommand{\cor}[1]{\hyperref[cor:#1]{Corollary~\ref*{cor:#1}}}
\newcommand{\defn}[1]{\hyperref[defn:#1]{Definition~\ref*{defn:#1}}}
\newcommand{\prob}[1]{\hyperref[prob:#1]{Problem~\ref*{prob:#1}}}
\newcommand{\lem}[1]{\hyperref[lem:#1]{Lemma~\ref*{lem:#1}}}
\newcommand{\prop}[1]{\hyperref[prop:#1]{Proposition~\ref*{prop:#1}}}
\newcommand{\fig}[1]{\hyperref[fig:#1]{Figure~\ref*{fig:#1}}}
\newcommand{\tab}[1]{\hyperref[tab:#1]{Table~\ref*{tab:#1}}}
\newcommand{\algo}[1]{\hyperref[algo:#1]{Algorithm~\ref*{algo:#1}}}
\renewcommand{\sec}[1]{\hyperref[sec:#1]{Section~\ref*{sec:#1}}}
\newcommand{\rem}[1]{\hyperref[rem:#1]{Remark~\ref*{rem:#1}}}
\newcommand{\append}[1]{\hyperref[append:#1]{Appendix~\ref*{append:#1}}}
\newcommand{\fac}[1]{\hyperref[fac:#1]{Fact~\ref*{fac:#1}}}
\newcommand{\lin}[1]{\hyperref[lin:#1]{Line~\ref*{lin:#1}}}
\def\>{\rangle}
\def\<{\langle}
\renewcommand{\bra}[1]{\langle#1|}
\renewcommand{\ket}[1]{|#1\rangle}
\newcommand{\E}{\mathbb{E}}
\newcommand{\A}{\mathcal{A}}
\newcommand{\orac}{\mathcal{O}}
\DeclareMathOperator{\poly}{poly}
\DeclareMathOperator{\polylog}{polylog}
\DeclareMathOperator{\reg}{Regret}
\DeclareMathOperator{\sreg}{Swap-Regret}
\DeclareMathSymbol{\shortminus}{\mathbin}{AMSa}{"39}
\newcommand{\eps}{\varepsilon}
\newcommand{\nonl}{\renewcommand{\nl}{\let\nl\oldnl}}
\title{Near-Optimal Quantum Algorithms for Computing (Coarse) Correlated Equilibria of General-Sum Games}
\author[1,2]{Tongyang Li\thanks{tongyangli@pku.edu.cn}}
\author[1,2]{Xinzhao Wang\thanks{wangxz@stu.pku.edu.cn}}
\author[1,2]{Yexin Zhang\thanks{zhangyexin@stu.pku.edu.cn}}
\affil[1]{School of Computer Science, Peking University}
\affil[2]{Center on Frontiers of Computing Studies, Peking University}
\begin{document}
\maketitle

\begin{abstract}
Computing Nash equilibria of zero-sum games in classical and quantum settings is extensively studied. For general-sum games, computing Nash equilibria is PPAD-hard and the computing of a more general concept called correlated equilibria has been widely explored in game theory. In this paper, we initiate the study of quantum algorithms for computing $\varepsilon$-approximate correlated equilibria (CE) and coarse correlated equilibria (CCE) in multi-player normal-form games. Our approach utilizes quantum improvements to the multi-scale Multiplicative Weight Update (MWU) method for CE calculations, achieving a query complexity of $\tilde{O}(m\sqrt{n})$ for fixed $\varepsilon$. For CCE, we extend techniques from quantum algorithms for zero-sum games to multi-player settings, achieving query complexity $\tilde{O}(m\sqrt{n}/\varepsilon^{2.5})$. Both algorithms demonstrate a near-optimal scaling in the number of players $m$ and actions $n$, as confirmed by our quantum query lower bounds. 
\end{abstract}


\section{Introduction}
\paragraph{Motivations.}
Game theory is a branch of mathematics that studies the interactions between strategies of rational decision-makers. It focuses on the situations where the outcome of each participant depends on not only their own strategies but also the strategies of others. One of the simplest scenarios is a two-player zero-sum game, where the total payoff of the two players does not change regardless of their individual strategies. A key concept in game theory is \textit{Nash equilibrium}, which describes a situation where no player can unilaterally change their strategy to achieve a better payoff, with the strategies of the other players being fixed. Notably, a Nash equilibrium in a two-player zero-sum game can be reached by no-regret online learning: when both players repeatedly adjust their strategies to minimize regret, the average play converges to the equilibrium. This observation is central to the design of several classical and quantum algorithms for computing equilibria. \citet{grigoriadis1995sublinear} showed that finding a pair of $\varepsilon$-near Nash equilibrium strategies of a two-player zero-sum game with $n$ actions could be realized using $O(n/\varepsilon^2)$ classical queries, which is sub-linear with respect to the problem size. For quantum algorithms, Refs.~\cite{li2019sublinear} and~\cite{van2019quantum} achieved a quadratic speedup in $n$ with $\tilde{O}(\sqrt{n}/\varepsilon^4)$ and $\tilde{O}(\sqrt{n}/\varepsilon^3)$ quantum queries, respectively, and the optimality in $n$ is proven in \citet{li2019sublinear}. Currently, the state-of-the-art results \citep{bouland2023quantum,gao2024logarithmic} have improved the $\varepsilon$-dependency of the query complexity to $\tilde{O}(\sqrt{n}/\varepsilon^{2.5})$.

Many scenarios in game theory cannot be modeled as two-player zero-sum games, such as the congestion game \cite{rosenthal1973class} and the scheduling game \cite{fotakis2002structure,papadimitriou2008computing}. In a congestion game, each player chooses a strategy from a set of actions, and the loss of each player depends on the number of players choosing the same action. The congestion game is a widely used model in traffic routing. In a scheduling game, strategies are a set of machines and the loss of choosing a machine depends on the total load of the machine. Both congestion games and scheduling games are examples of \emph{normal-form games}. In an $m$-player normal-form game, player $i$ chooses a strategy $a_i$ in $\mathcal{A}_i$ with $n$ actions, and then suffers a loss $\mathcal{L}_i(a_1,\ldots,a_m)$.

For a general normal-form game, finding a Nash equilibrium is PPAD-hard \citep{chen2009settling}. A more general concept than the Nash equilibrium is the correlated equilibrium proposed by \citet{aumann1974subjectivity}. In this setting, a trusted coordinator pulls an action profile from a distribution $\mathcal{D}$ on the joint action set of all players and sends each player its action. 
We call $\mathcal{D}$ an $\varepsilon$-\emph{correlated equilibrium} (CE) if no player can reduce its loss by $\varepsilon$ by changing their action based on what the coordinator sends. 
For any player, if it cannot reduce its loss by $\varepsilon$ by choosing a fixed action regardless of what the coordinator sends, we call the distribution $\mathcal{D}$ an $\varepsilon$-\emph{coarse correlated equilibrium} (CCE).  
The coarse correlated equilibrium is a relaxation of the correlated equilibrium, hence it is easier to find one (see \tab{counter-example}).

\begin{table}[]
\caption{Loss matrix where $\mathcal{D} = \{p(C,A)=\frac{1}{2},p(B,B)=\frac{1}{2}\}$ is a CCE but not a CE: Player 1 can change $B\to D$ and $C\to A$ to reduce the loss.}
\label{tab:counter-example}
\centering
\begin{tabular}{cccccc}
&   & \multicolumn{4}{c}{Player 2} \\ \cline{2-6} 
\multicolumn{1}{c|}{}                          & \multicolumn{1}{c|}{}  & \multicolumn{1}{c|}{A}     & \multicolumn{1}{c|}{B}     & \multicolumn{1}{c|}{C}     & \multicolumn{1}{c|}{D}     \\ \cline{2-6} 
\multicolumn{1}{c|}{\multirow{4}{*}{Player 1}} & \multicolumn{1}{c|}{A} & \multicolumn{1}{c|}{(1,2)} & \multicolumn{1}{c|}{(3,2)} & \multicolumn{1}{c|}{(2,2)} & \multicolumn{1}{c|}{(2,2)} \\ \cline{2-6} 
\multicolumn{1}{c|}{}                          & \multicolumn{1}{c|}{B} & \multicolumn{1}{c|}{(2,2)} & \multicolumn{1}{c|}{(2,2)} & \multicolumn{1}{c|}{(2,2)} & \multicolumn{1}{c|}{(2,2)} \\ \cline{2-6} 
\multicolumn{1}{c|}{}                          & \multicolumn{1}{c|}{C} & \multicolumn{1}{c|}{(2,2)} & \multicolumn{1}{c|}{(2,2)} & \multicolumn{1}{c|}{(2,2)} & \multicolumn{1}{c|}{(2,2)} \\ \cline{2-6} 
\multicolumn{1}{c|}{}                          & \multicolumn{1}{c|}{D} & \multicolumn{1}{c|}{(3,2)} & \multicolumn{1}{c|}{(1,2)} & \multicolumn{1}{c|}{(2,2)} & \multicolumn{1}{c|}{(2,2)} \\ \cline{2-6} 
\end{tabular}
\end{table}

Computing the correlated equilibrium and coarse correlated equilibrium of a normal-form game has been extensively studied in the classical setting. Since the size of description of a normal-form game is exponential in $m$, any algorithm needs $\Omega(\exp(m))$ time to solve the problem in the worst case. A standard approach to handle this issue is to assume that the algorithm can query the loss function of the game as a black-box and study the query complexity of the problem. In this case, a correlated equilibrium can be computed using $\poly(n,m)$ queries by LP-based algorithms \cite{papadimitriou2008computing,jiang2011polynomial}. The algorithm proposed by \citet{jiang2011polynomial} can compute an exact correlated equilibrium but the degree of its query complexity is high.
Analogous to the case of Nash equilibrium, an approximate correlated equilibrium can be computed using a no-swap-regret learning algorithm \cite{foster1997calibrated} (see its definition in \sec{prelim}). This connection has motivated a line of research focused on designing efficient no-swap-regret algorithms in normal-form games. In particular, \citet{peng2023fast,dagan2024external} designed the first algorithms computing an $\varepsilon$-correlated equilibrium using $\tilde{O}(m n)$ queries for a fixed precision $\varepsilon$. Similarly, an $\varepsilon$-coarse correlated equilibrium can be computed by a no-external-regret learning algorithm. 
While recent variants of the Multiplicative Weights Update (MWU) algorithm, such as optimistic, clairvoyant, and cautious MWU~\cite{daskalakis2021near,piliouras2022beyond,soleymani2025faster,soleymani2025cautious}, achieve remarkable $\polylog(T)$ regret bounds after $T$ rounds, these bounds scale polynomially with the number of players $m$. This leads to a total query complexity that is super-linear in $m$.

Our work differs from the field of \textit{quantum games}~\cite{jain2022matrix, lotidis2023learning, lin2024no, vasconcelos2025quadratic}, where players play quantum strategies and quantum equilibria are considered. In contrast, we use quantum algorithms to more efficiently find classical equilibria in purely classical games.

\paragraph{Contributions.}
In this paper, we initiate the study of quantum algorithms for computing the CE and CCE of multi-player normal-form games, aiming for near-optimal complexity in both the number of players $m$ and actions $n$. For computing $\varepsilon$-CE, our algorithm quantizes the state-of-the-art multi-scale MWU framework~\cite{peng2023fast,dagan2024external}, which provides the fastest known classical convergence for a fixed $\varepsilon$. For computing $\varepsilon$-CCE, our approach is specifically designed to achieve optimal $m$ and $n$ scaling. We therefore build upon the algorithm of Grigoriadis and Khachiyan~\cite{grigoriadis1995sublinear}, whose regret bound is crucially independent of the number of players. This choice is key to designing a quantum algorithm with a query complexity that is linear in $m$, which is optimal.

We assume that a quantum computer can access the game by querying a unitary oracle $\orac_{\mathcal{L}}$ and study the query complexity of finding an $\varepsilon$-(coarse) correlated equilibrium. 
\begin{definition}
    \label{defn:quantum-query}
    For an  $m$-player normal-form game $(\{ \mathcal{A}_i \}_{i=1}^m, \{ \mathcal{L}_i\}_{i=1}^m)$, a unitary oracle $\orac_{\mathcal{L}}$ satisfying 
    \begin{align}
    \label{eq:orac_u}
        \orac_{\mathcal{L}}\ket{i}\ket{a_1}\cdots \ket{a_m}\ket{0} = \ket{i}\ket{a_1}\cdots\ket{a_m}\ket{\mathcal{L}_i(a_1, \ldots, a_m)}
    \end{align} 
    for all $i\in[m]$ and $a_1\in \mathcal{A}_1,\ldots, a_m\in \mathcal{A}_m$ is an oracle of the game, and the query complexity of an algorithm is the number of queries to $\orac_{\mathcal{L}}$.
\end{definition}
The unitary oracle $\orac_{\mathcal{L}}$ can be constructed efficiently if the game has a succinct representation that allows for an efficient classical algorithm to compute the loss function $\mathcal{L}_i(a_1, \ldots, a_m)$ \cite{bennett1989time}. For example, in a congestion game, a player's loss is determined by the costs of their chosen resources, where the cost of each resource depends on the total number of players who selected it. This structure allows for efficient loss calculation.
Given access to $\orac_{\mathcal{L}}$, we state the following problem of computing an $\varepsilon$-(coarse) correlated equilibrium:
\begin{problem}
    \label{prob:quantum-ce}
    Given an $m$-player normal-form game $(\{ \mathcal{A}_i \}_{i=1}^m, \{ \mathcal{L}_i\}_{i=1}^m)$ with $n$ actions for each player, an error parameter $\varepsilon>0$, and a failure probability $\alpha>0$, prepare a quantum state 
    \begin{align}
    \textstyle
        \ket{\psi_o} = \sum_{a\in \mathcal{A}}\sqrt{q(a)}\ket{a}\ket{\psi_a}
    \end{align}
    for some normalized states $\ket{\psi_a}$ such that $q$ is  an $\varepsilon$-(coarse) correlated equilibrium of the game with success probability at least $1-\alpha$.
\end{problem}

In particular, we give quantum algorithms for computing $\varepsilon$-CE and $\varepsilon$-CCE as follows:

\begin{theorem}[Informal version of \thm{quantum-ce}]
    \label{thm:theorem-quantum-ce}
    \algo{quantum-ce} computes an $\varepsilon$-correlated equilibrium of an $m$-player normal-form game with $n$ actions for each player using $m\sqrt{n}(\log(mn))^{O(1/\varepsilon)}$ queries to $\orac_{\mathcal{L}}$ and $m^2\sqrt{n}(\log(mn))^{O(1/\varepsilon)}$ time.
\end{theorem}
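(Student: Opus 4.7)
The plan is to quantize the classical multi-scale multiplicative weights update (MWU) framework of Peng--Rubinstein and Dagan et al., which achieves no-swap-regret using $T = (\log(mn))^{O(1/\varepsilon)}$ rounds of play and hence an $\varepsilon$-CE via the standard no-swap-regret-to-CE reduction. Each player's no-swap strategy is maintained through a nested family of experts (roughly $n$ experts per player, one per ``recommended'' action), each running its own MWU on a reweighted loss vector. Classically, each round costs $\tilde{O}(mn)$ queries to $\orac_{\mathcal{L}}$: every player must evaluate, or at least sufficiently approximate, the expected loss of each of its $n$ actions under the other players' current distributions. Reaching the theorem's $m\sqrt{n}(\log(mn))^{O(1/\varepsilon)}$ bound therefore requires a $\sqrt{n}$ quantum speedup of the per-player, per-round work.

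For the per-round speedup I would assemble two tools developed in the quantum zero-sum-game literature. First, quantum Gibbs sampling applied to the accumulated loss vector $L_i^t$ prepares a coherent encoding of the MWU distribution $p_i^t \propto \exp(-\eta L_i^t)$ in $\tilde{O}(\sqrt{n})$ queries. Second, quantum mean / amplitude estimation on $\orac_{\mathcal{L}}$, composed with the Gibbs states of the other $m-1$ players, gives the needed loss estimates $\langle p_{-i}^t, \mathcal{L}_i(a, \cdot)\rangle$ using $\tilde{O}(\sqrt{n})$ queries per sample. The key structural observation is that the multi-scale MWU only requires \emph{samples} from each level's current distribution (not the full vector) to drive the next-level update; these samples are produced coherently from the Gibbs states, so a single round of multi-scale MWU is simulated in $\tilde{O}(m\sqrt{n})$ queries. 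Multiplying by $T = (\log(mn))^{O(1/\varepsilon)}$ rounds yields the claimed query complexity, and the additional factor of $m$ in the time bound comes from the $O(m)$ classical bookkeeping per round per player. The output state $\ket{\psi_o}$ is obtained by uniformly superposing the $T$ rounds and, within each round, tensoring the prepared Gibbs states to sample a joint action profile, giving the coherent encoding of the empirical (correlated) distribution demanded by \prob{quantum-ce}.

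The main obstacle is the \emph{error propagation through the multi-scale hierarchy}. Each Gibbs sample and each mean estimate introduces an $\ell_\infty$ error $\delta$ scaling inversely with the quantum budget, and these errors compose across nested experts. I would show that choosing $\delta = \poly(\varepsilon)/(mn)^{O(1)}$ still preserves the multi-scale regret guarantee, which only inflates the per-round cost by polylog factors that can be absorbed into the $(\log(mn))^{O(1/\varepsilon)}$ exponent. A related subtlety is coherence: since the Gibbs-sampling and amplitude-estimation subroutines are used as black boxes inside further amplitude estimations, their failure probability must be amplified to $1-\alpha/\poly(m,n,T)$ (again only a logarithmic overhead) and their garbage registers uncomputed so as not to corrupt the coherent output $\ket{\psi_o}$. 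A secondary technical point is to verify that the MWU ratio $\max_a p_i^t(a)/\min_a p_i^t(a)$ stays polynomial throughout the trajectory, which is what lets quantum Gibbs sampling achieve the $\tilde{O}(\sqrt{n})$ scaling rather than a worse dependence.
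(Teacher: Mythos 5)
Your high-level route matches the paper's: quantize the multi-scale MWU of Peng et al.\ so that each player's strategy is only ever accessed through quantum Gibbs sampling, run $T=(\log (mn))^{O(1/\varepsilon)}$ rounds, and invoke the no-swap-regret-to-CE reduction, with the output state obtained by superposing rounds and tensoring the per-player Gibbs states. However, there is a genuine gap at the crux of the argument: you never specify \emph{how the Gibbs sampler gets coherent access to the accumulated loss information from up to $T$ past rounds}. The level-$k$ MWU instance at round $t$ must exponentiate a loss vector aggregated over as many as $H^{k}\approx T$ past rounds, so the $\tilde O(\sqrt n)$ Gibbs step needs an amplitude-encoding of that aggregated $n$-dimensional vector, which in turn requires a persistent, coherently queryable record of the past play. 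The paper's central device is exactly this: it draws $S=\tilde O(B^2\varepsilon^{-2}\log(mnT/\alpha))$ joint action-profile samples per round, stores \emph{all} of them in a single QRAM, and builds the unitary $V_t$ (a $Bh_{k,t}H^{k-1}$-amplitude-encoding of the interval-aggregated empirical loss $\bar\ell_{k,t}$) from one superposed QRAM lookup plus one query to $\orac_{\mathcal{L}}$ and $\orac_{\mathcal{L}}^{\dagger}$; correctness then follows from the multi-scale swap-regret bound plus a Hoeffding/union-bound argument relating empirical to expected losses and a TV argument for the $\delta$-approximate Gibbs outputs. Your alternative of "quantum mean / amplitude estimation of $\langle p_{-i}^t,\mathcal{L}_i(a,\cdot)\rangle$" does not fill this hole: estimating these quantities entrywise costs $\Omega(n)$ estimations per player per round (killing the $\sqrt n$ speedup), and if instead you intend the estimates only implicitly, you still owe the construction of the history-dependent amplitude-encoding, which is the missing idea. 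Relatedly, your bookkeeping of the final bounds is off in a way that reveals this: the extra factor of $m$ in the time complexity does not come from "classical bookkeeping per round" but from the gate cost of each QRAM access, $O(TSm\log n)$, because each stored entry is an $m\log n$-bit action profile; and the per-sample Gibbs cost is $\tilde O(BT\sqrt n)$ (the encoding normalization grows with the length of the aggregation interval), not $\tilde O(\sqrt n)$ — this only washes out because $\poly(T)$ is absorbed into $(\log(mn))^{O(1/\varepsilon)}$.

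A secondary inaccuracy: you describe the classical algorithm as maintaining "roughly $n$ experts per player, one per recommended action." That is the Blum--Mansour external-to-swap reduction, whose $\Omega(n)$ overhead is precisely what the multi-scale MWU avoids; the actual algorithm runs only $2^K=O(B/\varepsilon)$ MWU instances per player at geometrically spaced time scales and plays their uniform mixture, and it is this $(t,k)$ mixture that the paper superposes when preparing $\ket{\psi_o}$. Your error-budgeting remarks (choosing the Gibbs TV error small, amplifying subroutine success probabilities, uncomputing garbage) are in the right spirit — the paper takes $\delta=\varepsilon/6B$ and splits the $\varepsilon$ budget three ways between regret, sampling concentration, and Gibbs approximation — but as written the proposal would not compile into a proof without the QRAM-of-samples construction and its gate-level cost analysis.
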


\begin{theorem}[Informal version of \thm{quantum-cce}]
\label{thm:theorem-quantum-cce}
    \algo{quantum-cce} outputs the classical description of an $\varepsilon$-coarse correlated equilibrium of an $m$-player normal-form game with $n$ actions for each player using $\tilde{O}(m\sqrt{n}/\varepsilon^{2.5})$ queries to $\orac_{\mathcal{L}}$ and $\tilde{O}(m^2\sqrt{n}/\varepsilon^{4.5})$ time. 
\end{theorem}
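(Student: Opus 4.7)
The plan is to combine the stochastic multiplicative weight update (MWU) algorithm of Grigoriadis and Khachiyan with the quantum Gibbs sampling and amplitude estimation subroutines developed for two-player zero-sum games by \citet{bouland2023quantum,gao2024logarithmic}. The feature that lets this approach achieve optimal scaling in $m$ is that the Grigoriadis--Khachiyan external-regret bound depends only on $\log n$ and on the variance of the per-round stochastic loss estimates, and is \emph{independent} of the number of players. We will therefore pay for $m$ only by running one MWU instance per player in each round, rather than through the regret bound itself.

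Algorithmically, I would run, in parallel, one MWU instance per player $i \in [m]$, each maintaining an implicit distribution $p_i^{(t)} \propto \exp(-\eta L_i^{(t)})$ over $\mathcal{A}_i$, where $L_i^{(t)}$ is the running sum of stochastic loss estimates for player $i$. In round $t$ the algorithm would (a) produce a joint sample $(a_1^{(t)}, \dots, a_m^{(t)})$ with $a_i^{(t)} \sim p_i^{(t)}$ and append it to the output record, and (b) for each player $i$ use the opponent profile $a_{-i}^{(t)}$ to form a bounded, approximately unbiased stochastic estimate of the loss vector $\mathcal{L}_i(\cdot, a_{-i}^{(t)})$ that drives the update of $L_i^{(t)}$. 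After $T = \tilde{O}(1/\varepsilon^2)$ rounds the empirical joint distribution over the recorded tuples is an $\varepsilon$-CCE with high probability, by the standard reduction from no-external-regret dynamics to CCE. The classical description is simply the list of $T$ sampled joint actions.

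For each step above I would invoke the quantum subroutines of \citet{bouland2023quantum,gao2024logarithmic}, which show how to draw a sample from a Gibbs distribution $\exp(-\eta L)$ over $n$ items and how to produce a variance-controlled one-hot loss estimate for the MWU update in $\tilde{O}(\sqrt{n}/\varepsilon^{0.5})$ queries to an oracle for $L$. The loss oracle needed by player $i$ in round $t$ is synthesized on the fly by calling $\orac_{\mathcal{L}}$ with the player index $i$ hard-wired and the classically known opponent profile $a_{-i}^{(t)}$ loaded into the remaining action registers, at a cost of $O(1)$ base queries and $O(m)$ classical arithmetic. The per-round per-player quantum cost is therefore $\tilde{O}(\sqrt{n}/\varepsilon^{0.5})$, giving a total query complexity of $T \cdot m \cdot \tilde{O}(\sqrt{n}/\varepsilon^{0.5}) = \tilde{O}(m\sqrt{n}/\varepsilon^{2.5})$; the classical bookkeeping required to prepare each oracle call and to maintain the MWU records across all $T$ rounds and $m$ players is what contributes the extra $m/\varepsilon^2$ factor to the time bound.

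The main obstacle will be verifying that the Grigoriadis--Khachiyan regret analysis is robust under two perturbations introduced by the quantum machinery: (i) the Gibbs sampler returns samples only from an $\varepsilon'$-approximation of $p_i^{(t)}$ in total variation distance, and (ii) the stochastic loss estimator is only approximately unbiased, with only an approximately controlled variance. I would set $\varepsilon'$ to be a small enough polynomial in $\varepsilon/(mT)$ so that, summed over all $T$ rounds and all $m$ players, the accumulated TV error in the distribution of the recorded joint samples is at most $\varepsilon/4$; the residual bias and excess variance can then be absorbed into the standard MWU potential argument without changing the $\tilde{O}(1/\varepsilon^2)$ rate. A secondary subtlety is ensuring consistency between step (a) and step (b) within a round: the sample used to record the CCE and the samples used to drive each player's update must be drawn from (close approximations of) the same distribution $p_i^{(t)}$, which is handled cleanly by invoking the Gibbs sampler twice per player per round.
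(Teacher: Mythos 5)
Your proposal is, at the algorithmic level, essentially the paper's \algo{quantum-cce}: one implicit MWU per player driven by losses against the sampled opponent profiles, $T=\tilde{O}(B^2\log n/\varepsilon^2)$ rounds, each per-player sample produced by the dynamic Gibbs sampler machinery of \citet{bouland2023quantum} at amortized cost $\tilde{O}(\sqrt{n}/\sqrt{\varepsilon})$ queries, for a total of $\tilde{O}(m\sqrt{n}/\varepsilon^{2.5})$. Where you genuinely diverge is the correctness argument. The paper certifies the time-averaged product distribution $\bar{u}=\frac{1}{T}\sum_t \bigotimes_i u_i^{(t)}$ (and then its empirical proxy $\hat{x}$), which forces it to relate regret against the realized sampled losses $\mathcal{L}_i(\cdot,a_{-i}^{(t)})$ to regret against the expected losses $\mathcal{L}_i(\cdot,u_{-i}^{(t)})$; this is done with the ghost-iteration trick plus two Azuma/martingale steps. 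You instead certify the empirical joint distribution of the $T$ recorded profiles, which avoids the ghost iteration entirely --- but then the ``standard reduction'' you invoke needs the realized regret of the \emph{pure} actions $a_i^{(t)}$, whereas MWU only bounds the regret of the distributions $p_i^{(t)}$. You still owe a martingale argument (Azuma over the filtration of past play, using the conditional independence of $a_i^{(t)}$ and $a_{-i}^{(t)}$, with a union bound over the $n$ fixed deviations and over players) to bridge the two; this, together with the $\delta$-TV error of the sampler, is exactly what your phrase ``absorbed into the standard MWU potential argument'' is hiding. Your choice of TV accuracy polynomial in $\varepsilon/(mT)$ is harmless since the sampler's dependence on $1/\delta$ is polylogarithmic (the paper takes $\delta=\varepsilon/(16B(n-1))$), so the route goes through, and is arguably simpler than the paper's.

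One caveat on the time bound: the extra $m/\varepsilon^2$ factor is not generic ``classical bookkeeping'' but the gate cost of the QRAM holding the $T$ sampled profiles of $m\log n$ bits each, which every amplitude-encoding query inside the Gibbs sampler must touch. This is also where the $m$-player extension is nontrivial: naively porting the two-player sampler-tree data structure of \citet{bouland2023quantum} would require storage of size $n^{m-1}$, and the paper's fix --- store the raw per-round samples and evaluate $\mathcal{L}_i$ in superposition over the stored history (\lem{cce-sampling}) --- is essentially the on-the-fly oracle synthesis you describe, so your instinct is right, but the $\tilde{O}(m^2\sqrt{n}/\varepsilon^{4.5})$ accounting should be made explicit along these lines rather than attributed to per-round classical arithmetic.
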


We measure the time complexity by the number of one and two-qubit gates in the quantum algorithm. The overhead in time complexity, in comparison to query complexity, arises from the gate complexity of the QRAM. If we adopt the convention established by previous quantum algorithms for zero-sum games \cite{van2019quantum,gao2024logarithmic,bouland2023quantum}, which assumes that QRAM access incurs a unit cost, then the time complexities presented in \thm{theorem-quantum-ce} and \thm{theorem-quantum-cce} align with the query complexities, differing only by a poly-logarithmic factor. In addition, we note that the output of \algo{quantum-cce} is a classical description of a $\tilde{O}(B^2/\varepsilon^2)$-sparse $\varepsilon$-coarse correlated equilibrium, hence we can prepare the state $\ket{\psi_o}$ in \prob{quantum-ce} in $\tilde{O}(mB^2/\varepsilon^2)$ time \cite{grover2002creating}.

On the other hand, we prove the following quantum lower bounds on computing CE and CCE:
\begin{theorem}[Restatement of \thm{lower}]
    \label{thm:lower-bounds}
     For an $m$-player normal-form game with $n$ actions for each player, let $B$ denote an upper bound on the loss function. Assume $0<\varepsilon<\min\{\frac{1}{3},\frac{2B}{3m}\}$, to compute an $\varepsilon$-(coarse) correlated equilibrium with success probability more than $\frac{2}{3}$, we need $\Omega(m\sqrt{n})$ quantum queries.
\end{theorem}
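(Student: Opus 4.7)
The plan is to reduce from the problem of finding all $m$ marked items in $m$ disjoint instances of Grover search (one instance per player, each of size $n$) and then invoke a strong direct product theorem to obtain the $\Omega(m\sqrt{n})$ bound. Given hidden marked indices $a_1^*,\ldots,a_m^*\in[n]$, I would consider the $m$-player normal-form game with action sets $\mathcal{A}_i=[n]$ and losses
\[
\mathcal{L}_i(a_1,\ldots,a_m)\;=\;B\cdot\mathbb{1}[a_i\neq a_i^*].
\]
Because player $i$'s loss depends only on her own action, the oracle $\orac_{\mathcal{L}}$ can be implemented with $O(1)$ calls to the standard quantum marking oracles for $a_1^*,\ldots,a_m^*$, so any quantum query to the game turns into $O(1)$ queries in the multi-instance Grover model.

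The next step is to show that every $\varepsilon$-(C)CE of this construction concentrates on the profile $(a_1^*,\ldots,a_m^*)$. For an $\varepsilon$-CCE $q$, applying the definition with the pure deviation $a_i'=a_i^*$ gives $B(1-q_i(a_i^*))\leq \varepsilon$, hence $q_i(a_i^*)\geq 1-\varepsilon/B$. Since $\mathcal{L}_i$ depends only on $a_i$, the swap-regret condition for $\varepsilon$-CE yields the same inequality by taking the constant swap $\phi_i\equiv a_i^*$. A union bound over $i\in[m]$ then gives
\[
\Pr_{a\sim q}\bigl[a=(a_1^*,\ldots,a_m^*)\bigr]\;\geq\;1-\frac{m\varepsilon}{B}\;>\;\frac{1}{3},
\]
where the last inequality uses the assumption $\varepsilon<2B/(3m)$. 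Measuring the first register of the output state $\ket{\psi_o}$ therefore recovers all $m$ marked indices with probability at least $(2/3)\cdot(1/3)=2/9$, which I would amplify to any desired constant by a constant number of independent repetitions.

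To finish, I would invoke a strong direct product theorem for Grover search (e.g., Klauck--Spalek--de~Wolf), which asserts that solving $m$ independent $n$-item searches with constant success probability requires $\Omega(m\sqrt{n})$ quantum queries. Composing this with the reduction above yields the claimed lower bound for computing either an $\varepsilon$-CE or an $\varepsilon$-CCE of an $m$-player game.

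The main obstacle is the direct product step: a naive hybrid or $m$-fold OR reduction would only produce $\Omega(\sqrt{mn})$, so the argument relies essentially on a direct product bound whose dependence on $m$ is linear rather than square-root. The game-theoretic side is short once the correct hard distribution is chosen, since the decoupled loss structure forces every (coarse) correlated equilibrium to place nearly all of its mass on the single profile $(a_1^*,\ldots,a_m^*)$ regardless of whether the weaker CCE or the stricter CE notion is used.
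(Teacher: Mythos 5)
Your proposal is correct and follows essentially the same route as the paper: the identical decoupled hard instance $\mathcal{L}_i(a)=B\cdot\mathbb{1}[a_i\neq a_i^*]$, the observation that any $\varepsilon$-(C)CE puts mass at least $1-\varepsilon/B$ on $a_i^*$ in each marginal so one joint sample solves all $m$ search instances with constant probability, and a strong direct product theorem composed with the $\Omega(\sqrt{n})$ Grover bound. The only cosmetic difference is the choice of SDPT (you cite Klauck--\v{S}palek--de~Wolf and amplify to constant success, whereas the paper applies the Lee--Roland theorem directly at the reduced success probability), which does not change the substance of the argument.
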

The scaling of our query complexity lower bounds with respect to the number of players $m$ and actions $n$ matches our algorithms' upper bounds up to a poly-logarithm factor, indicating the near-optimality of our quantum algorithms in $m$ and $n$.

\begin{table}[H]
    \caption{Complexity bounds for computing $\varepsilon$-CE.}
    \label{tab:summary}
    \centering
    \begin{tabular}{|c|c|c|c|c|}
        \hline
       Reference & Setting & Query complexity & Time complexity \\
        \hline
        \cite{peng2023fast,dagan2024external} & classical & $mn(\log(mn))^{O(1/\varepsilon)}$ & $mn(\log(mn))^{O(1/\varepsilon)}$ \\
        \hline
        {\color{red}this paper} & quantum & $m\sqrt{n}(\log(mn))^{O(1/\varepsilon)}$, $\Omega(m\sqrt{n})$ & $m^2\sqrt{n}(\log(mn))^{O(1/\varepsilon)}$ \\
        \hline 
    \end{tabular}
\end{table}
\begin{table}[H]
    \caption{Complexity bounds for computing $\varepsilon$-CCE.}
    \label{tab:summary2}
    \centering
    \begin{tabular}{|c|c|c|c|c|}
        \hline
        Reference & Setting & Query complexity & Time complexity \\
        \hline 
        \cite{grigoriadis1995sublinear}\tablefootnote{This algorithm is designed for computing $\varepsilon$-Nash equilibrium of a two-player zero-sum game, but we show in \cor{classical-cce} that it can be used to compute an $\varepsilon$-CCE of a multi-player normal-form game.} & classical & $\tilde{O}(mn/\varepsilon^{2})$ & $\tilde{O}(mn/\varepsilon^{2})$ \\
        \hline
        {\color{red}this paper} & quantum & $\tilde{O}(m\sqrt{n}/\varepsilon^{2.5})$, $\Omega(m\sqrt{n})$ & $\tilde{O}(m^2\sqrt{n}/\varepsilon^{4.5})$  \\ 
        \hline
    \end{tabular}
\end{table}

\paragraph{Techniques.}
Our algorithm for CE quantizes the multi-scale MWU algorithm \cite{peng2023fast}. Classical multi-scale MWU algorithm needs $\Omega(n)$ queries to compute the loss vector of one player in each round and then takes the exponential of the loss vector to update its strategy.  This $\Omega(n)$ query complexity can be improved in quantum algorithms by constructing an amplitude-encoding of the loss vector and then using the quantum Gibbs sampler to sample from the exponential of the loss vector. The standard approach to construct the amplitude-encoding is to store the frequency of history action samples in a QRAM and maintain a tree data structure \cite{van2019quantum,gao2024logarithmic,bouland2023quantum}. However, in an $m$-player normal-form game with $n$ actions for each player, the size of the joint action space is $n^m$, so the QRAM requires  $\Omega(n^m)$ gates to implement.  Furthermore, the multi-scale MWU algorithm runs $O(1/\varepsilon)$ instances of the MWU algorithm in parallel, thus standard amplitude-encoding schemes require $O(1/\varepsilon)$ QRAMs to store the frequency of history action samples in different time intervals for different MWU instances. 
To overcome these issues, we use a single, unified QRAM to store all history action samples rather than the frequency vector. We then demonstrate how the necessary amplitude-encoding for any MWU subroutine can be constructed from this single QRAM. Crucially, instead of treating QRAM access as a unit-cost oracle, we analyze its gate-level construction cost, showing that it requires only $m\log n(\log(mn))^{O(1/\varepsilon)}$ gates.

Our algorithm for CCE is built upon the quantum algorithm by \citet{bouland2023quantum}, which quantizes the classical approach of \citet{grigoriadis1995sublinear} for two-player zero-sum games. We extend their quantum framework to the $m$-player normal-form game setting, using the ``ghost iteration'' technique to prove that the algorithm converges to an $\varepsilon$-CCE in $\tilde{O}(1/\varepsilon^2)$ iterations. We adapt the amplitude-encoding schemes from our CE algorithm to avoid the exponential gate overhead in the QRAM construction.

For the lower bound, we reduce the direct product of $m$ instances of the unstructured search problem to the problem of computing an $\varepsilon$-CE ($\varepsilon$-CCE) of an $m$-player normal-form game. Then, we combine the lower bound on the unstructured search problem~\cite{bennett1997strengths} with the direct product theorem \cite{lee2013strong} to prove the lower bound on computing an $\varepsilon$-CE ($\varepsilon$-CCE) of an $m$-player normal-form game.
\paragraph{Open questions.}
Our results leave several natural open questions for future investigation:
\begin{itemize}
\item An open question is whether the $\varepsilon$ dependence of our CCE algorithm can be improved. While quantizing the optimistic MWU algorithm of \citet{daskalakis2021near} is a natural target, its analysis relies on high-order smoothness properties of the loss vectors. These properties are highly sensitive to the sampling noise introduced by a quantum Gibbs sampler, making a direct quantization challenging (see the discussion in Appendix~\ref{sec:daskalakis}). A more promising direction would be to quantize the Regularized Value Update (RVU) framework of \citet{syrgkanis2015fast}, which relies on more robust first-order properties. This could serve as a crucial first step towards quantizing recent, highly-efficient algorithms like Cautious MWU~\cite{soleymani2025cautious}, which build upon the RVU framework.
\item Beyond normal-form games, equilibria of Bayesian games and extensive-form games are also studied in game theory \cite{von2008extensive,fujii2023bayes}. \citet{peng2023fast,dagan2024external} showed that an $\varepsilon$-CE in extensive-form games can be computed efficiently. Can we design quantum algorithms to compute the equilibrium of Bayesian games and extensive-form games with quantum speedup?
\item Can we reduce the time complexity of computing $\varepsilon$-CE and $\varepsilon$-CCE to $\tilde{O}(m\sqrt{n})$ which aligns with our query complexity? The difficulty is that we need to sample strategies for all $m$ players in each round of the game, and each call to the quantum Gibbs sampler requires access to the QRAM, incurring an overhead of $O(m)$.
\end{itemize}


\section{Preliminaries}\label{sec:prelim}
\subsection{Game theory and no-regret learning}
\paragraph{Game theory} An $m$-player normal-form game can be described by a tuple $(\{ \mathcal{A}_i \}_{i=1}^m, \{ \mathcal{L}_i\}_{i=1}^m)$, where $\mathcal{A}_i$ with $|\mathcal{A}_i| = n$ is the action set of player $i$ and $\mathcal{L}_i$ is the loss function of player $i$. Without loss of generality, we let $\mathcal{A}_i = [n]$. Let $\mathcal{A} = \mathcal{A}_1 \times \cdots \times \mathcal{A}_m$ be the joint action set. The loss function of player $i$ is a function $\mathcal{L}_i\colon\mathcal{A} \to [0,B]$, representing the loss of player $i$; here $B$ is an upper bound on loss functions. For an action profile $a = (a_1, \ldots ,a_m)\in \mathcal{A}$, let $a_{-i}$ denote the profile after removing $a_i$. For any finite set $S$, we let $\Delta(S)$ denote the probability simplex over $S$. In each round of the game, player $i$ can choose an independent mixed strategy $x_i \in \Delta(\mathcal{A}_i)$. The collection of these strategies, $x = (x_1, \ldots, x_m)$, is called a mixed strategy profile and induces a product distribution over $\mathcal{A}$. A more general concept is a correlated strategy, which is any joint distribution $\mathcal{D} \in \Delta(\mathcal{A})$. For a mixed strategy profile $x$, we let $x_{-i}$ denote the profile after removing $x_i$.

We consider two types of equilibria in normal-form games: correlated equilibrium and coarse correlated equilibrium.
\begin{definition}[Correlated equilibrium]
    For an $m$-player normal-form game $(\{ \mathcal{A}_i \}_{i=1}^m, \{ \mathcal{L}_i\}_{i=1}^m)$, a distribution $\mathcal{D}\in \Delta(\mathcal{A})$ is called an \textit{$\varepsilon$-correlated equilibrium} if for any $i\in[m]$ and any function $\phi_i\colon \mathcal{A}_i \to \mathcal{A}_i$ \begin{align}
        \underset{a \sim \mathcal{D}}{\mathbb{E}}[\mathcal{L}_i(a_i, a_{-i})] \le \underset{a \sim \mathcal{D}}{\mathbb{E}}[\mathcal{L}_i(\phi_i(a_i) , a_{-i})]+\varepsilon.
    \end{align}
\end{definition}

\begin{definition}[Coarse correlated equilibrium]
    For an  $m$-player normal-form game $(\{ \mathcal{A}_i \}_{i=1}^m, \{ \mathcal{L}_i\}_{i=1}^m)$, a distribution $\mathcal{D}\in \Delta( \mathcal{A})$ is called an \textit{$\varepsilon$-coarse correlated equilibrium} if for any $i\in[m]$ and $a_i' \in \mathcal{A}_i$ \begin{align}
        \underset{a \sim \mathcal{D}}{\mathbb{E}}[\mathcal{L}_i(a_i, a_{-i})] \leq \underset{a \sim \mathcal{D}}{\mathbb{E}}[\mathcal{L}_i(a_i', a_{-i})]+\varepsilon.
    \end{align}
\end{definition}

\paragraph{Online learning}  In the adversarial online learning setting, a player plays against an adversary sequentially for $T$ rounds. In the $t$-th round, the player plays a distribution  $x^{(t)}$ over its action set $[n]$. Then, the adversary selects a loss vector $\ell^{(t)}\in [0,B]^{n}$ and the player suffers from a loss $\langle x^{(t)}, \ell^{(t)} \rangle$. The player observes the loss vector $\ell^{(t)}$ and updates its strategy based on the previous loss vectors to minimize its total regret in $T$ rounds. We consider two kinds of regret: the standard \emph{external regret} and the \emph{swap regret}. The external regret of player $i$ is defined as \begin{align}\textstyle
    \reg_{i,T} = \sum_{t=1}^T \langle x_i^{(t)}, \ell_i^{(t)} \rangle - \min_{x_i\in \Delta(\mathcal{A}_i)}\sum_{t=1}^T \langle x_i, \ell_i^{(t)} \rangle,
\end{align}
which measures the maximum reduction in loss that could be achieved by switching to a fixed action strategy. Let $\Phi_i$ denote the set of functions $\phi\colon [n] \to [n]$. The swap regret of player $i$ is defined as \begin{align}\textstyle
    \sreg_{i,T} = \sum_{t=1}^T \langle x_i^{(t)}, \ell_i^{(t)} \rangle - \min_{\phi \in \Phi_i} \sum_{t=1}^{T}\sum_{j=1}^{n} x_i^{(t)}(j)\cdot \ell_i^{(t )}(\phi(j)),
\end{align}
which measures the maximum reduction in loss that could be achieved by using a fixed swap function on its history strategies. An algorithm is called a \emph{no-regret} learning algorithm if the total regret is $o(T)$. The Multiplicative Weight Update (MWU) algorithm is a well-known no-external-regret learning algorithm. It updates the strategy by multiplying the previous strategy by the exponential of the negative sum of the loss vectors. This ensures that actions with lower cumulative loss are favored over time, achieving $O(\sqrt{T\log n})$ external regret. The detailed procedure is shown in Appendix~\ref{sec:appendix_algos} as Algorithm~\ref{algo:mwu}.
\begin{theorem}[Theorem 1.5 in \cite{hazan2016introduction}]
    \label{thm:regret-mwu}
    The external regret of the MWU algorithm (\algo{mwu}) with step size $\eta=\sqrt{\log n /T}/B$ is at most $2B\sqrt{T\log n}$.
\end{theorem}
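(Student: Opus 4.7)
The plan is to follow the standard potential–function analysis of multiplicative weights, adapted to losses bounded in $[0,B]$. Let $w_j^{(t)}$ denote the unnormalized weight of action $j$ at round $t$, so that $w_j^{(1)}=1$, $w_j^{(t+1)} = w_j^{(t)} \exp(-\eta\,\ell^{(t)}(j))$, and $x^{(t)}(j) = w_j^{(t)}/\Phi_t$ where $\Phi_t = \sum_{j=1}^n w_j^{(t)}$. First I would track how the total potential $\Phi_t$ evolves from round to round, and then sandwich $\Phi_{T+1}$ between a global upper bound in terms of the algorithm's cumulative loss and a pointwise lower bound in terms of the best fixed action.

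For the upper bound on $\Phi_{t+1}/\Phi_t$, I would use the elementary inequality $e^{-x}\le 1-x+x^2$ valid for $x\in[0,1]$, applied with $x=\eta\,\ell^{(t)}(j)$; this is legitimate because the step size $\eta=\sqrt{\log n/T}/B$ satisfies $\eta B\le 1$ whenever $T\ge \log n$ (and the other regime can be absorbed into the $2B\sqrt{T\log n}$ bound trivially). This yields
\begin{align}
\frac{\Phi_{t+1}}{\Phi_t} \;\le\; 1 - \eta\,\langle x^{(t)},\ell^{(t)}\rangle + \eta^2\,\langle x^{(t)},(\ell^{(t)})^2\rangle \;\le\; \exp\!\bigl(-\eta\,\langle x^{(t)},\ell^{(t)}\rangle + \eta^2 B^2\bigr),
\end{align}
using $1+y\le e^y$ and $\langle x^{(t)},(\ell^{(t)})^2\rangle\le B^2$. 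Telescoping over $t=1,\dots,T$ and using $\Phi_1=n$ gives $\Phi_{T+1}\le n\exp\bigl(-\eta\sum_{t=1}^T\langle x^{(t)},\ell^{(t)}\rangle + \eta^2 B^2 T\bigr)$.

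For the lower bound, I would fix an arbitrary comparator action $j^\star\in[n]$ and note $\Phi_{T+1}\ge w_{j^\star}^{(T+1)} = \exp\bigl(-\eta\sum_{t=1}^T \ell^{(t)}(j^\star)\bigr)$. Taking logarithms of both bounds, rearranging, and dividing by $\eta$ gives
\begin{align}
\sum_{t=1}^T \langle x^{(t)},\ell^{(t)}\rangle - \sum_{t=1}^T \ell^{(t)}(j^\star) \;\le\; \frac{\log n}{\eta} + \eta B^2 T.
\end{align}
Since the best fixed mixed strategy is attained at a pure action by linearity, taking the minimum over $j^\star$ bounds the external regret by $\log n/\eta + \eta B^2 T$. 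Plugging in $\eta=\sqrt{\log n/T}/B$ balances the two terms and yields $\reg_T \le 2B\sqrt{T\log n}$, as claimed.

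There is really no hard step here; the only mild subtlety is the validity of the quadratic Taylor bound, which requires $\eta B\le 1$. If $T<\log n$ then $2B\sqrt{T\log n}\ge 2BT$, and since losses lie in $[0,B]$ the regret is trivially at most $BT$, so the stated bound holds in all regimes without needing a case analysis in the main argument.
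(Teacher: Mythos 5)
Your proof is correct and is the standard potential-function analysis of multiplicative weights with losses in $[0,B]$; the paper does not prove this statement itself but cites it from Hazan's text, where essentially the same argument (bounding $\Phi_{T+1}$ above via $e^{-x}\le 1-x+x^2$ and below by the best fixed action's weight, then balancing $\log n/\eta + \eta B^2 T$ at $\eta=\sqrt{\log n/T}/B$) is used. Your handling of the edge case $\eta B>1$ via the trivial $BT\le 2B\sqrt{T\log n}$ bound when $T<\log n$ is also sound, so nothing is missing.
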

The multi-scale MWU algorithm, proposed by \citet{peng2023fast}, achieves $\polylog(n)$ swap regret by running multiple instances of the MWU algorithm in parallel at different time scales. Each instance aggregates losses over increasingly longer intervals before performing an update, and the final strategy is a uniform mixture of the strategies from each instance. The detailed procedure is shown in Appendix~\ref{sec:appendix_algos} as Algorithm~\ref{algo:ms-mwu}.
\begin{theorem}[Theorem 1.1 in \citep{peng2023fast}]
    \label{thm:regret-ms-mwu}
    For any $\varepsilon>0$, the multi-scale MWU algorithm (\algo{ms-mwu}) has at most $\varepsilon BT$ swap regret in $T=(16 \log (n) / \varepsilon^2)^{2 / \varepsilon}$ rounds.
\end{theorem}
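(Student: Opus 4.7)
The plan is to analyze the multi-scale MWU algorithm scale-by-scale, combining the external regret guarantee of \thm{regret-mwu} at each scale with a telescoping reduction from swap regret to external regret across scales. I assume the algorithm maintains $K := 2/\varepsilon$ MWU instances indexed by $k \in \{0,\ldots,K-1\}$, where instance $k$ updates every $T_k := T^{k/K}$ rounds using the averaged loss vector over each length-$T_k$ block, and the player's actual strategy at round $t$ is the uniform mixture of the $K$ per-scale strategies $x_0^{(t)},\ldots,x_{K-1}^{(t)}$.

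First I would apply \thm{regret-mwu} to each scale. Since instance $k$ runs MWU for $T/T_k$ aggregated steps on loss vectors in $[0,B]^n$ (the per-block averages), its external regret against any fixed action is at most $O(B\sqrt{(T/T_k)\log n})$ in the averaged scale. Multiplying by the block length $T_k$ to translate back to the original per-round losses, the scale-$k$ instance contributes cumulative external regret $O(B\sqrt{T\,T_k\log n})$ over the full horizon of $T$ rounds.

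Next I would use a telescoping argument across scales to pass from external to swap regret. Let $\Phi_k$ denote the set of swap functions $[n]\to[n]$ that are piecewise constant on every length-$T_k$ block, so that $\Phi_0 = \Phi$ (all swap functions) while $\Phi_K$ consists of essentially constant functions. The key step is the decomposition
\[\sreg_T \le \sum_{k=0}^{K-1} \bigl(\sup_{\phi\in\Phi_k}\mathrm{reg}(\phi) - \sup_{\phi\in\Phi_{k+1}}\mathrm{reg}(\phi)\bigr) + \sup_{\phi\in\Phi_K}\mathrm{reg}(\phi),\]
where each gap on the right is bounded by the external regret of the scale-$k$ MWU (passing from $\Phi_k$ to $\Phi_{k+1}$ amounts to choosing one action per length-$T_{k+1}$ block on top of a scale-$k$-piecewise constant function, which is precisely what instance $k$ optimizes). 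Summing the per-scale bounds yields $\sreg_T \le O(B\sqrt{\log n})\sum_{k=0}^{K-1}\sqrt{T\,T_k}$. The right-hand side is a geometric sum dominated by its largest term at $k=K-1$, giving $\sreg_T \le O\bigl(BT\sqrt{\log n / T^{1/K}}\bigr)$. Requiring this to be at most $\varepsilon BT$ gives $T^{1/K} = \Omega(\log n/\varepsilon^2)$, which with $K = 2/\varepsilon$ is satisfied by the stated $T = (16\log n/\varepsilon^2)^{2/\varepsilon}$.

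The main obstacle will be making the telescoping step precise: one must carefully couple the suprema over $\Phi_k$ and $\Phi_{k+1}$ so that each gap in the decomposition matches the external regret of the scale-$k$ MWU, rather than something weaker. This relies on the fact that $x_k^{(t)}$ is constant on each length-$T_k$ block, together with a control on cross-scale interactions (arising because the played strategy is a uniform mixture over all scales, not scale $K-1$ alone). The cross terms can be handled by exploiting the ``slow-moving'' property of MWU within a block under the scale-$k$ step size $\eta_k = \sqrt{\log n/(T/T_k)}/B$, ensuring that $x_k^{(t)}$ barely changes between consecutive aggregated steps so that its interaction with the finer scales contributes only lower-order error.
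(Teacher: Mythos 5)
This theorem is quoted from \citet{peng2023fast} (Theorem 1.1 there); the paper itself gives no proof, so I compare your attempt with the known argument.

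Your telescoping decomposes the \emph{comparator class} by time granularity, and this is where the proof breaks. With your definitions, $\Phi_K$ (constant on the single length-$T$ block) is exactly the class of time-invariant swap functions, so $\sup_{\phi\in\Phi_K}\mathrm{reg}(\phi)$ \emph{is} the swap regret, and your displayed inequality reads $\sreg_T\le\sum_k(\cdots)+\sreg_T$, which is vacuous. Read the other way, the right-hand side telescopes exactly to $\sup_{\phi\in\Phi_0}\mathrm{reg}(\phi)$, the regret against comparators that may re-choose the swap on every length-$1$ block; bounding each gap by a per-scale external regret summing to $\varepsilon BT$ would therefore prove that this quantity is $o(T)$, which is false for adversarial losses. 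Concretely, with $n=2$ and losses alternating between $(0,B)$ and $(B,0)$, the per-round-re-optimized comparator incurs loss $0$ while any algorithm incurs about $BT/2$, so $\sum_k\bigl(\sup_{\Phi_k}\mathrm{reg}-\sup_{\Phi_{k+1}}\mathrm{reg}\bigr)=\Theta(BT)$; hence the claimed per-gap bounds cannot all hold. The underlying issue is that the gap $\sup_{\Phi_k}\mathrm{reg}-\sup_{\Phi_{k+1}}\mathrm{reg}$ measures how much a comparator gains by re-optimizing the swap on finer blocks of the weighted sequences $p_t(j)\ell_t(\cdot)$; this is a property of the loss sequence and the played weights, and is not controlled by the external-regret guarantee of instance $k$, which only compares that instance's \emph{own} realized loss to the best fixed action on its aggregated losses.

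The actual proof decomposes the \emph{play}, not the comparator. By linearity, $\sum_t\langle p_t,\ell_t\rangle=\tfrac{1}{2^K}\sum_k\sum_t\langle q_{k,t},\ell_t\rangle$. For each scale $k$, \thm{regret-mwu} applied within each length-$H^k$ epoch ($H$ aggregated steps with losses in $[0,BH^{k-1}]$, hence total external regret $2BT\sqrt{\log n/H}$ over the $T/H^k$ epochs), combined with the fact that $q_{k+1,\cdot}$ is constant on every scale-$k$ epoch by construction, yields $\sum_t\langle q_{k,t},\ell_t\rangle\le\sum_t\sum_j q_{k+1,t}(j)\,\ell_t(\phi(j))+2BT\sqrt{\log n/H}$ for every swap $\phi$; the top scale closes the chain via $\min_j\sum_t\ell_t(j)\le\sum_t\sum_j q_{1,t}(j)\,\ell_t(\phi(j))$, and averaging over the $2^K$ scales gives swap regret $O\bigl(BT\sqrt{\log n/H}\bigr)\le\varepsilon BT$ with the stated parameters. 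Note that this argument needs the restarts of MWU$_k$ every $H^k$ rounds (which your description of the algorithm omits) and the exact block alignment giving constancy of the next-coarser scale on each epoch; it needs no ``slow-moving MWU'' stability property at all. Indeed, the finer scales restart many times inside a coarse block and move a lot, so your proposed step-size-based control of cross-scale terms would not go through either. Your per-scale regret accounting is quantitatively of the right order, but the reduction from swap to external regret must be rebuilt along these lines.
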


\paragraph{No-regret learning in normal-form games} It is known that if all players play according to a no-regret learning algorithm with external (or swap) regret at most $\varepsilon(T)$ in $T$ rounds, then the uniform mixture of their strategies in all $T$ rounds is an $O(\varepsilon(T)/T)$-approximation of a coarse correlated equilibrium (or correlated equilibrium) of the game (see Section 7 of \citep{cesa2006prediction}).

For the external regret, when loss vectors are adversarial, the celebrated MWU algorithm guarantees $O(\sqrt{T\log n})$ regret, which is optimal (see Section 3.7 of \citep{cesa2006prediction}). However, in the setting of repeated game playing, algorithms with recency bias can do better due to the smoothness of the loss vectors. \citet{syrgkanis2015fast} showed that if all $m$ players run an algorithm from a specific class of algorithms with recency bias, then each player experiences $O(\log n \cdot \sqrt{m} \cdot T^{1 / 4})$ external regret. \citet{chen2020hedging} improved this bound to $O(\log ^{5 / 6} n \cdot T^{1 / 6})$ in two-player normal-form games when both players run the optimistic MWU algorithm. \citet{daskalakis2021near} then dramatically improved the $T$ dependency by showing that if all players run the optimistic MWU algorithm in an $m$-player normal-form game, each player experiences  $O(\log n\cdot m \cdot \log^4 T)$ external regret, so the uniform mixture of their strategies is a $\tilde{O}(m\log n/T)$-coarse correlated equilibrium after $T$ rounds. The dependence on $T$ is further improved by subsequent algorithms like Clairvoyant, and Cautious MWU~\cite{piliouras2022beyond,soleymani2025faster,soleymani2025cautious}.

It is known that an external-regret minimization algorithm can be converted to a swap-regret minimization algorithm \cite{blum2005external,stoltz2005internal}. \citet{chen2020hedging}, \citet{anagnostides2022near}, and \citet{anagnostides2022uncoupled} used this reduction to design algorithms with $O(T^{1/4})$, $O(\log^4 T)$, and $O(\log T)$ swap-regret respectively in an $m$-player normal-form game if other players run the same algorithm. However, this reduction incurs an $\Omega(n)$ overhead. \citet{dagan2024external} improved this reduction and proposed an algorithm that has at most $\varepsilon T$ swap regret in $T=(\log n/ \varepsilon^2)^{O(1/\varepsilon)}$ rounds in the standard adversarial online learning setting, which aligns with the upper bound in \citet{peng2023fast}.


\subsection{Quantum computing}
 The fundamental unit of information in quantum computing is the quantum bit or qubit. Unlike classical bits that are either $0$ or $1$, a qubit can exist in a superposition of states, represented as a unit vector in a two-dimensional complex Hilbert space: $\ket{\psi} =\alpha\ket{0}+\beta\ket{1}$, where $\{\ket{0},\ket{1}\}$ forms a (orthonormal) computational basis, and the amplitudes $\alpha,\beta\in\mathbb{C}$ satisfy $|{\alpha}|^2 + |\beta|^2 = 1$. An $n$-qubit quantum system resides in the tensor product space of $n$ Hilbert space  $\mathbb{C}^2$,  which can be written as $(\mathbb{C}^2)^{\otimes n}= \mathbb{C}^{2^n}$ with computational basis states $\{\ket{i}\}_{i=0}^{2^n-1}$, and a quantum state of $n$ qubits can therefore represent a superposition of all $2^n$ possible states: $\ket{\psi} = \sum_{i=0}^{2^n-1} \alpha_i \ket{i}$, where $\sum_i|\alpha_i|^2 = 1$. Information can be obtained by quantum measurement on a computational basis, where measuring state $\ket{\psi}= \sum_{i=0}^{2^n-1} \alpha_i \ket{i}$ on basis $\{\ket{i}\} $ yields outcome $i$ with probability $p(i) = |\alpha_i|^2$ for every $i\in [2^n]$.
 Quantum states evolve through unitary transformations: $\ket{\psi}\to U\ket{\psi}$, where $U\in \mathbb{C}^{2^n\times 2^n}$ is a unitary satisfying $UU^\dagger = U^\dagger U = I_{2^n}$, where $U^\dagger$ is the Hermitian conjugate of operator $U$.
 For two quantum states $\ket{\psi} = \sum_{i=0}^{2^n-1} \alpha_i \ket{i}$ and $\ket{\phi} = \sum_{i=0}^{2^n-1} \beta_i \ket{i}$, their inner product is defined by $\<{i}|{\psi}\> = \sum_{i} \alpha_i^*\beta_i$. The tensor product of two quantum states $\ket{\psi} \in \mathbb{C}^{d_1}$ and $\ket{\phi} \in \mathbb{C}^{d_2} $ is denoted as $\ket{\psi}\ket{\phi} = \ket{\psi}\otimes\ket{\phi}\in \mathbb{C}^{d_1d_2}$.

 In the quantum query model, an algorithm accesses the given function $f$ via a quantum oracle. This oracle, denoted $\mathcal{O}_f$, is defined as a unitary operator that performs the following reversible computation on the computational basis states: $\mathcal{O}_f\ket{x}\ket{0} = \ket{x} \ket{f(x)}$. A key advantage of this model is that the oracle can be queried on a superposition of inputs.

The term QRAM can refer to several distinct models in quantum computing. In this work, we use ``QRAM'' to refer specifically to a circuit providing quantum access to classical data, a model more precisely known as Quantum Read-Only Memory (QROM). We retain the more common term QRAM and the notation $U_{\mathrm{QRAM}}$ throughout this paper for consistency with related literature. Formally, for a memory containing $N$ classical bitstrings $\{D_i\}_{i=0}^{N-1}$, this unitary performs the mapping $U_{\mathrm{QRAM}}\ket{i}\ket{0} \mapsto \ket{i}\ket{D_i}$. Such circuits can be constructed from elementary gates with a complexity linear in $N$ and the bit-length of the entries \cite{babbush2018encoding}. 

\subsection{Quantum algorithms for games}

Quantum algorithms for finding Nash equilibria in zero-sum games have been well-studied \citep{van2019quantum,bouland2023quantum,gao2024logarithmic}, and achieve a quadratic speedup in $n$. Most of the quantum algorithms for games quantize variants of the MWU algorithm. Note that the strategies output by the MWU algorithm can be written as an exponential of the accumulated loss vectors.
For a classical MWU algorithm, computing the exponential of a vector $u\in \mathbb{R}^n$ requires $\Omega(n)$ time. To reduce this overhead, quantum algorithms use the quantum Gibbs sampler.
Suppose that the quantum algorithm can access a unitary operator $V$ which encodes the vector $u$:
\begin{definition}[Amplitude encoding]
    A unitary operator $V$ is said to be a $\beta$-amplitude-encoding of a vector $u \in \mathbb{R}^n$ with non-negative entries, if
    \begin{align}\textstyle
        \bra{0}_C V \ket{0}_C\ket{i}_A\ket{0}_B=\sqrt{\frac{u_i}{\beta}}\ket{i}_A\ket{\psi_i}_B
    \end{align}
    for all $i \in[n]$. Here, $\ket{\psi_i}_B$ is a normalized garbage state in an ancilla register. 
\end{definition}
Then the quantum Gibbs sampler can prepare the state $\sum_{i=1}^n \sqrt{q_i}\ket{i}\ket{\psi_i}$ where the distribution $q=(q_1, \ldots, q_n)$ is close to $\exp(-u)/\|\exp(-u)\|_1$, and measuring the first register gives a sample approximately from the distribution $\exp(-u)/\|\exp(-u)\|_1$.
\begin{theorem}[Quantum Gibbs sampler \cite{gao2024logarithmic}]
    \label{thm:fast-gibbs-sampling}
    Given access to a unitary $V$ which is a $\beta$-amplitude-encoding of a vector $u\in \mathbb{R}^n$, there is a unitary oracle $\mathcal{O}_{u}^{\mathrm{Gibbs}}(\delta)$ such that \begin{align}\textstyle
        \mathcal{O}_{u}^{\mathrm{Gibbs}}(\delta)\colon \ket{0}\ket{0} \mapsto\sum_{i=1}^{n} \sqrt{q_i} \ket{i}\ket{\psi_i}
    \end{align} where $q$ is $\delta$-close to $\exp(-u)/\|\exp(-u)\|_1$ in total variation distance. $\mathcal{O}_{u}^{\mathrm{Gibbs}}(\delta)$ can be implemented using $\tilde{O}(\beta \sqrt{n})$ queries to $V$ and $\tilde{O}(\beta \sqrt{n})$ time.
\end{theorem}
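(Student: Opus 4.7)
The plan is to construct the Gibbs sampler by composing three standard quantum primitives: a block encoding derived from the amplitude encoding, quantum singular value transformation (QSVT) with a polynomial approximation of the exponential function, and fixed-point amplitude amplification. The overall idea is to build a block encoding of the ``square-root Gibbs'' operator $\mathrm{diag}(\exp(-u_i/2))$, apply it to a uniform superposition, and amplify the good branch to obtain the target state.

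First, I would convert the $\beta$-amplitude-encoding $V$ into a block encoding of the diagonal operator $D=\mathrm{diag}(u_i/\beta)$, which satisfies $\|D\|\le 1$. This uses a standard gadget combining $V$ and $V^{\dagger}$ with a reflection on the flag register, costing $O(1)$ calls to $V$. Next, I would apply QSVT with a real polynomial $p(x)$ that uniformly approximates $\exp(-\beta x/2)$ on $[0,1]$ to accuracy $\delta/n$. By the standard Chebyshev truncation bound for the exponential, such a polynomial exists with degree $d=\tilde{O}(\beta)$, and the QSVT circuit implements it with $d$ queries to the block encoding. The result is a block encoding, flagged by a single ancilla being $\ket{0}$, of $\mathrm{diag}(\exp(-u_i/2))$ up to accuracy $\delta/n$.

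Applied to the uniform superposition $\frac{1}{\sqrt{n}}\sum_i \ket{i}_A\ket{0}_B\ket{0}_C$, the QSVT circuit produces a state whose good component is proportional to $\sum_i \exp(-u_i/2)\ket{i}_A\ket{\psi_i}_B$, which upon normalization is the target state $\sum_i\sqrt{q_i}\ket{i}\ket{\psi_i}$ with $q_i\propto\exp(-u_i)$. The probability of the good flag equals $p_{\mathrm{good}}=\|\exp(-u)\|_1/n$, so fixed-point amplitude amplification achieves total-variation error $\delta$ using $O(\sqrt{n/\|\exp(-u)\|_1})$ rounds. To ensure this is $\tilde{O}(\sqrt{n})$ for arbitrary $u$, I would first run quantum minimum finding on $u$ in $\tilde{O}(\sqrt{n})$ queries to locate an approximate shift $s\approx \min_i u_i$, then absorb $s$ into the QSVT polynomial by using $\tilde{p}(x)=\exp((s-\beta x)/2)$ in place of $p$ (which still satisfies $\|\tilde{p}\|_\infty\le 1$ on the relevant range). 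After shifting, $\|\exp(-(u-s))\|_1\ge 1$, so the amplification uses $\tilde{O}(\sqrt{n})$ rounds. Multiplying the $\tilde{O}(\beta)$ per-round QSVT cost by $\tilde{O}(\sqrt{n})$ amplification rounds yields total query complexity $\tilde{O}(\beta\sqrt{n})$, with gate complexity on the same order.

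The main obstacle is the sensitivity of the amplification to the normalization constant $\|\exp(-u)\|_1$: without the min-shift preprocessing, an adversarial $u$ with all $u_i$ close to $\beta$ would force the amplification factor to scale as $\exp(\Omega(\beta))$ rather than $\sqrt{n}$, destroying the target complexity. A secondary concern is the careful tracking of error propagation through the three layers: the Chebyshev polynomial approximation error, the QSVT compilation error, and the fixed-point amplification error each contribute to the final TV distance and must be controlled to $\mathrm{poly}(\delta/n)$. Each layer introduces only polylogarithmic overhead in $1/\delta$, which the $\tilde{O}(\cdot)$ notation absorbs, so the announced bound $\tilde{O}(\beta\sqrt{n})$ is preserved.
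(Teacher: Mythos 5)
This theorem is not proved in the paper at all---it is imported directly from \cite{gao2024logarithmic}---and your plan reconstructs essentially the same argument used in that line of work: convert the amplitude encoding into a block-encoding of $\mathrm{diag}(u_i/\beta)$, apply a polynomial approximation of the exponential via QSVT, shift by an approximate minimum obtained through generalized quantum minimum finding so that the acceptance probability is $\Omega(1/n)$, and then amplitude-amplify, giving $\tilde{O}(\beta\sqrt{n})$ queries. The only points needing routine care are that the QSVT polynomial must be bounded by $1$ on all of $[-1,1]$ (so one should approximate the capped function $\min\{1,e^{(s-\beta x)/2}\}$ rather than rely on boundedness ``on the relevant range''), parity of the polynomial, and the fact that minimum finding from amplitude access alone requires amplitude estimation for the comparisons, which still fits within the $\tilde{O}(\beta\sqrt{n})$ budget.
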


In many game-solving algorithms that use Gibbs sampling, the underlying vector $u$ changes slowly, often receiving only sparse updates in each round. Based on this property, \citet{bouland2023quantum} proposed a dynamic Gibbs sampler, an oracle $\mathcal{O}^{\mathrm{dynamic\shortminus Gibbs}}_{u}$ for repeatedly sampling from a distribution that is $\delta$-close to the changing Gibbs distribution $\exp(u)/\|\exp(u)\|_1$.
\begin{problem}[Sampling maintenance for two-player game, Problem 1 in \cite{bouland2023quantum}]
\label{prob:two-player-gibbs-sampling}
    Given $\eta >0, 0<\delta<1$, and access to a quantum oracle for $A\in \mathbb{R}^{n_1\times n_2}$. Now consider a sequence of size $T$, where each item includes an ``Update'' operation to a dynamic vector $x\in \mathbb{R}^{n_2}_{\geq 0}$, each in the form of $x_i\leftarrow x_i+\eta$ for some $i\in [n_2]$. The goal is to maintain a $\delta$-approximate Gibbs oracle $\mathcal{O}^{\mathrm{dynamic\shortminus Gibbs}}_{Ax}$ during the ``Update'' operations.
    Let $\mathcal{T}_{\mathrm{update}}$ denote queries per operation we need, and let $\mathcal{T}_{\mathrm{samp}}$ denote the worst-case time needed for $\mathcal{O}^{\mathrm{dynamic\shortminus Gibbs}}_{Ax}$.
\end{problem}
\citet{bouland2023quantum} provided an efficient solution to \prob{two-player-gibbs-sampling} using a special data structure to store partial information of the Gibbs distribution and maintaining its effectiveness across many rounds to reduce the amortized complexity of each sampling. 


\section{Quantum algorithm for computing correlated equilibria}
In this section, we present the quantum algorithm (\algo{quantum-ce}) for computing an $\varepsilon$-correlated equilibrium ($\varepsilon$-CE) in a normal-form game. The high-level ideas of the algorithm are presented below. The implementation details, as well as the formal proof of the algorithm’s correctness and complexity analysis, are provided in Appendix~\ref{sec:ce-implementation} and Appendix~\ref{appendix:proof-quantum-ce}.

Our quantum algorithm (\algo{quantum-ce}) for computing an $\varepsilon$-CE of a normal form game improves on the protocol in \citet{peng2023fast}. 
The algorithm simulates $m$ players playing the game repeatedly and using the multi-scale MWU (\algo{ms-mwu}) algorithm to update their strategies. At the $t$-th round, to estimate the loss vectors, we use quantum Gibbs sampler to sample from the joint distribution of all players' strategies at this round. We sample $S$ action profiles $a^{(t, 1)}, \ldots, a^{(t, S)} \in \mathcal{A}$ and let the loss vector of player $i$ at the $t$-th round be \begin{align}
\textstyle
    \label{eq:loss-vector}
    \ell_{i, t}(a_i)\coloneqq\frac{1}{S} \sum_{s=1}^S \mathcal{L}_i(a_i ; a_{-i}^{(t,s)}) \quad \forall a_i \in \mathcal{A}_i .
\end{align} 
Let $K=\lceil\log _2(1 / \varepsilon)+1\rceil$, $H=\lceil4 \log (n) 2^{2 K}\rceil$ be the internal parameters. According to the update rule of the multi-scale MWU, the strategy $p_{i,t}$ of player $i$ is determined by its accumulated loss vectors:
\begin{align}\textstyle
    p_{i,t}\coloneqq \frac{1}{2^K}\sum_{k\in[2^K]} q_{i,k,t}, 
\end{align} 
where \begin{align}\textstyle
    q_{i,k,t}(a_i)\propto \exp\bigl(-\sqrt{\log n /H}\sum_{t=(r_{k,t}-1) H^k+1}^{(r_{k,t}-1) H^k+h_{k,t} H^{k-1}}\ell_{i,t}(a_i)\bigr) \quad \forall a_i \in \mathcal{A}_i,
\end{align}
$r_{k,t}$ and $h_{k,t}$ correspond to the parameters $r$ and $h$ in subroutine MWU$_k$ in the $t$-th round of \algo{ms-mwu}, and they can be computed from $t$ and $k$. The complexity bottleneck of the classical protocol in \cite{peng2023fast} lies in computing the $n$-dimensional vector $q_{i,k,t}$, which requires $\Omega(n)$ queries.
To achieve sublinear quantum query complexity in $n$, we store the historical samples in a quantum random access memory (QRAM) and use a quantum Gibbs sampler to approximately sample from the distribution $p_{i,t}$. 

\begin{algorithm}[h]
    \caption{Sample-based multi-scale MWU for CE}\label{algo:quantum-ce}
    \begin{algorithmic}[1]
        \State{\textbf{Input parameters} $m$ (number of players), $n$ (number of actions), $\varepsilon$ (error parameter), $B$ (bound on loss functions), $\alpha$ (failure probability)}
        \State{\textbf{Internal parameters} $ K\coloneqq\lceil\log _2(3B / \varepsilon)+1\rceil$,  $H\coloneqq \lceil 4 \log (n) 2^{2 K}\rceil$, $T\coloneqq  H^{2^K}$, $S\coloneqq \Bigl\lceil\frac{18B^2}{ \varepsilon^2}\log\Bigl(\frac{2mnT}{\alpha}\Bigr)\Bigr\rceil$, $\delta \coloneqq \varepsilon/6B$}
        \State{\textbf{Output} quantum state $\ket{\psi_o}$}
        \For{$t= 1,\dots,T$}
            \State \multiline{Obtain a unitary $V_{t}$ such that for any $i\in[m]$ and $k\in[2^K]$,  $\bra{k}\bra{i}V_{t}\ket{k}\ket{i}$ is a $(Bh_{k,t}H^{k-1})$-amplitude-encoding of the vector \begin{align}\textstyle
                \bar{\ell}_{k,t}\coloneqq \left(\sum_{\tau=(r_{k,t}-1) H^k+1}^{(r_{k,t}-1) H^k+h_{k,t} H^{k-1}}\frac{1}{S} \sum_{s=1}^S \mathcal{L}_i(a_i ; a_{-i}^{(\tau,s)}) \right)_{a_i\in \mathcal{A}_i}
            \end{align} 
            such that
                \begin{align}\textstyle
                    r_{k,t} = \Bigl\lceil \frac{t}{H^{k}} \Bigr\rceil, \quad h_{k,t} = \Bigl\lfloor \frac{t-(r_{k,t}-1)H^{k}}{H^{k-1}} \Bigr\rfloor.
                \end{align}}
            \State \multiline{For any $i\in[m]$, independently obtain $S$ samples $a^{(t,1)}_i, \ldots, a^{(t,S)}_i$ from the Gibbs sampling oracle $\orac_{\sqrt{\log n/H}\bar{\ell}_{k,t}}^{\mathrm{Gibbs}}(\delta)$ with uniformly random $k\in[2^K]$.} 
            \State Store the samples $a^{(t,s)}=(a^{(t,s)}_i)_{i\in[m]}$ for $s\in[S]$ in the QRAM.
        \EndFor 
        \State Prepare the uniform superposition of $t\in[T]$ and $k\in [2^K]$:
        \begin{align}
        \frac{1}{\sqrt{T2^K}}\sum_{t=1}^T\sum_{k=1}^{2^K}\ket{t}\ket{k}\bigotimes_{i\in[m]}\ket{0}_{A_i}\ket{0}_{B_i}.
        \end{align}
        Apply $\orac^{\mathrm{Gibbs}}_{\sqrt{\log n/H}\bar{\ell}_{k,t}}(\delta)$ to register $A_i$ and $B_i$ for all $i\in[m]$. Denote $\ket{\psi_o}$ as the resulting state.
        \State \Return the state $\ket{\psi_o}$.
    \end{algorithmic}
\end{algorithm}


\section{Quantum algorithm for computing coarse correlated equilibria}
\label{sec:cce}

In this section, we present the quantum algorithm (\algo{quantum-cce}) for computing an $\varepsilon$-coarse correlated equilibrium ($\varepsilon$-CCE) in a normal-form game. 
The high-level ideas of the algorithm are outlined below. The algorithmic details and the formal proof of correctness and complexity are provided in Appendix~\ref{sec:implemente-cce} and Appendix~\ref{appendix:proof-quantum-cce}.

Our quantum algorithm (\algo{quantum-cce}) for computing an $\varepsilon$-coarse correlated equilibrium of a normal-form game improves on the classic algorithm in \citet{grigoriadis1995sublinear} using approximate quantum Gibbs sampling instead of exact computation. The main technique we use is stochastic mirror descent, into which we incorporate an approximate Gibbs sampling. In each round of the algorithm, we perform a Gibbs sampling on the current weight of each player, using the sampling result to minimize the first-order approximation of loss function with the added KL divergence term at current strategy for each player. At a high level, for the strategy $u_i^{(t)}$ obtained in each round, our update method satisfies 
\begin{align}\textstyle
u_i^{(t+1)}\approx\arg\min_{u_i}\left\{{\<\mathcal{L}_i(j,u_{-i}^{(t)}), u_i\>}+ \sum_{j\in [n]} [u_i]_j \log \frac{[u_i]_j}{[u_i^{(t)}]_j}\right\}.
\end{align}

In the two-player game setting considered in \citet{bouland2023quantum}, a sampler tree data structure is employed, where for each player, an $n$-dimensional vector is maintained to record the opponent’s strategies over previous rounds. Extending this to an $m$-player game presents a significant challenge, since the number of opponent strategies is on the order of $n^{m-1}$. To enable an efficient dynamic Gibbs sampler, we improve upon this approach by leveraging QRAM to directly store the strategies from each round. This allows us to achieve the same functionality as the sampler tree with identical query complexity but improved time complexity. See Appendix~\ref{sec:implemente-cce} for our implementation.
 
\begin{algorithm}
    \caption{Sample-based MWU for CCE}\label{algo:quantum-cce}
    \begin{algorithmic}[1]
        \State{\textbf{Input parameters} $m$ (number of players), $n$ (number of actions), $\varepsilon$ (error parameter), $\alpha$ (failure probability)}
        \State{\textbf{Internal parameters} $T \coloneqq \left\lceil\max\left\{\frac{64B^2\log n}{\varepsilon^{2}},\frac{512 B^2\log (4/\alpha )}{\varepsilon^2}\right\}\right\rceil$, $\eta \coloneqq \sqrt{\log n /T}/B$, $\delta \coloneqq \frac{\varepsilon}{16B(n-1)}$}
        \State{\textbf{Output} $(\hat{x}_i)_{i\in[m]}$}
        \State Initialize $\hat{x}_i \leftarrow \mathbf{0}_{n}$.
        \For{$t= 0,\dots,T-1$}
           \State Independently sample $a^{(t)}_i$ from $\mathcal{O}_{-\eta \cdot \sum_{k=0}^{t-1}\mathcal{L}(j, a^{(k)}_{-i})}^{\mathrm{dynamic\shortminus Gibbs}}(\delta)$ for $i\in[m]$ and set $a^{(t)}=(a^{(t)}_1,\ldots,a^{(t)}_m)$.
           \State Store the sample $a^{(t)}$ in the QRAM.
           \State Update $\hat{x}_i = \hat{x}_i+e_{a^{(t)}_i}/T$ for $i\in[m]$.
        \EndFor 
        \State \Return $(\hat{x}_i)_{i\in[m]}$.
    \end{algorithmic}    
\end{algorithm}
Note that if using exact oracles of Gibbs sampling, the main skeleton of the algorithm is a natural extension of \citet{grigoriadis1995sublinear} from two-player games to multi-player games.
\begin{corollary}
    \label{cor:classical-cce}
    There exists a classical algorithm that computes an $\varepsilon$-coarse correlated equilibrium with high probability using $\tilde{O}(mn/\varepsilon^2)$ classical queries to $\mathcal{L}$.
\end{corollary}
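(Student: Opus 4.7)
The plan is to instantiate \algo{quantum-cce} with exact classical evaluations in place of quantum Gibbs sampling: each player $i$ maintains MWU weights $w_i^{(t)}(j) \propto \exp(-\eta \sum_{k=0}^{t-1}\mathcal{L}_i(j, a_{-i}^{(k)}))$ with $\eta = \sqrt{\log n / T}/B$, explicitly normalises them to obtain the distribution $x_i^{(t)}$, and draws $a_i^{(t)} \sim x_i^{(t)}$. After $T = \Theta(B^2 \log(mn/\alpha)/\varepsilon^2)$ rounds the algorithm would output the empirical joint distribution $\hat{\mathcal{D}} = \frac{1}{T}\sum_{t=0}^{T-1} \delta_{a^{(t)}}$; I would show that $\hat{\mathcal{D}}$ is an $\varepsilon$-CCE with probability at least $1-\alpha$.

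First, I would invoke \thm{regret-mwu} for each player separately, treating the realised sequence $\ell_i^{(t)}(j) \coloneqq \mathcal{L}_i(j, a_{-i}^{(t)}) \in [0,B]$ as the adversarial loss vector. Path-wise, for every fixed deviation $a_i^\star \in \mathcal{A}_i$,
$$\sum_{t=0}^{T-1} \langle x_i^{(t)}, \ell_i^{(t)}\rangle - \sum_{t=0}^{T-1} \ell_i^{(t)}(a_i^\star) \le 2B\sqrt{T\log n},$$
which is at most $\varepsilon T /2$ for our choice of $T$.

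Second, I would bridge the expected and realised losses by a martingale concentration argument. Conditional on the $\sigma$-algebra $\mathcal{G}_t$ generated by all samples up to round $t-1$ together with $a_{-i}^{(t)}$, the quantities $x_i^{(t)}$ and $\ell_i^{(t)}$ are deterministic and $\mathbb{E}[\ell_i^{(t)}(a_i^{(t)}) \mid \mathcal{G}_t] = \langle x_i^{(t)}, \ell_i^{(t)}\rangle$, so
$$M_i^{(t)} \coloneqq \ell_i^{(t)}(a_i^{(t)}) - \langle x_i^{(t)}, \ell_i^{(t)}\rangle$$
is a bounded martingale difference with $|M_i^{(t)}| \le B$. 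Azuma's inequality then gives $\bigl|\sum_t M_i^{(t)}\bigr| \le O(B\sqrt{T\log(mn/\alpha)})$ with probability at least $1 - \alpha/(mn)$; a union bound over the $mn$ pairs $(i, a_i^\star)$, combined with the MWU bound above, yields simultaneously for every $i\in[m]$ and $a_i^\star\in\mathcal{A}_i$
$$\underset{a \sim \hat{\mathcal{D}}}{\mathbb{E}}[\mathcal{L}_i(a_i, a_{-i})] - \underset{a \sim \hat{\mathcal{D}}}{\mathbb{E}}[\mathcal{L}_i(a_i^\star, a_{-i})] \le \varepsilon,$$
so $\hat{\mathcal{D}}$ is an $\varepsilon$-CCE.

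Finally, the query budget is straightforward: in each round $t$, constructing $\ell_i^{(t)}$ requires $n$ queries per player and thus $mn$ queries per round in total, while the MWU weight updates themselves need no additional queries. Over $T = \tilde{O}(1/\varepsilon^2)$ rounds this totals $\tilde{O}(mn/\varepsilon^2)$ queries. The only delicate step is the martingale concentration, because the per-round loss vectors are correlated across rounds through the shared sampling history; this is resolved by conditioning on the enlarged filtration $\mathcal{G}_t$ as above, so that $\ell_i^{(t)}$ becomes deterministic and only the player's own fresh sample $a_i^{(t)}$ contributes randomness. This is a direct $m$-player lift of the two-player analysis of \cite{grigoriadis1995sublinear}, and the remaining bookkeeping (absorbing the $\log(mn/\alpha)$ factor into $T$ and verifying the independence of the per-player samples given the shared history) is routine.
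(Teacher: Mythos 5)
Your proof is correct, but it takes a genuinely different (and more direct) route than the paper. The paper obtains this corollary as a byproduct of its quantum analysis: it sets the Gibbs-sampling error $\delta$ to zero and reuses the correctness proof of \thm{quantum-cce}, which first shows that the time average $\bar{u}$ of the exact exponential-weights product distributions is an approximate CCE --- a step that needs the ``ghost iteration'' device of \cite{bouland2023quantum} because the comparator achieving the maximum interacts with the expectation over the sampled opponent actions --- and then transfers the guarantee to the empirical output via two further Azuma arguments. You instead take the empirical joint distribution of the realised action profiles as the output, for which the CCE gap of player $i$ decomposes exactly into (i) the path-wise MWU regret against the realised loss vectors $\ell_i^{(t)}=\mathcal{L}_i(\cdot,a_{-i}^{(t)})$, which by \thm{regret-mwu} is a deterministic bound valid simultaneously for all comparators even against adaptively generated losses, and (ii) a single martingale term $\sum_t M_i^{(t)}$ bridging player $i$'s mixed and realised own-action losses, handled by Azuma; your enlarged filtration $\mathcal{G}_t$ is legitimate precisely because $a_i^{(t)}$ is sampled independently of $a_{-i}^{(t)}$ given the shared history, which you note. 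This avoids the ghost iteration entirely and is the standard ``no-regret dynamics yield CCE'' argument, giving the same $\tilde{O}(mn/\varepsilon^2)$ query count ($mn$ queries per round to build all loss vectors over $T=\tilde{O}(B^2/\varepsilon^2)$ rounds, with the $B$-dependence suppressed as in the paper). Two cosmetic remarks: the union bound over all $mn$ pairs $(i,a_i^\star)$ is harmless but unnecessary, since $M_i^{(t)}$ does not depend on $a_i^\star$, so a union over the $m$ players suffices; and your output (the $T$-sparse empirical joint distribution) differs from the paper's stated output $(\hat{x}_i)_{i\in[m]}$ of empirical marginals, but it is exactly the sparse CCE the paper alludes to elsewhere and satisfies the corollary as stated.
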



\section{Quantum lower bounds}\label{sec:lower-bound}
In this section, we prove quantum query lower bounds on finding correlated equilibria and coarse correlated equilibria.
\subsection{Quantum lower bound for computing correlated equilibria}
For computing correlated equilibria, \algo{ms-mwu} solves the problem using $\tilde{O}(m\sqrt{n})$ queries. To complement this upper bound, we prove a matching quantum lower bound (up to poly-logarithmic factors) in $m$ and $n$.
\begin{theorem}\label{thm:lower}
     Let $B$ denote the bound of loss functions. Assume $0<\varepsilon<\min\{\frac{1}{3},\frac{2B}{3m}\}$.  For an $m$-player normal-form game with $n$ actions for each player, to return an $\varepsilon$-correlated equilibrium with success probability more than $\frac{2}{3}$, we need $\Omega(m\sqrt{n})$ queries to $\orac_u$.
\end{theorem}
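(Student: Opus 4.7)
The plan is to reduce the $m$-fold direct product of the unstructured quantum search problem to the problem of computing an $\varepsilon$-correlated equilibrium of an $m$-player game. Combining the $\Omega(\sqrt{n})$ single-instance lower bound of~\cite{bennett1997strengths} with the strong direct product theorem of~\cite{lee2013strong} then forces any such CE algorithm to use $\Omega(m\sqrt{n})$ queries to $\orac_{\mathcal{L}}$.

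Given $m$ independent Boolean functions $f_1,\ldots,f_m\colon[n]\to\{0,1\}$, each with a unique marked element $a_i^*$ satisfying $f_i(a_i^*)=1$, I would construct the $m$-player normal-form game with action sets $\mathcal{A}_i=[n]$ and loss functions
\begin{equation}
    \mathcal{L}_i(a_1,\ldots,a_m)\defeq B\cdot\bigl(1-f_i(a_i)\bigr),
\end{equation}
so that each player's loss lies in $[0,B]$, depends only on its own action, and is minimized exactly at its marked element. Since $\mathcal{L}_i(a_1,\ldots,a_m)$ depends only on the pair $(i,a_i)$, one query to the game oracle $\orac_{\mathcal{L}}$ can be simulated by $O(1)$ queries to the combined search oracle $\orac_F\colon\ket{i}\ket{a}\ket{0}\mapsto\ket{i}\ket{a}\ket{f_i(a)}$ via a standard compute--copy--uncompute pattern that leaves no garbage.

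Next I would show that the CE output state already encodes all $m$ marked elements. Applying the $\varepsilon$-CE condition to the constant swap function $\phi_i(a_i)\equiv a_i^*$ yields
\begin{equation}
    B\cdot\Pr_{a\sim\mathcal{D}}[a_i\ne a_i^*]=\E_{a\sim\mathcal{D}}[\mathcal{L}_i(a_i,a_{-i})]\le\E_{a\sim\mathcal{D}}[\mathcal{L}_i(\phi_i(a_i),a_{-i})]+\varepsilon=\varepsilon,
\end{equation}
so every marginal $\mathcal{D}_i$ places mass at least $1-\varepsilon/B$ on $a_i^*$. Measuring $\ket{\psi_o}$ in the computational basis therefore returns the joint marked profile $(a_1^*,\ldots,a_m^*)$ with probability at least $1-m\varepsilon/B>\tfrac{1}{3}$ by a union bound, using the hypothesis $\varepsilon<\tfrac{2B}{3m}$. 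Together with the $>\tfrac{2}{3}$ success probability of the underlying CE algorithm, this produces an algorithm that solves the $m$-fold search problem with constant success probability overall.

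Finally, I would invoke the strong direct product theorem of Lee and Roland~\cite{lee2013strong}: simultaneously finding the marked elements of $m$ independent size-$n$ unstructured search instances with constant success probability requires $\Omega(m\sqrt{n})$ queries to $\orac_F$. Since each query to $\orac_{\mathcal{L}}$ costs only $O(1)$ queries to $\orac_F$, this forces the CE algorithm to make $\Omega(m\sqrt{n})$ queries to $\orac_{\mathcal{L}}$. The most delicate step in this plan is the single-sample extraction: the hypothesis $\varepsilon<\tfrac{2B}{3m}$ is essentially tight for the union bound over the $m$ marginals to yield a constant success probability, and without it the CE distribution could spread enough mass away from $a_i^*$ that one measurement of $\ket{\psi_o}$ would miss at least one marked element with constant probability, breaking the reduction; amplifying by repeated preparations of $\ket{\psi_o}$ would multiply the query count and weaken the lower bound, so concentrating on a single measurement is what makes the argument sharp.
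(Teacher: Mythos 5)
Your proposal is correct and follows essentially the same route as the paper: the identical hard instance (each player's loss is $0$ on a secret action and $B$ otherwise), the same observation that any $\varepsilon$-CE must put mass at least $1-\varepsilon/B$ on each player's marked action so a single measurement/sample recovers all $m$ marked elements with constant probability under $\varepsilon<\frac{2B}{3m}$, and the same combination of the $\Omega(\sqrt{n})$ unstructured-search bound with the Lee--Roland strong direct product theorem to conclude $\Omega(m\sqrt{n})$.
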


To prove \thm{lower}, we construct a hard instance and claim that finding an $\varepsilon$-correlated equilibrium on this instance is sufficiently difficult.

    \begin{definition}[Hard Instance]\label{defn:hard-instance}
       Consider an $m$-player normal-form game with $n$ actions $\{1,2,\ldots, n\}$ for each player. Each player $i\in[m]$ selects $k_i\in[n]$ uniformly randomly, and then define the loss function as follows:
           \begin{align*}
        \mathcal{L}_i(a_1,a_2,\ldots, a_m) =
        \begin{cases}
            0\ \quad\text{if } a_i = k_i; \\
            B\quad\text{if } a_i\neq k_i.
        \end{cases}
    \end{align*}
    Here $a_i$ is the action taken by player $i$.
    \end{definition}
    
    The $\varepsilon$-correlated equilibrium of \defn{hard-instance} is straightforward: each player $i\in[m]$ takes action $k_i$ with probability more than $1-\varepsilon /B$. Intuitively, each player's utility depends only on their own actions and is independent of the strategies of other players. Therefore, the goal of finding the $\varepsilon$-correlated equilibrium is essentially to determine the value of $k_i$ for each $i\in[m]$. This is similar to computing $m$ copies of a search problem on the entries of $A$ with $|A| = n$, and we will establish a query lower bound of correlated equilibria by constructing a reduction between the two problems.

    \begin{lemma}\label{lem:ce-search-reduction}
        Given an algorithm $\mathcal{A}$ finding an $\varepsilon$-correlated equilibrium of \defn{hard-instance} with success probability more than $1-\delta$, we can solve the $m$ copies of $n$-item search problem with success probability $1 - (\delta + \frac{\varepsilon m}{B})$ applying $\mathcal{A}$ once.
    \end{lemma}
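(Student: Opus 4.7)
The plan is to reduce $m$ independent copies of the $n$-item search problem (with marked items $k_1,\ldots,k_m$) to the task of computing an $\varepsilon$-CE on the instance of \defn{hard-instance}. First, I would observe that because $\mathcal{L}_i(a_1,\ldots,a_m)$ depends only on whether $a_i=k_i$, a single query to the oracle $\orac_{\mathcal{L}}$ in \defn{quantum-query} can be simulated by one query to the $i$-th search oracle: given the input $\ket{i}\ket{a_1}\cdots\ket{a_m}\ket{0}$, query the $i$-th search oracle to compute the bit $\mathbb{1}[a_i=k_i]$, use it to write $B\cdot(1-\mathbb{1}[a_i=k_i])$ into the answer register, and uncompute the bit. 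Therefore, running $\mathcal{A}$ with this simulated oracle consumes exactly as many queries (distributed across the $m$ search instances) as $\mathcal{A}$ makes to $\orac_{\mathcal{L}}$.

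Next, I would argue that a single measurement of the output state $\ket{\psi_o}$ recovers all $k_i$'s with high probability whenever $\mathcal{A}$ succeeds. Suppose $q$ is an $\varepsilon$-CE. For player $i$, consider the constant swap $\phi_i(a_i)\equiv k_i$; since $\mathcal{L}_i(k_i,a_{-i})=0$ on every $a_{-i}$, the CE condition gives
\begin{align}
B\cdot\Pr_{a\sim q}[a_i\neq k_i] \;=\; \mathbb{E}_{a\sim q}[\mathcal{L}_i(a_i,a_{-i})] \;\leq\; \mathbb{E}_{a\sim q}[\mathcal{L}_i(\phi_i(a_i),a_{-i})] + \varepsilon \;=\; \varepsilon.
\end{align}
Hence the marginal of $q$ on player $i$ puts mass at least $1-\varepsilon/B$ on $k_i$. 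A union bound over $i\in[m]$ yields
\begin{align}
\Pr_{a\sim q}[a_i=k_i \text{ for all } i\in[m]] \;\geq\; 1-\frac{\varepsilon m}{B}.
\end{align}
Measuring the first register of $\ket{\psi_o}$ therefore returns the correct tuple $(k_1,\ldots,k_m)$ with probability at least $1-\varepsilon m/B$, conditional on $\mathcal{A}$ succeeding.

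Finally, I would combine the two sources of error: $\mathcal{A}$ outputs a valid $\varepsilon$-CE with probability at least $1-\delta$, and conditional on this the measurement returns all $m$ marked items with probability at least $1-\varepsilon m/B$. A second union bound gives overall success probability at least $1-\delta-\varepsilon m/B$, matching the lemma statement. The only subtle point, and the main thing to be careful about, is the simulation of $\orac_{\mathcal{L}}$: one must verify that the query can be made reversible (using standard uncomputation of the $\mathbb{1}[a_i=k_i]$ bit) so that superposition queries to $\orac_{\mathcal{L}}$ translate faithfully into superposition queries to the $m$ search oracles, without which the reduction to the $m$-fold direct product of search would not be valid.
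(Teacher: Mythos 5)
Your proposal is correct and follows essentially the same route as the paper: invoke $\mathcal{A}$ on the hard instance, use the constant swap $\phi_i\equiv k_i$ to show each player's marginal puts mass at least $1-\varepsilon/B$ on $k_i$, sample, and combine the two error sources (the paper bounds the joint success by the product $(1-\delta)(1-\varepsilon/B)^m\ge 1-\delta-\varepsilon m/B$, while you use a union bound on the joint sample; both give the stated bound). Your extra care about simulating $\orac_{\mathcal{L}}$ reversibly from the search oracles is a detail the paper leaves implicit but needs later when applying the direct product theorem, so it is a welcome addition rather than a deviation.
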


Finally, for the $m$ copies problem, \citet{lee2013strong} proposed  the strong direct product theorem that establishes a lower bound on the query complexity for such problems. In our setting of the correlated equilibrium problem, this lower bound corresponds to $m\sqrt{n}$, which matches the complexity of \algo{ms-mwu}. This indicates that our quantum algorithm is optimal in terms of both $m$ and $n$. The formal proof of \thm{lower} and \lem{ce-search-reduction} are provided in \append{lower-bound}.

\subsection{Quantum lower bound for computing coarse correlated equilibria}
Now, we consider the quantum query lower bound on finding coarse correlated equilibria. Notice that for our hard instance \defn{hard-instance}, $\varepsilon$-correlated equilibria and $\varepsilon$-coarse correlated equilibria are equivalent. Therefore, we can directly derive the quantum query lower bound for finding coarse correlated equilibria from the above analysis:

\begin{corollary}
    Let $B$ denote the bound of the loss function. Assume $0<\varepsilon<\min\{\frac{1}{3},\frac{2B}{3m}\}$, for an $m$-player normal-form game with $n$ actions for each player, to return an $\varepsilon$-coarse correlated equilibrium with success probability more than $\frac{2}{3}$, we need $\Omega(m\sqrt{n})$ queries to $\orac_u$.
\end{corollary}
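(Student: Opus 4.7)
The plan is to reuse the hard instance from \defn{hard-instance} verbatim and observe that, for this particular construction, $\varepsilon$-CE and $\varepsilon$-CCE are equivalent notions; the lower bound then transfers immediately from \thm{lower} with no additional work.

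First I would verify the equivalence. Since in \defn{hard-instance} the loss $\mathcal{L}_i(a_1,\ldots,a_m)$ depends only on $a_i$---being $0$ iff $a_i = k_i$ and $B$ otherwise---for any joint distribution $\mathcal{D}\in\Delta(\mathcal{A})$ we have $\mathbb{E}_{a \sim \mathcal{D}}[\mathcal{L}_i(a_i,a_{-i})] = B\cdot\Pr_{a_i\sim\mathcal{D}_i}[a_i\neq k_i]$, where $\mathcal{D}_i$ is the marginal of $\mathcal{D}$ on coordinate $i$. The best fixed-action deviation is $a_i' = k_i$, yielding right-hand side $0$, and the best swap function is the constant map $\phi_i \equiv k_i$, also yielding right-hand side $0$. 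Hence both the $\varepsilon$-CCE and the $\varepsilon$-CE conditions for player $i$ collapse to the single inequality $\Pr_{a_i \sim \mathcal{D}_i}[a_i \neq k_i] \le \varepsilon/B$, so a distribution is an $\varepsilon$-CCE of the hard instance if and only if it is an $\varepsilon$-CE of it.

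Given this equivalence, I would invoke \lem{ce-search-reduction} verbatim with ``correlated equilibrium'' replaced by ``coarse correlated equilibrium'': any algorithm that returns an $\varepsilon$-CCE of \defn{hard-instance} with success probability exceeding $2/3$ solves $m$ independent copies of the $n$-item unstructured search problem with a constant success probability bounded away from $0$ under the hypotheses $\varepsilon < 2B/(3m)$ and $\varepsilon < 1/3$, via the same extraction step (read off $\tilde k_i$ from the marginal $\mathcal{D}_i$) used in the CE case. Combining the strong direct product theorem of \citet{lee2013strong} with the $\Omega(\sqrt{n})$ quantum lower bound for unstructured search \cite{bennett1997strengths} then yields the desired $\Omega(m\sqrt{n})$ query lower bound. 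I do not anticipate any real obstacle: the only non-trivial observation is the collapse of $\varepsilon$-CE and $\varepsilon$-CCE on this specifically tailored instance, after which the entire machinery built for \thm{lower} applies unchanged, and the resulting bound matches the upper bound achieved by \algo{quantum-cce} up to poly-logarithmic factors in both $m$ and $n$.
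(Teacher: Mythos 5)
Your proposal is correct and follows exactly the paper's route: it observes that on the hard instance of \defn{hard-instance} the $\varepsilon$-CE and $\varepsilon$-CCE conditions coincide (both collapse to $\Pr_{a_i\sim\mathcal{D}_i}[a_i\neq k_i]\le \varepsilon/B$), and then transfers the lower bound from \thm{lower} via \lem{ce-search-reduction} and the strong direct product theorem. Your explicit verification of the CE/CCE equivalence is a slightly more detailed write-up of what the paper merely asserts, but the argument is the same.
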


\section*{Acknowledgments}
This work was supported by the National Natural Science Foundation of China (Grant Number 62372006).
\bibliographystyle{plainnat}
\bibliography{quantum-ce}

\newcommand{\arxiv}[1]{arXiv:\href{https://arxiv.org/abs/#1}{\ttfamily{#1}}\?}\newcommand{\arXiv}[1]{arXiv:\href{https://arxiv.org/abs/#1}{\ttfamily{#1}}\?}\def\?#1{\if.#1{}\else#1\fi}
\begin{thebibliography}{37}
\providecommand{\natexlab}[1]{#1}
\providecommand{\url}[1]{\texttt{#1}}
\expandafter\ifx\csname urlstyle\endcsname\relax
  \providecommand{\doi}[1]{doi: #1}\else
  \providecommand{\doi}{doi: \begingroup \urlstyle{rm}\Url}\fi

\bibitem[Anagnostides et~al.(2022{\natexlab{a}})Anagnostides, Daskalakis,
  Farina, Fishelson, Golowich, and Sandholm]{anagnostides2022near}
Ioannis Anagnostides, Constantinos Daskalakis, Gabriele Farina, Maxwell
  Fishelson, Noah Golowich, and Tuomas Sandholm.
\newblock Near-optimal no-regret learning for correlated equilibria in
  multi-player general-sum games.
\newblock In \emph{Proceedings of the 54th Annual ACM SIGACT Symposium on
  Theory of Computing}, pages 736--749, 2022{\natexlab{a}}.
\newblock \arxiv{2111.06008}.

\bibitem[Anagnostides et~al.(2022{\natexlab{b}})Anagnostides, Farina, Kroer,
  Lee, Luo, and Sandholm]{anagnostides2022uncoupled}
Ioannis Anagnostides, Gabriele Farina, Christian Kroer, Chung-Wei Lee, Haipeng
  Luo, and Tuomas Sandholm.
\newblock Uncoupled learning dynamics with ${O}(\log t)$ swap regret in
  multiplayer games.
\newblock \emph{Advances in Neural Information Processing Systems},
  35:\penalty0 3292--3304, 2022{\natexlab{b}}.
\newblock \arxiv{2204.11417}.

\bibitem[Apeldoorn and Gily{\'e}n(2019)]{van2019quantum}
Joran~van Apeldoorn and Andr{\'a}s Gily{\'e}n.
\newblock Quantum algorithms for zero-sum games, 2019.
\newblock \arxiv{1904.03180}.

\bibitem[Aumann(1974)]{aumann1974subjectivity}
Robert~J. Aumann.
\newblock Subjectivity and correlation in randomized strategies.
\newblock \emph{Journal of Mathematical Economics}, 1\penalty0 (1):\penalty0
  67--96, 1974.
\newblock \doi{10.1016/0304-4068(74)90037-8}.
\newblock URL \url{https://doi.org/10.1016/0304-4068(74)90037-8}.

\bibitem[Babbush et~al.(2018)Babbush, Gidney, Berry, Wiebe, McClean, Paler,
  Fowler, and Neven]{babbush2018encoding}
Ryan Babbush, Craig Gidney, Dominic~W Berry, Nathan Wiebe, Jarrod McClean,
  Alexandru Paler, Austin Fowler, and Hartmut Neven.
\newblock Encoding electronic spectra in quantum circuits with linear t
  complexity.
\newblock \emph{Physical Review X}, 8\penalty0 (4):\penalty0 041015, 2018.

\bibitem[Bennett(1989)]{bennett1989time}
Charles~H Bennett.
\newblock Time/space trade-offs for reversible computation.
\newblock \emph{SIAM Journal on Computing}, 18\penalty0 (4):\penalty0 766--776,
  1989.

\bibitem[Bennett et~al.(1997)Bennett, Bernstein, Brassard, and
  Vazirani]{bennett1997strengths}
Charles~H. Bennett, Ethan Bernstein, Gilles Brassard, and Umesh Vazirani.
\newblock Strengths and weaknesses of quantum computing.
\newblock \emph{SIAM Journal on Computing}, 26\penalty0 (5):\penalty0
  1510--1523, 1997.
\newblock \doi{10.1137/S0097539796300933}.
\newblock URL \url{https://doi.org/10.1137/S0097539796300933}.

\bibitem[Blum and Mansour(2005)]{blum2005external}
Avrim Blum and Yishay Mansour.
\newblock From external to internal regret.
\newblock In \emph{Learning Theory: 18th Annual Conference on Learning Theory,
  COLT 2005, Bertinoro, Italy, June 27-30, 2005. Proceedings 18}, pages
  621--636. Springer, 2005.
\newblock \doi{10.1007/11503415_42}.
\newblock URL \url{https://doi.org/10.1007/11503415_42}.

\bibitem[Bouland et~al.(2023)Bouland, Getachew, Jin, Sidford, and
  Tian]{bouland2023quantum}
Adam Bouland, Yosheb~M. Getachew, Yujia Jin, Aaron Sidford, and Kevin Tian.
\newblock Quantum speedups for zero-sum games via improved dynamic {G}ibbs
  sampling.
\newblock In \emph{International Conference on Machine Learning}, pages
  2932--2952. PMLR, 2023.
\newblock \arxiv{2301.03763}.

\bibitem[Cesa-Bianchi and Lugosi(2006)]{cesa2006prediction}
Nicolo Cesa-Bianchi and G{\'a}bor Lugosi.
\newblock \emph{Prediction, learning, and games}.
\newblock Cambridge University Press, 2006.
\newblock \doi{10.5555/1137817}.
\newblock URL \url{https://dl.acm.org/doi/10.5555/1137817}.

\bibitem[Chen and Peng(2020)]{chen2020hedging}
Xi~Chen and Binghui Peng.
\newblock Hedging in games: Faster convergence of external and swap regrets.
\newblock \emph{Advances in Neural Information Processing Systems},
  33:\penalty0 18990--18999, 2020.
\newblock \arxiv{2006.04953}.

\bibitem[Chen et~al.(2009)Chen, Deng, and Teng]{chen2009settling}
Xi~Chen, Xiaotie Deng, and Shang-Hua Teng.
\newblock Settling the complexity of computing two-player nash equilibria.
\newblock \emph{J. ACM}, 56\penalty0 (3), may 2009.
\newblock ISSN 0004-5411.
\newblock \doi{10.1145/1516512.1516516}.
\newblock URL \url{https://doi.org/10.1145/1516512.1516516}.

\bibitem[Dagan et~al.(2024)Dagan, Daskalakis, Fishelson, and
  Golowich]{dagan2024external}
Yuval Dagan, Constantinos Daskalakis, Maxwell Fishelson, and Noah Golowich.
\newblock From {E}xternal to {S}wap {R}egret 2.0: {A}n {E}fficient {R}eduction
  for {L}arge {A}ction {S}paces.
\newblock In \emph{Proceedings of the 56th Annual ACM Symposium on Theory of
  Computing}, pages 1216--1222, 2024.
\newblock \doi{10.1145/3618260.3649681}.
\newblock URL \url{https://dl.acm.org/doi/10.1145/3618260.3649681}.

\bibitem[Daskalakis et~al.(2021)Daskalakis, Fishelson, and
  Golowich]{daskalakis2021near}
Constantinos Daskalakis, Maxwell Fishelson, and Noah Golowich.
\newblock Near-optimal no-regret learning in general games.
\newblock \emph{Advances in Neural Information Processing Systems},
  34:\penalty0 27604--27616, 2021.
\newblock \arxiv{2108.06924}.

\bibitem[Foster and Vohra(1997)]{foster1997calibrated}
Dean~P. Foster and Rakesh~V. Vohra.
\newblock Calibrated learning and correlated equilibrium.
\newblock \emph{Games and Economic Behavior}, 21\penalty0 (1-2):\penalty0
  40--55, 1997.
\newblock \doi{10.1006/game.1997.0595}.
\newblock URL \url{https://doi.org/10.1006/game.1997.0595}.

\bibitem[Fotakis et~al.(2002)Fotakis, Kontogiannis, Koutsoupias, Mavronicolas,
  and Spirakis]{fotakis2002structure}
Dimitris Fotakis, Spyros Kontogiannis, Elias Koutsoupias, Marios Mavronicolas,
  and Paul Spirakis.
\newblock The structure and complexity of {N}ash equilibria for a selfish
  routing game.
\newblock In \emph{International Colloquium on Automata, Languages, and
  Programming}, pages 123--134. Springer, 2002.
\newblock \doi{10.1016/j.tcs.2008.01.004}.
\newblock URL \url{https://doi.org/10.1016/j.tcs.2008.01.004}.

\bibitem[Fujii(2023)]{fujii2023bayes}
Kaito Fujii.
\newblock Bayes correlated equilibria and no-regret dynamics, 2023.
\newblock \arXiv{2304.05005}.

\bibitem[Gao et~al.(2024)Gao, Ji, Li, and Wang]{gao2024logarithmic}
Minbo Gao, Zhengfeng Ji, Tongyang Li, and Qisheng Wang.
\newblock Logarithmic-regret quantum learning algorithms for zero-sum games.
\newblock \emph{Advances in Neural Information Processing Systems}, 36, 2024.
\newblock \arXiv{2304.14197}.

\bibitem[Grigoriadis and Khachiyan(1995)]{grigoriadis1995sublinear}
Michael~D. Grigoriadis and Leonid~G. Khachiyan.
\newblock A sublinear-time randomized approximation algorithm for matrix games.
\newblock \emph{Operations Research Letters}, 18\penalty0 (2):\penalty0 53--58,
  1995.
\newblock \doi{10.1016/0167-6377(95)00032-0}.
\newblock URL \url{https://dl.acm.org/doi/10.1016/0167-6377(95)00032-0}.

\bibitem[Grover and Rudolph(2002)]{grover2002creating}
Lov Grover and Terry Rudolph.
\newblock Creating superpositions that correspond to efficiently integrable
  probability distributions, 2002.
\newblock \arxiv{quant-ph/0208112}.

\bibitem[Hazan(2016)]{hazan2016introduction}
Elad Hazan.
\newblock Introduction to online convex optimization.
\newblock \emph{Foundations and Trends{\textregistered} in Optimization},
  2\penalty0 (3-4):\penalty0 157--325, 2016.
\newblock \arxiv{1909.05207}.

\bibitem[Jain et~al.(2022)Jain, Piliouras, and Sim]{jain2022matrix}
Rahul Jain, Georgios Piliouras, and Ryann Sim.
\newblock Matrix multiplicative weights updates in quantum zero-sum games:
  Conservation laws \& recurrence.
\newblock \emph{Advances in Neural Information Processing Systems},
  35:\penalty0 4123--4135, 2022.

\bibitem[Jiang and Leyton-Brown(2011)]{jiang2011polynomial}
Albert~Xin Jiang and Kevin Leyton-Brown.
\newblock Polynomial-time computation of exact correlated equilibrium in
  compact games.
\newblock In \emph{Proceedings of the 12th ACM Conference on Electronic
  Commerce}, pages 119--126, 2011.
\newblock \arxiv{1011.0253}.

\bibitem[Lee and Roland(2013)]{lee2013strong}
Troy Lee and J{\'e}r{\'e}mie Roland.
\newblock A strong direct product theorem for quantum query complexity.
\newblock \emph{Computational Complexity}, 22:\penalty0 429--462, 2013.
\newblock \arXiv{1104.4468}.

\bibitem[Li et~al.(2019)Li, Chakrabarti, and Wu]{li2019sublinear}
Tongyang Li, Shouvanik Chakrabarti, and Xiaodi Wu.
\newblock Sublinear quantum algorithms for training linear and kernel-based
  classifiers.
\newblock In \emph{International Conference on Machine Learning}, pages
  3815--3824. PMLR, 2019.
\newblock \arxiv{1904.02276}.

\bibitem[Lin et~al.(2024)Lin, Piliouras, Sim, and Varvitsiotis]{lin2024no}
Wayne Lin, Georgios Piliouras, Ryann Sim, and Antonios Varvitsiotis.
\newblock No-regret learning and equilibrium computation in quantum games.
\newblock \emph{Quantum}, 8:\penalty0 1569, 2024.

\bibitem[Lotidis et~al.(2023)Lotidis, Mertikopoulos, and
  Bambos]{lotidis2023learning}
Kyriakos Lotidis, Panayotis Mertikopoulos, and Nicholas Bambos.
\newblock Learning in quantum games.
\newblock \emph{arXiv preprint arXiv:2302.02333}, 2023.

\bibitem[Papadimitriou and Roughgarden(2008)]{papadimitriou2008computing}
Christos~H. Papadimitriou and Tim Roughgarden.
\newblock Computing correlated equilibria in multi-player games.
\newblock \emph{Journal of the ACM (JACM)}, 55\penalty0 (3):\penalty0 1--29,
  2008.
\newblock \doi{10.1145/1379759.1379762}.
\newblock URL \url{https://doi.org/10.1145/1379759.1379762}.

\bibitem[Peng and Rubinstein(2024)]{peng2023fast}
Binghui Peng and Aviad Rubinstein.
\newblock Fast swap regret minimization and applications to approximate
  correlated equilibria.
\newblock In \emph{Proceedings of the 56th Annual ACM Symposium on Theory of
  Computing}, pages 1223--1234, 2024.
\newblock \arxiv{2310.19647}.

\bibitem[Piliouras et~al.(2022)Piliouras, Sim, and
  Skoulakis]{piliouras2022beyond}
Georgios Piliouras, Ryann Sim, and Stratis Skoulakis.
\newblock Beyond time-average convergence: Near-optimal uncoupled online
  learning via clairvoyant multiplicative weights update.
\newblock \emph{Advances in Neural Information Processing Systems},
  35:\penalty0 22258--22269, 2022.

\bibitem[Rosenthal(1973)]{rosenthal1973class}
Robert~W. Rosenthal.
\newblock A class of games possessing pure-strategy {N}ash equilibria.
\newblock \emph{International Journal of Game Theory}, 2:\penalty0 65--67,
  1973.
\newblock \doi{10.1007/BF01737559}.
\newblock URL \url{https://doi.org/10.1007/BF01737559}.

\bibitem[Soleymani et~al.(2025{\natexlab{a}})Soleymani, Piliouras, and
  Farina]{soleymani2025cautious}
Ashkan Soleymani, Georgios Piliouras, and Gabriele Farina.
\newblock Cautious optimism: A meta-algorithm for near-constant regret in
  general games.
\newblock \emph{arXiv preprint arXiv:2506.05005}, 2025{\natexlab{a}}.

\bibitem[Soleymani et~al.(2025{\natexlab{b}})Soleymani, Piliouras, and
  Farina]{soleymani2025faster}
Ashkan Soleymani, Georgios Piliouras, and Gabriele Farina.
\newblock Faster rates for no-regret learning in general games via cautious
  optimism.
\newblock In \emph{Proceedings of the 57th Annual ACM Symposium on Theory of
  Computing}, pages 518--529, 2025{\natexlab{b}}.

\bibitem[Stoltz and Lugosi(2005)]{stoltz2005internal}
Gilles Stoltz and G{\'a}bor Lugosi.
\newblock Internal regret in on-line portfolio selection.
\newblock \emph{Machine Learning}, 59:\penalty0 125--159, 2005.
\newblock \doi{10.1007/s10994-005-0465-4]}.
\newblock URL \url{https://doi.org/10.1007/s10994-005-0465-4}.

\bibitem[Syrgkanis et~al.(2015)Syrgkanis, Agarwal, Luo, and
  Schapire]{syrgkanis2015fast}
Vasilis Syrgkanis, Alekh Agarwal, Haipeng Luo, and Robert~E Schapire.
\newblock Fast convergence of regularized learning in games.
\newblock \emph{Advances in Neural Information Processing Systems}, 28, 2015.
\newblock \arxiv{1507.00407}.

\bibitem[Vasconcelos et~al.(2025)Vasconcelos, Vlatakis-Gkaragkounis,
  Mertikopoulos, Piliouras, and Jordan]{vasconcelos2025quadratic}
Francisca Vasconcelos, Emmanouil-Vasileios Vlatakis-Gkaragkounis, Panayotis
  Mertikopoulos, Georgios Piliouras, and Michael~I Jordan.
\newblock A quadratic speedup in finding nash equilibria of quantum zero-sum
  games.
\newblock \emph{Quantum}, 9:\penalty0 1737, 2025.

\bibitem[Von~Stengel and Forges(2008)]{von2008extensive}
Bernhard Von~Stengel and Fran{\c{c}}oise Forges.
\newblock Extensive-form correlated equilibrium: Definition and computational
  complexity.
\newblock \emph{Mathematics of Operations Research}, 33\penalty0 (4):\penalty0
  1002--1022, 2008.
\newblock \doi{10.1287/moor.1080.0340}.
\newblock URL \url{https://pubsonline.informs.org/doi/10.1287/moor.1080.0340}.

\end{thebibliography}


\newpage
\appendix

\section{Omitted Algorithm Details}
\label{sec:appendix_algos}
Below we provide the full pseudocode for the Multiplicative Weight Update (MWU) and multi-scale MWU algorithms, which are referenced in Section~\ref{sec:prelim}.

\begin{algorithm}[h]
    \caption{Multiplicative Weight Update (MWU)}\label{algo:mwu}
    \begin{algorithmic}[1]
    \State{\textbf{Input parameters} $T$ (number of rounds), $n$ (number of actions), $B$ (bound on loss vector)}
        \For{$t= 1,\dots,T$}
           \State Set $x^{(t)} \in \Delta([n])$ such that $x^{(t)}(i) \propto \exp (-\eta \sum_{\tau=1}^{t-1} \ell^{(\tau)}(i))$ for $i \in[n]$, where $\eta=\sqrt{\log n/T}$
           \State Play $x^{(t)}$ and observe $\ell^{(t)} \in[0, B]^n$
        \EndFor
    \end{algorithmic}    
\end{algorithm}
\begin{algorithm}[h]
    \caption{Multi-scale MWU}\label{algo:ms-mwu}
    \begin{algorithmic}
    \State{\textbf{Input parameters} $\varepsilon$ (precision), $n$ (number of actions), $B$ (bound on the loss vector)}
    \State{\textbf{Internal parameters} $K:=\lceil\log _2(1 / \varepsilon)+1\rceil$, $H:=\lceil4 \log (n) 2^{2 K}\rceil$, number of rounds $T:=H^{2^K}$}
        \For{$t=1,\dots,T$}
            \State Let $q_{k, t} \in \Delta_n$ be the strategy of $\text{MWU}_k$ ($k \in[2^K]$), play uniformly over them
            \begin{align}
            \textstyle
                p_t=\frac{1}{2^K} \sum_{k \in[2^K]} q_{k, t}
            \end{align}
        \EndFor
        \Procedure{MWU$_k$}{}
            \For{$r=1, 2, \ldots, T / H^k$}
                \State{Initiate MWU with input parameters $H, n, H^{k-1} B$}
                \For{$h=1, 2, \ldots, H$}
                    \State{Let $z_{r,h} \in \Delta_n$ be the strategy of MWU at the $h$-th round}
                    \State{Play $z_{r,h}$ for $H^{k-1}$ days}
                    \State{Update MWU with the aggregated loss of the last $H^{k-1}$ days\begin{align}\textstyle
                        \left\{\sum_{\tau=(r-1) H^k+(h-1) H^{k-1}+1}^{(r-1) H^k+h H^{k-1}} \ell_\tau(i)\right\}_{i \in[n]} \in[0, H^{k-1} B]^n
                    \end{align}}
                \EndFor
            \EndFor
        \EndProcedure
    \end{algorithmic}
\end{algorithm}
\section{Technical details of algorithms}

In this appendix, we present the implementation details and formal proofs of our main technical results, \thm{theorem-quantum-ce} and \thm{theorem-quantum-cce}. Specifically, we provide the correctness and complexity analysis of our quantum algorithms, \algo{quantum-ce} and \algo{quantum-cce}, for computing an $\varepsilon$-correlated equilibrium and an $\varepsilon$-coarse correlated equilibrium, respectively. 
\subsection{Implementation  of \algo{quantum-ce}}
\label{sec:ce-implementation}
We now describe the details of the implementation of \algo{quantum-ce}. 
At the $t$-th round, suppose samples before the $t$-th round are stored in the QRAM. Then we can access the samples in superposition by applying the unitary $U_{\mathrm{QRAM}}$ such that \begin{align}
    U_{\mathrm{QRAM}}\colon \ket{\tau}\ket{s}\ket{0}\mapsto\ket{\tau}\ket{s}\ket{a^{(\tau,s)}}
\end{align}
for any $\tau < t$ and $s\in[S]$.
Given access to the QRAM, we now show how to implement the unitary $V_{t}$ in \algo{quantum-ce}. Since $r_{t,k}$ and $h_{t,k}$ can be computed efficiently, we can prepare the following uniform superposition state given $k$ and $t$:
\begin{align}
    \label{eq:prepare1}
    \frac{1}{\sqrt{h_{k,t}H^{k-1}S}}\sum_{\tau=(r_{k,t}-1) H^k+1}^{(r_{k,t}-1) H^k+h_{k,t} H^{k-1}}\sum_{s=1}^S\ket{\tau}\ket{s}\ket{0}.
\end{align}
Then applying $U_{\mathrm{QRAM}}$, we get the uniform superposition of samples: \begin{align}
    \label{eq:prepare2}
    \frac{1}{\sqrt{h_{k,t}H^{k-1}S}}\sum_{\tau=(r_{k,t}-1) H^k+1}^{(r_{k,t}-1) H^k+h_{k,t} H^{k-1}}\sum_{s=1}^S\ket{\tau}\ket{s}\ket{a^{(\tau,s)}}.
\end{align}
Using one query to $\orac_{\mathcal{L}}$ and $\orac_{\mathcal{L}}^{\dagger}$, we can map \begin{align}
    \label{eq:prepare3}
    \ket{i}\ket{a_i}\ket{a_{-i}}\ket{0} \mapsto \ket{i}\ket{a_i}\ket{a_{-i}}\left(\sqrt{\frac{\mathcal{L}_i(a_i,a_{-i})}{B}}\ket{1}+\sqrt{1-\frac{\mathcal{L}_i(a_i,a_{-i})}{B}}\ket{0}\right).
\end{align}
Combining \eq{prepare1}, \eq{prepare2}, and \eq{prepare3}, for any $i\in[m]$, $k\in [2^K]$ and $a_i\in \mathcal{A}_i$,  we can perform the following unitary transformation:
\begin{align}
    \label{eq:block-r}
    V_{t} \colon &\ket{k}\ket{i}\ket{a_i}\ket{0}\ket{0}\ket{0}\ket{0}\nonumber \\
    \mapsto & \ket{k}\ket{i}\ket{a_i}\frac{1}{\sqrt{h_{k,t}H^{k-1}S}}\sum_{\tau=(r_{k,t}-1) H^k+1}^{(r_{k,t}-1) H^k+h_{k,t} H^{k-1}}\sum_{s=1}^S\ket{\tau}\ket{s}\ket{a^{(\tau,s)}}\Bigl(\sqrt{\frac{\mathcal{L}_i(a_i,a^{(\tau,s)}_{-i})}{B}}\ket{0}\\
    &+\sqrt{1-\frac{\mathcal{L}_i(a_i,a^{(\tau,s)}_{-i})}{B}}\ket{1}\Bigr)\nonumber \\
    =&\ket{k}\ket{i}\ket{a_i}\frac{1}{\sqrt{h_{k,t}H^{k-1}}}\sqrt{\sum_{\tau=(r_{k,t}-1) H^k+1}^{(r_{k,t}-1) H^k+h_{k,t} H^{k-1}}\frac{\ell_{i,\tau}(a_i)}{B}}\ket{\psi_i}\ket{0}+\ket{\phi_i}\ket{1},
\end{align}
where $\ket{\psi_i}$ is a normalized state and $\ket{\phi_i}$ is an unnormalized garbage state, and for any $i\in [m]$, $\bra{k}\bra{i}V_{t}\ket{k}\ket{i}$ is a $Bh_{k,t}H^{k-1}$-amplitude-encoding of the vector $\bar{\ell}_{k,t}\coloneqq \Bigl(\sum_{\tau=(r_{k,t}-1) H^k+1}^{(r_{k,t}-1) H^k+h_{k,t} H^{k-1}}\ell_{i,\tau}(a_i)\Bigr)_{a_i\in \mathcal{A}_i}$. Given the amplitude-encoding of $\bar{\ell}_{k,t}$, we can implement the Gibbs sampling oracle $\orac_{\sqrt{\log n/H}\bar{\ell}_{k,t}}^{\mathrm{Gibbs}}(\delta)$ using \thm{fast-gibbs-sampling}.

After $T$ rounds, we prepare the uniform superposition of $t\in[T]$ and $k\in [2^K]$. By coherently apply $V_t$ controlled by an ancilla register $\ket{t}$, we can implement $\sum_{t\in[T]}\ket{t}\bra{t}\otimes V_t$.
Then following the previous steps, we can apply $\orac_{\sqrt{\log n/H}\bar{\ell}_{k,t}}^{\mathrm{Gibbs}}(\delta)$ conditioning on the first two registers containing $\ket{t}\ket{k}$. The resulting state is the output of \algo{quantum-ce}.


\subsection{Proof of \thm{theorem-quantum-ce}}
We give the formal version of \thm{theorem-quantum-ce} and provide its proof. 
\label{appendix:proof-quantum-ce}
\begin{theorem}
    \label{thm:quantum-ce}
    For any $m$-player normal-form game with $n$ actions for each player and $\alpha \in (0,1)$, \algo{quantum-ce} outputs an $\varepsilon$-correlated equilibrium of the game  with success probability at least $1-\alpha$ using $O(m\sqrt{n}\log(1/\alpha))\cdot \bigl(\log(n)B/\varepsilon\bigr)^{O(B/\varepsilon)}\cdot \poly(\log n, \log m, 1/\varepsilon,B)$ queries to $\orac_{\mathcal{L}}$ and $O(m^2\sqrt{n}\log(1/\alpha))\cdot \bigl(\log(n)B/\varepsilon\bigr)^{O(B/\varepsilon)}\cdot \poly(\log n, \log m, 1/\varepsilon,B)$ time.
\end{theorem}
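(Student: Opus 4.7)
The plan is to split the argument into a correctness proof and a complexity analysis. Correctness will follow by viewing \algo{quantum-ce} as a noisy implementation of the multi-scale MWU protocol of \citet{peng2023fast}, then invoking \thm{regret-ms-mwu} together with the standard no-swap-regret to correlated-equilibrium reduction. The complexity will multiply the per-call cost of the Gibbs sampler from \thm{fast-gibbs-sampling} by the total number of samples drawn over the full run, bounding the amplitude-encoding norm of each $V_t$ along the way.

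For correctness, I would fix a single player $i$ and argue that its realized distributions $\tilde p_{i,1},\ldots,\tilde p_{i,T}$ attain swap regret at most $\varepsilon T$ against the opponent loss vectors $\ell_{i,t}(\cdot)=\mathbb{E}_{a_{-i}\sim \tilde p_{-i,t}}\mathcal{L}_i(\cdot,a_{-i})$. Three comparisons are needed. First, replace the empirical loss vector $\ell_{i,t}$ (an $S$-sample Monte Carlo estimate) by its conditional expectation; a Hoeffding bound with the algorithm's choice $S=\Omega((B^2/\varepsilon^2)\log(mnT/\alpha))$ and a union bound over $(i,t,a_i)$ give uniform deviation at most $\varepsilon/6$ with probability $1-\alpha$. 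Second, replace the $\delta$-TV-approximate Gibbs-sampled $\tilde p_{i,t}$ by the exact multi-scale MWU strategy $p_{i,t}$; using $\|\tilde p_{i,t}-p_{i,t}\|_1\le 2\delta$, $\|\ell_{i,t}\|_\infty\le B$, and $\delta=\varepsilon/(6B)$ contributes at most $\varepsilon T/3$ additional swap regret. Third, the exact multi-scale MWU achieves swap regret at most $(\varepsilon/(3B))\cdot BT=\varepsilon T/3$ by \thm{regret-ms-mwu} with the chosen $K,H,T$. Summing gives total swap regret $\le\varepsilon T$, so the joint empirical play forms an $\varepsilon$-correlated equilibrium, and I would verify that the coherent output step produces a state whose measurement distribution matches this empirical distribution up to the same Gibbs error already accounted for, so $\ket{\psi_o}$ encodes an $\varepsilon$-CE.

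For complexity, I would bound the amplitude-encoding parameter of each $V_t$ on the vector $\sqrt{\log n/H}\,\bar\ell_{k,t}$. Since $\|\bar\ell_{k,t}\|_\infty\le B\cdot h_{k,t}H^{k-1}\le BT$, one gets $\beta\le BT\sqrt{\log n/H}$. By \thm{fast-gibbs-sampling}, each Gibbs call costs $\tilde O(\beta\sqrt n)$ queries to $V_t$, and each query to $V_t$ invokes $\orac_\mathcal{L}$ a constant number of times composed with a QRAM access over the historical samples. The total number of Gibbs calls is $O(mST)$ in the main loop plus $O(mT2^K)$ coherent calls in the final state preparation, giving query complexity $\tilde O(m\sqrt n\cdot BT^2 S\sqrt{\log n/H})$. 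Plugging in $T=H^{2^K}$, $H=O(\log n\cdot 2^{2K})$, $2^K=O(B/\varepsilon)$, and $S=\tilde O((B^2/\varepsilon^2)\log(1/\alpha))$, this collapses to $O(m\sqrt n\log(1/\alpha))\cdot(\log(n)B/\varepsilon)^{O(B/\varepsilon)}\cdot\poly(\log n,\log m,1/\varepsilon,B)$. The gate-level construction of the QRAM holding $m$-dimensional samples adds a factor of $O(m\log n)$, yielding the stated $m^2\sqrt n$ time bound.

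The main obstacle will be step two of correctness: showing that the Gibbs sampler's per-round TV error composes linearly with the multi-scale MWU analysis, despite the fact that each round's $V_t$ is built from samples drawn in earlier rounds, so the errors are not independent across $t$. I plan to handle this by conditioning on the filtration generated by the samples up to round $t$, applying the TV bound to the conditional Gibbs distribution, and invoking an Azuma--Hoeffding argument on the adapted sum $\sum_t \langle \tilde p_{i,t}-p_{i,t},\ell_{i,t}\rangle$; this keeps the accumulated error linear in $T\delta B$ despite the adaptive dependence, which is exactly what is needed for the three error contributions to remain bounded by $\varepsilon T/3$ each.
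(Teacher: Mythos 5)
Your proposal is correct and follows essentially the same route as the paper's proof: the regret guarantee of exact multi-scale MWU applied to the empirical loss vectors (\thm{regret-ms-mwu}), a Hoeffding bound with a union bound over $(i,t,a_i)$ comparing empirical to expected losses under the Gibbs-sampled opponents, a $2\delta B$-per-round accounting of the Gibbs TV error with $\delta=\varepsilon/(6B)$, and the same cost accounting of $\tilde O(BT\sqrt{n})$ queries per Gibbs call times $mST$ calls, with the QRAM bit-length contributing the extra factor of $m$ in the time bound. The one remark is that your anticipated obstacle is not actually one: the term $\sum_t \langle \tilde p_{i,t}-p_{i,t},\ell_{i,t}\rangle$ is bounded deterministically by $2\delta BT$ since $\|\tilde p_{i,t}-p_{i,t}\|_1\le 2\delta$ and $\|\ell_{i,t}\|_\infty\le B$ hold surely, so no Azuma/filtration argument is needed (the paper uses exactly this pointwise bound), and the per-call QRAM gate cost is proportional to the number of stored entries times $m\log n$ rather than $m\log n$ alone, though this does not affect the stated complexity.
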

\begin{proof}
\textbf{Correctness.} At the $t$-th round, denote the output distribution of the quantum Gibbs sampler $\orac_{\sqrt{\log n/H}\bar{\ell}_{k,t}}^{\mathrm{Gibbs}}(\delta)$ by $\tilde{q}_{i,k,t}$ and $\tilde{p}_{i,t}\coloneqq \frac{1}{2^K}\sum_{k\in[2^K]}\tilde{q}_{i,k,t}$. By \thm{fast-gibbs-sampling}, we have $\|\tilde{q}_{i,k,t}-q_{i,k,t}\|_1\le \delta$ and hence $\|\tilde{p}_{i,t}-p_{i,t}\|_1\le \delta$. The output state of \algo{quantum-ce} is \begin{align}
    \ket{\psi_o}=\frac{1}{\sqrt{T2^K}}\sum_{t=1}^T\sum_{k=1}^{2^K}\ket{t}\ket{k}\bigotimes_{i\in[m]}\left(\sum_{a_i\in \mathcal{A}_i}\sqrt{\tilde{q}_{i,k,t}(a_i)}\ket{a_i}_{A_i}\ket{\psi_{a_i}}\right),
\end{align}
which can be written as \begin{align}
    \frac{1}{\sqrt{T}}\sum_{t\in [T]}\bigotimes_{i\in[m]}\bigl(\sum_{a_i\in \mathcal{A}_i}\sqrt{ \tilde{p}_{i,t}(a_i)}\ket{a_i}_{A_i}\ket{\phi_{a_i}}\bigr)
\end{align}
for some normalized states $\ket{\phi_{a_i}}$. Measuring the register $A_i$ for all $i\in[m]$ gives the distribution $\frac{1}{T}\sum_{t\in[T]} \otimes_{i\in[m]} \tilde{p}_{i,t}$.

For any player $i\in[m]$, since $p_{i,t}$ is the strategy of the multi-scale MWU algorithm with parameters $K=\log _2(3B / \varepsilon)+1$,  $H= 4 \log (n) 2^{2 K}$, $T= H^{2^K}$ at the $t$-th round given loss vectors $\ell_{i,1},\ldots,\ell_{i,t-1}$, by \thm{regret-ms-mwu}, we have 
\begin{align}
    \frac{1}{T}\sum_{t=1}^T \langle p_{i,t}, \ell_{i,t} \rangle - \frac{1}{T}\min_{\phi \in \Phi_i} \sum_{t=1}^{T}\sum_{j=1}^{n} p_{i,t}(j)\cdot \ell_{i,t}(\phi(j))\le \frac{\varepsilon}{3B}B=\frac{\varepsilon }{3}.
 \end{align}
 Let $\tilde{\ell}_{i,t}\coloneqq \mathcal{L}(\cdot, \tilde{p}_{-i,t})$ be the expected loss vector of player $i$ in the $t$-th round. 
 Since $\ell_{i,t}$ is the average of $\mathcal{L}(\cdot,a_{-i}^{(t,s)})$ for $s\in[S]$ and $a_{-i}^{(t,s)}$ is sampled independently from $\tilde{p}_{-i,t}$, by Hoeffding's inequality, we have \begin{align}
    \Pr\left[|\ell_{i,t}(a_i)-\tilde{\ell}_{i,t}(a_i)|\ge \frac{\varepsilon}{6}\right] \le 2\exp\Bigl(-\frac{\varepsilon^2 S}{18 B^2}\Bigr)\le \frac{\alpha}{mnT}.
 \end{align}
 for any $i\in [m]$, $t\in[T]$, and $a_i\in \mathcal{A}_i$. Taking a union bound over $i\in [m]$, $t\in[T]$, and $a_i\in \mathcal{A}_i$, we have \begin{align}
    |\ell_{i,t}(a_i)-\tilde{\ell}_{i,t}(a_i)|\le \frac{\varepsilon}{6}
\end{align} for all $i\in [m]$, $t\in[T]$, and $a_i\in \mathcal{A}_i$ with probability at least $1-\alpha$. Therefore, for any swap function $\phi \in \Phi_i$, we have \begin{align}
    &\frac{1}{T}\sum_{t=1}^T \langle p_{i,t}, \tilde{\ell}_{i,t} \rangle - \frac{1}{T}\sum_{t=1}^T \sum_{j=1}^{n} p_{i,t}(j)\cdot \tilde{\ell}_{i,t}(\phi(j))\\
    \le&\frac{1}{T} \sum_{t=1}^T \langle p_{i,t}, \ell_{i,t} \rangle - \frac{1}{T}\sum_{t=1}^T \sum_{j=1}^{n} p_{i,t}(j)\cdot \ell_{i,t}(\phi(j)) + \frac{\varepsilon}{3} \\
    \le &\frac{\varepsilon}{3}  + \frac{\varepsilon}{3} = \frac{2\varepsilon}{3} .
\end{align}
Let $\mathcal{D}$ be the distribution $\frac{1}{T}\sum_{t\in[T]} \otimes_{i\in[m]} \tilde{p}_{i,t}$. Since $p_{i,t}, \tilde{p}_{i,t}$ is the uniform mixture of $q_{i,k,t},\tilde{q}_{i,k,t}$ for $k\in[2^K]$ respectively and $\|\tilde{q}_{i,k,t}-q_{i,k,t}\|_1\le \delta$, we have $\|\tilde{p}_{i,t}-p_{i,t}\|_1\le \delta$. Then we have \begin{align}
    &\E_{a\sim \mathcal{D}}[\mathcal{L}_i(a_i,a_{-i})]-\E_{a\sim \mathcal{D}}[\mathcal{L}_i(\phi(a_i),a_{-i})]\\
    =&\frac{1}{T}\sum_{t=1}^T \langle \tilde{p}_{i,t}, \tilde{\ell}_{i,t} \rangle - \frac{1}{T}\sum_{t=1}^T \sum_{j=1}^{n} \tilde{p}_{i,t}(j)\cdot \tilde{\ell}_{i,t}(\phi(j))\\
    \le &\frac{1}{T}\sum_{t=1}^T \langle p_{i,t}, \tilde{\ell}_{i,t} \rangle - \frac{1}{T}\sum_{t=1}^T \sum_{j=1}^{n} p_{i,t}(j)\cdot \tilde{\ell}_{i,t}(\phi(j))+2\delta B\\
    \le &\frac{2\varepsilon}{3}+\frac{\varepsilon}{3}=\varepsilon.
\end{align}
Therefore, $\mathcal{D}$ is an $\varepsilon$-CE of the game. 
\\\\
\noindent\textbf{Query complexity.} 
Each call to $V_t$ in \eq{block-r} requires one query $\orac_{\mathcal{L}}$ and $\orac_{\mathcal{L}}^{\dagger}$.
By \thm{fast-gibbs-sampling}, we need $\tilde{O}(Bh_{k,t} H^{k-1}\cdot \sqrt{\log n/ H} \cdot \sqrt{n})$ calls to $V_t$ to get a sample from $\tilde{q}_{i,k,t}$.  Since $h_{k,t}$ is smaller than $H$ and $H^{k}\le H^{2^K}=T$ for $k\in[2^K]$, we need  \begin{align}
    \tilde{O}(Bh_{k,t} H^{k-1}\cdot \sqrt{\log n/ H} \cdot \sqrt{n})=\tilde{O}(BH^{k-1/2}\sqrt{n})=\tilde{O}(BT\sqrt{n})
\end{align}
calls to $V_t$ to sample from $\tilde{q}_{i,k,t}$. Since we need to get $S$ samples for $m$ players in $T$ rounds, the total query complexity is 
\begin{align}
    \tilde{O}(BT\sqrt{n} \cdot S \cdot m \cdot T)=\tilde{O}(T^2m\sqrt{n}\log(1/\alpha)),
\end{align}
where the $\tilde{O}$ notation hides polynomial factors in $\log n,\log m,1/\varepsilon, B$. Substituting $T=H^{2^K}=\bigl(\log(n)B/\varepsilon\bigr)^{O(B/\varepsilon)}$, the query complexity is \begin{align}
    O(m\sqrt{n}\log(1/\alpha))\cdot \bigl(\log(n)B/\varepsilon\bigr)^{O(B/\varepsilon)}\cdot \poly(\log n, \log m, 1/\varepsilon,B).
\end{align}

\noindent\textbf{Time complexity.} There are $TS$ entries in the QRAM and each entry has $m\log n$ bits, so the time complexity of applying $U_{\mathrm{QRAM}}$ and modifying one entry is $O(TS m\log n )$ \cite{babbush2018encoding}.
At each round, we need to modify $S$ entries of the QRAM to store the new samples. To prepare and sample from the Gibbs state, we need to call $U_{\mathrm{QRAM}}$ the same times as the number of queries to $\orac_{\mathcal{L}}$. Therefore, the time complexity is 
\begin{align}
    &\bigl(TS+O(m\sqrt{n}\log(1/\alpha))\cdot \bigl(\log(n)B/\varepsilon\bigr)^{O(B/\varepsilon)}\cdot \poly(\log n, \log m, 1/\varepsilon,B)\bigr)\cdot O(TS m\log n )\nonumber\\
    =&O(m^2\sqrt{n}\log^2(1/\alpha))\cdot \bigl(\log(n)B/\varepsilon\bigr)^{O(B/\varepsilon)}\cdot \poly(\log n, \log m, 1/\varepsilon,B).
\end{align}
\end{proof}


\subsection{Implementation of \algo{quantum-cce}}
\label{sec:implemente-cce}
In this section we introduce our Gibbs sampling method in \algo{quantum-cce}. Specifically, we extend the dynamic Gibbs sampling of two-player games, as given in \lem{two-player-cce-sampling}, to multi-player games, and provide a more refined explanation of the query and gate complexity. 

\begin{lemma}[Theorem 3 in \citet{bouland2023quantum}]
    \label{lem:two-player-cce-sampling}
    For failure probability $\alpha\in (0,1)$ and $\delta<\eta$, given a quantum oracle for $A\in \mathbb{R}^{n_1\times n_2}$ with $\|A\|_{\max} \leq 1$, there is a quantum algorithm which solves \prob{two-player-gibbs-sampling} with probability more than $1-\alpha$ using
    \begin{align*}                              \max(\mathcal{T}_{\mathrm{samp}},\mathcal{T}_{\mathrm{update}}) = O\left( 1+\sqrt{n_1} T\eta \cdot \log^4\left(\frac{n_1n_2}{\delta}\right) \left(\sqrt{\eta \log\left(\frac{n_1\eta T}{\alpha}\right)} + \eta \log\left(\frac{n_1\eta T}{\alpha}\right)\right)\right)
    \end{align*}
    time with an additive initialization cost of $O\left(\eta^3 T^3 \log^4\left(\frac{n_1\eta T}{\delta}\right) + \log^7\left(\frac{n_1\eta T}{\delta}\right)\right)$.
\end{lemma}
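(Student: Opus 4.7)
The plan is to construct a quantum-accessible data structure that maintains an approximate Gibbs sampler for $\exp(Ax)/\|\exp(Ax)\|_1$ under sparse coordinate updates to $x$, with cost amortized across the $T$ operations. The key structural observation is that each update $x_i \leftarrow x_i + \eta$ adds $\eta \cdot A_{\cdot,i}$ to the vector $Ax$, so after all $T$ updates every entry of $Ax$ has changed by at most $\eta T$ in magnitude. Consequently $\exp(Ax)$ stays within a bounded multiplicative range, and it is affordable to maintain $(1 + O(\delta))$-multiplicative approximations throughout.

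The first step is to reduce the single-sample primitive to an amplitude encoding: maintain a unitary encoding the vector $\exp(Ax)/Z$ where $Z$ upper-bounds $\|\exp(Ax)\|_1$, and feed it into \thm{fast-gibbs-sampling} (or a quantum rejection-sampling variant). Each sample then costs $\tilde{O}(\sqrt{n_1})$ queries, with an overhead proportional to $\|Ax\|_\infty = O(\eta T)$, which explains the $\sqrt{n_1}\, T\eta$ scaling in the target bound.

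The second step is to avoid reconstructing the encoding after every update. I would maintain a sampler-tree-style hierarchical data structure whose leaves store approximations of $\exp((Ax)_i)$ and whose internal nodes store partial sums. Updates are handled lazily: when a coordinate of $x$ changes, only the leaves whose stored value has drifted past the $\delta$-tolerance are refreshed, and the remaining drift since the last refresh is corrected at sampling time via a short quantum rejection phase on top of the tree's proposal distribution. The snapshot spacing is chosen so that the expected drift per epoch fits within the multiplicative precision budget, so that classical refresh work is sparse and the quantum correction succeeds with constant probability.

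The final step is the complexity balance. Per-update cost consists of a bounded number of classical leaf refreshes plus $O(1)$ amplitude-encoding maintenance; per-sample cost is one quantum Gibbs call against the current tree. Choosing the tree depth and the amplitude-estimation precision to equalize these two costs yields the stated $\max(\mathcal{T}_{\mathrm{samp}},\mathcal{T}_{\mathrm{update}})$ bound, with the one-time cost of building the initial tree and the first Gibbs estimate absorbed into the additive $O(\eta^3 T^3\, \mathrm{polylog})$ initialization term. The main obstacle is controlling the compounding of approximation errors across the $T$ adaptive rounds: the sampled indices feed back into the updates of $x$, so a naive union bound over rounds would cost a factor of $T$. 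Handling this requires coupling the analysis to the martingale structure of the sample sequence, showing that the sampling bias is centered and its accumulated variance remains $O(\delta)$ even when the samples drive future updates, and carefully choosing both the snapshot granularity and the failure probabilities inside the quantum amplitude-estimation subroutines so that a single global $\delta$-guarantee holds throughout.
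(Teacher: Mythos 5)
First, note that the paper does not prove \lem{two-player-cce-sampling} at all: it is imported verbatim as Theorem~3 of \citet{bouland2023quantum}, and the only ``proof'' in this paper is the citation (the paper's own technical work starts afterwards, in adapting that result to the $m$-player setting in \lem{cce-sampling} by replacing the sampler tree with a QRAM over history samples). So your attempt is really a blind reconstruction of the external proof, and judged on that basis it captures the right family of ideas --- slow drift of $Ax$ under $\eta$-sparse updates, an amplitude-encoded reference distribution fed into a Gibbs/rejection sampler as in \thm{fast-gibbs-sampling}, lazy refreshes, and amortization --- but it does not reach the stated bound. The step ``choosing the tree depth and the amplitude-estimation precision to equalize these two costs yields the stated $\max(\mathcal{T}_{\mathrm{samp}},\mathcal{T}_{\mathrm{update}})$ bound'' is asserted rather than derived: nothing in your sketch produces the specific factor $\bigl(\sqrt{\eta\log(n_1\eta T/\alpha)}+\eta\log(n_1\eta T/\alpha)\bigr)$, which in \citet{bouland2023quantum} comes from a concentration argument on the within-epoch drift of the maintained reference (epochs are restarted so the rejection overhead stays constant), nor the $\log^4$ factor, nor the additive $O(\eta^3T^3\,\mathrm{polylog})$ initialization, which is a genuine precomputation cost of their data structure and not merely ``building the initial tree''. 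Relatedly, your premise that $\exp(Ax)$ ``stays within a bounded multiplicative range'' so that $(1+O(\delta))$-multiplicative approximations can be maintained throughout is not right as stated: entries of $\exp(Ax)$ can drift by a factor $e^{\Theta(\eta T)}$ over the whole sequence, which is precisely why the amortized cost carries the $T\eta$ factor and why the construction must periodically rebuild its reference rather than track a single multiplicative approximation.

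Second, the difficulty you flag at the end --- compounding error when the sampled indices feed back into future updates, requiring a martingale analysis --- is misplaced for this lemma. \prob{two-player-gibbs-sampling} is a pure data-structure/maintenance problem: given \emph{any} sequence of $T$ updates, maintain a $\delta$-approximate Gibbs oracle; each sample only needs to be $\delta$-close to the Gibbs distribution of the \emph{current} $Ax$, so there is no accumulation of sampling bias inside the lemma and no union-bound-over-rounds issue beyond the $\alpha$ failure probability already in the statement. The adaptivity you worry about is handled one level up, in the game-convergence analysis (in this paper, the ghost-iteration and Azuma arguments in the proof of \thm{quantum-cce}), not inside the sampler. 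So the proposal, while pointed in the right direction, is missing the quantitative core of the construction and attributes to it an analysis burden it does not carry.
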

The complexity of this method consists of two parts: maintaining the data structure in each round and sampling from it. It should be noted that here we assume access to a classical-write / quantum-read random access memory at unit cost. In the actual implementation, if we consider the gate complexity of QRAM, we need additional gate complexity, which is proportional to the number of  entries in QRAM and the number of bits per entry.

The Gibbs sampling used in \algo{quantum-cce} can be formalized as \prob{gibbs-sampling}, which is an $m$-player game version of \prob{two-player-gibbs-sampling}. Note that the vector $x$ of size $n$ maintained in \prob{two-player-gibbs-sampling} actually records and maintains the combination of opponent's strategies of the previous $t$ rounds, where in each round one particular action is updated. In $m$-player games, the $m-1$ opponents have $n^{m-1}$ possible strategies, and we can use a high-dimensional array to maintain the information of opponent strategies. A simple idea is that we can use the method in \citet{bouland2023quantum} to store the opponents' strategies in the high-dimensional array of size $n^{m-1}$ using a special data structure called ``sampler tree'', but the cost would be exponential large in storage space, leading to exponential gate complexity if using 
 QRAM. Considering that the array is sparse, we improved this method by using QRAM to directly store the strategies from each round, achieving better time complexity.  

\begin{problem}[Sampling maintenance for $m$-player game]
\label{prob:gibbs-sampling}
    Given $\eta >0$, $0<\delta<1$, and suppose that we have a quantum oracle for the loss function $\mathcal{L}_i(j,a_{-i}^{(t)}) \in [0,B]$. For player $i$, consider a sequence of size $T$, where each item includes an ``Update'' operation to $(m-1)$-dimension dynamic arrays $D$ indexed by actions of the $m-1$ opponents' strategies $x_{-i}=(x_1,x_2,\ldots ,x_{i-1},x_{i+1}, \ldots ,x_{m-1})$ where $x_j\in[n]$, with each entry $D_{(x_{-i})}\geq 0$. Each ``Update'' operation takes the form of $D_{(x_{-i})} \leftarrow D_{(x_{-i})} +\eta $ for some $D_{(x_{-i})}\in [n]^{m-1}$. Let $\mathcal{T}_{\mathrm{update}}$ denote queries per operation we need to maintain a $\delta$-approximate Gibbs oracle $\mathcal{O}^\mathrm{dynamic\shortminus Gibbs}_{\mathcal{L}_i(j, D)}$ of vector $\mathcal{L}_i(j, D)$ (for different strategies $j$), and let $\mathcal{T}_{\mathrm{samp}}$ denote time needed for $\mathcal{O}^\mathrm{dynamic\shortminus Gibbs}_{\mathcal{L}_i(j, D)}$.
\end{problem}

In the algorithm proposed by \citet{bouland2023quantum}, a key step involves using sampler tree to store 
$x\in \mathbb{R}^{n}_{\geq 0}$ and prepare a $t\eta$-amplitude encoding of $Ax$ (Corollary 4 in \cite{bouland2023quantum}):
\begin{align}
    \mathcal{O}_{Ax}\ket{0}\ket{0}\ket{j} = \ket{0} \left(\sum_{i\in [n]} \sqrt{\frac{A_{ij}x_i}{\beta}}\ket{0}\ket{i}+\ket{1}\ket{g}\right)\ket{j}, \text{ here } \beta \geq \|x\|_1 \text{ and $\ket{g}$ is unnormalized}.
\end{align}
Corollary 4 in \cite{bouland2023quantum} shows that we can maintain the oracle $\mathcal{O}_{Ax}$ with total building time cost $O(T \log n )$ after $T$ rounds, and each call of $\mathcal{O}_{Ax}$ requires $O(\log n)$ time and $O(1)$ queries to the given oracle $\mathcal{O}_\mathcal{L}$. However, this is based on the assumption of access to a classical-write / quantum-read random access memory at unit cost. For gate complexity, such an assumption neglects the entries of this data structure ($n$ for two-player games) and the number of bits used to store information, which is related to the precision we require.

Instead of maintaining the sampler tree in \citet{bouland2023quantum}, we maintain a QRAM storing the sample of strategies, which means that at time $t$, we can access the unitary $U_{\mathrm{QRAM}}$ such that 
\begin{align}
    U_{\mathrm{QRAM}}\ket{\tau}\ket{0}\mapsto \ket{\tau}\ket{a^{(\tau)}}
\end{align}
for all $\tau <t$, where $a^{\tau}\in \mathcal{A}$ is the sampler at time $\tau$. Accordingly, in our algorithm we need to implement a $t$-amplitude encoding of \begin{align}
    \sum_{\tau=1}^{t} \mathcal{L}_i(\cdot, a^{(\tau)}_{-i}).
\end{align}
This is can be implemented by performing \begin{align}
    \ket{a_i}\ket{0} &\mapsto \ket{a_i}\frac{1}{\sqrt{t}}\sum_{\tau=1}^{t}\ket{\tau}\ket{a^{(\tau)}_i}\ket{a^{(\tau)}_{-i}}\Bigl(\sqrt{\frac{\mathcal{L}_i(a_i,a^{(\tau)}_{-i})}{B}}\ket{1}+\sqrt{1-\frac{\mathcal{L}_i(a_i,a^{(\tau)}_{-i})}{B}}\ket{0}\Bigr)\\
    &=\ket{a_i}\left(\sqrt{\frac{1}{tB}\sum_{\tau=1}^t \mathcal{L}_i(a_i,a^{(\tau)}_{-i})} \ket{\psi_i}\ket{1}+\ket{\phi_i}\ket{0}\right)
\end{align}
for some normalized state $\ket{\psi_i}$ and unnormalized state $\ket{\phi_i}$. This is a $tB$-amplitude encoding of  $  \sum_{\tau=1}^{t} \mathcal{L}_i(\cdot, a^{(\tau)}_{-i}).$

There are $T$ entries in the QRAM. For the precision $\delta$ to be considered, each entry has $O( m \log n)$ bits, and thus the gate complexity of applying one $U_\mathrm{QRAM}$ and modifying one entry is $O({T}  m \log n )$. Note that if we also use a sampler tree to directly store the sparse high-dimensional array $D$, since $D$ has $n_2 = n^{m-1}$ entries, we will similarly require $\tilde{O}(\log n_2) = \tilde{O}(m\log n)$ queries to the sampler tree. However, the additional cost is that the sampler tree itself requires exponentially large storage space, and thus leads to an exponential gate complexity if using QRAM for storage. For query complexity, both construction methods require $O(1)$ queries to achieve $t$-amplitude encoding of $\sum_{\tau=1}^{t} \mathcal{L}_i(\cdot, a^{(\tau)}_{-i})$.

Here we present a modified version of Theorem 3 in \citet{bouland2023quantum}:

\begin{lemma}[modified version of \lem{two-player-cce-sampling} for $m$-player game]
    \label{lem:cce-sampling}
    Let $n_2 \coloneqq n^{m-1}$. For failure probability $\alpha\in (0,1)$ and $\delta<\eta$, given a quantum oracle $\mathcal {O}_\mathcal{L}$, there is a quantum algorithm which solves \prob{gibbs-sampling} with probability more than $1-\alpha$ using
    \begin{align*}                              &\max(\mathcal{T}_{\mathrm{samp}},\mathcal{T}_{\mathrm{update}}) \\
    =& O\left( 1+\sqrt{n}\cdot T\eta  \cdot {T}  m \log n \cdot \log^4\left(\frac{n}{\delta}\right) \cdot \left(\sqrt{\eta \log\left(\frac{n\eta T}{\alpha}\right)} + \eta \log\left(\frac{n\eta T}{\alpha}\right)\right)\right)
    \end{align*}
    quantum gates and
    \begin{align*}                              O\left( 1+\sqrt{n}\cdot T\eta \cdot \log^4\left(\frac{n}{\delta}\right) \cdot \left(\sqrt{\eta \log\left(\frac{n\eta T}{\alpha}\right) }+ \eta \log\left(\frac{n\eta T}{\alpha}\right)\right)\right)
    \end{align*}
    queries to $\mathcal{O}_\mathcal{L}$, with an additive initialization cost of $O\left(\eta^3 T^3 \log^4\left(\frac{n\eta T}{\delta}\right) + \log^7\left(\frac{n\eta T}{\delta}\right)\right)$.
\end{lemma}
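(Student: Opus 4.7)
The plan is to carry the proof of \lem{two-player-cce-sampling} from \cite{bouland2023quantum} through essentially verbatim, replacing only the module that prepares the amplitude encoding consumed by the dynamic Gibbs sampler. In the two-player proof that module uses a sampler tree to turn the accumulated opponent-strategy vector $x\in\mathbb{R}^{n_2}_{\ge 0}$ into a $\|x\|_1$-amplitude encoding of $Ax$. In the $m$-player setting the analogous vector $D$ lives in $\mathbb{R}^{n^{m-1}}_{\ge 0}$, but after $t$ rounds it is supported on at most $t$ entries, so I keep it implicitly: I store the sample sequence $a^{(1)},\ldots,a^{(t)}$ in a QRAM instead of maintaining a sampler tree over a space of size $n^{m-1}$. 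This is exactly the construction described just before the lemma statement, so the proof amounts to plugging it into the Bouland--Gilyén--Sidford pipeline and re-accounting the costs.

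Concretely, at round $t$ I build the required amplitude encoding of $\sum_{\tau=1}^{t}\mathcal{L}_i(\cdot,a_{-i}^{(\tau)})$ by preparing the uniform superposition $\frac{1}{\sqrt{t}}\sum_{\tau=1}^{t}\ket{\tau}$, applying $U_{\mathrm{QRAM}}$ to obtain $\frac{1}{\sqrt{t}}\sum_{\tau=1}^{t}\ket{\tau}\ket{a^{(\tau)}}$, querying $\mathcal{O}_{\mathcal{L}}$ once to write $\mathcal{L}_i(a_i,a_{-i}^{(\tau)})$ into an ancilla, performing the controlled rotation with amplitude $\sqrt{\mathcal{L}_i(a_i,a_{-i}^{(\tau)})/B}$, and uncomputing the loss register with $\mathcal{O}_{\mathcal{L}}^{\dagger}$. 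The resulting unitary is a $tB$-amplitude encoding of the desired vector, and each call uses $O(1)$ queries to $\mathcal{O}_{\mathcal{L}}$ plus $O(1)$ QRAM accesses. This oracle plays exactly the role of Corollary~4 of \cite{bouland2023quantum} in their proof.

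I then feed this encoding into the same dynamic Gibbs sampling machinery: fast approximate Gibbs sampling from \thm{fast-gibbs-sampling} combined with rejection sampling against a stale reference distribution, with periodic rebuilds whenever the accumulated $\ell_1$-drift of the implicit vector exceeds the tolerance set by $\delta$. Because the bound $tB\le T\eta$ on the amplitude-encoding normalization plays the same role as $\|x\|_1$ in the two-player argument, the per-operation and per-sample query counts to the amplitude encoding -- and therefore to $\mathcal{O}_{\mathcal{L}}$ -- match the $\sqrt{n}$-dependent expression in \lem{two-player-cce-sampling} after substituting the multi-player parameters. The additive initialization cost is also unchanged, since it depends only on the number of rebuilds.

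For the gate count, each call of $U_{\mathrm{QRAM}}$ and each single-entry write must be compiled. Since the QRAM holds at most $T$ entries of $O(m\log n)$ bits, one QRAM access and one update cost $O(Tm\log n)$ elementary gates by \cite{babbush2018encoding}. Multiplying the amplitude-encoding call count by this factor produces the claimed $Tm\log n$ overhead in $\max(\mathcal{T}_{\mathrm{samp}},\mathcal{T}_{\mathrm{update}})$. The main technical obstacle I anticipate is verifying that the sparsity and structure of $D$ is never used quantitatively in the Bouland--Gilyén--Sidford argument beyond yielding the amplitude encoding, so that substituting our QRAM-based encoding leaves the update/sample scheduling, rejection-sampling thresholds, and the error analysis intact; I expect this to follow because the two-player proof treats the sampler tree as a black-box amplitude-encoding oracle, but it will require a careful pass through the amortization argument to confirm that the extra $O(Tm\log n)$ gate factor is correctly carried through every phase and not conflated with the $\tilde{O}(\log n_2)$ factor that the two-player data structure already paid.
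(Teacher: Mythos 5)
Your proposal matches the paper's own argument: the paper likewise proves this lemma by re-running the dynamic Gibbs sampling analysis of \lem{two-player-cce-sampling} verbatim, substituting the QRAM over the stored sample sequence for the sampler tree to realize the amplitude encoding of $\sum_{\tau=1}^{t}\mathcal{L}_i(\cdot,a^{(\tau)}_{-i})$, charging $O(Tm\log n)$ gates per QRAM access while keeping $O(1)$ queries to $\mathcal{O}_{\mathcal{L}}$ per encoding call. The only point you treat more briefly is the paper's accompanying remark that the $\log^4 n_2$ dependence in the original two-player bound should really be a single $\log n_2$ factor (the data-structure access cost), which is exactly the factor replaced by $Tm\log n$ in the stated gate complexity.
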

The proof of the lemma is entirely consistent with \lem{two-player-cce-sampling}, where we simply use the aforementioned QRAM to replace the sampler tree to implement the $t$-amplitude encoding of $\sum_{\tau=1}^{t} \mathcal{L}_i(\cdot, a^{(\tau)}_{-i})$. We only need to make slight modifications to the parameters, as noted in \rem{modified-gibbs}.
\begin{remark}
    \label{rem:modified-gibbs}
    In the results presented in \cite{bouland2023quantum}, the term related to the number of opponent strategies $n_2 = n^{m-1}$ is of the form $\log^4 n_2$. However, in their sampling algorithm, the authors only used $O(\log n_2)$ queries to the sampler tree to prepare an oracle $\mathcal{O}_{Ax}$  within the sampler tree. There are no computations involving time that are dependent on $n_2$ in the other steps. Hence, this term can actually be corrected to $\log^1 n_2$, which corresponds to the time of achieve $t$-amplitude encoding of $\sum_{\tau=1}^{t} \mathcal{L}_i(\cdot, a^{(\tau)}_{-i})$. By replacing the sampler tree with QRAM, we obtain our gate complexity with the term ${T}  m \log n$, as showed in \lem{cce-sampling}. Furthermore, as we only need $O(1)$ queries of $\mathcal{O}_\mathcal{L}$ to achieve the encoding, the query complexity does not include the term ${T}  m \log n$.
\end{remark}
\begin{remark}
    The complexity in \cite{bouland2023quantum} has an additive $\epsilon^{-3}$ term, which arises from an additive initialization cost $\tilde{O}(\eta^3T^3)$ in \lem{two-player-cce-sampling}. This term is unrelated to the number of queries to the loss oracle $O_\mathcal{L}$ and appears only in the time complexity. 
    The distinction is that their QRAM model assumes that mathematical operations can be implemented exactly in $O(1)$ time, whereas we further consider the gate complexity of QRAM operations in our analysis. 
    When considering query complexity, their dependence on $\epsilon$ is $\tilde{O}({1/\eps^{2.5}})$, which matches ours exactly. However, for the time complexity, due to our additional consideration of the gate complexity of QRAM operations, our overall time complexity becomes $\tilde{O}({1/\eps^{4.5}})$, which is larger than $\eps^{-3}$. Therefore, we do not explicitly include the additive initialization cost term $\eps^{-3}$ in the final stated result.
\end{remark}


\subsection{Proof of \thm{theorem-quantum-cce}}
\label{appendix:proof-quantum-cce}
In this subsection, we will provide a proof showing that \algo{quantum-cce} can output an $\varepsilon$-coarse correlated equilibrium with high probability, and calculate the complexity based on the results in \lem{cce-sampling}. The formal version of \thm{theorem-quantum-cce} is stated below: 
\begin{theorem}\label{thm:quantum-cce}
    For any $m$-player normal-form game with $n$ actions for each player and $\alpha\in(0,1)$,
    \algo{quantum-cce} computes an $\varepsilon$-coarse correlated equilibrium of the game with success probability at least $1-\alpha$ using $\tilde{O}( mn^{\frac{1}{2}}  B^\frac{5}{2} \varepsilon^{-\frac{5}{2}})$ queries to $\orac_{\mathcal{L}}$ and $\tilde{O}( m^2n^{\frac{1}{2}}  B^\frac{9}{2} \varepsilon^{-\frac{9}{2}})$ time.
\end{theorem}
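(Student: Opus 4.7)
I would combine a standard external-regret analysis of multiplicative weights with a ``ghost-iteration'' coupling to the $\delta$-approximate dynamic Gibbs sampler from Lemma~\ref{lem:cce-sampling}, and conclude that the empirical joint distribution $\mathcal{D}=\frac{1}{T}\sum_{t=0}^{T-1}\delta_{a^{(t)}}$ over the sampled action profiles is an $\varepsilon$-coarse correlated equilibrium with probability at least $1-\alpha$. The returned marginals $(\hat{x}_i)_i$, together with the $T$ sampled profiles stored in QRAM, encode $\mathcal{D}$. The complexity bounds then follow by substituting the algorithm's parameters into Lemma~\ref{lem:cce-sampling} and summing across the $mT$ calls to the dynamic Gibbs sampler.

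\textbf{Correctness via ghost iteration.} For each player $i$ and round $t$, let $x_i^{(t)}\propto\exp\!\bigl(-\eta\sum_{k<t}\mathcal{L}_i(\cdot,a_{-i}^{(k)})\bigr)$ be the exact MWU strategy, and let $\tilde{x}_i^{(t)}$ denote the distribution the dynamic Gibbs oracle actually draws from, so $\|\tilde{x}_i^{(t)}-x_i^{(t)}\|_1\le\delta$. Treating $\ell_i^{(t)}:=\mathcal{L}_i(\cdot,a_{-i}^{(t)})\in[0,B]^n$ as an adversarial loss sequence for player $i$, Theorem~\ref{thm:regret-mwu} applied to the exact trajectory yields cumulative regret at most $2B\sqrt{T\log n}$, and the TV gap $|\<\tilde{x}_i^{(t)}-x_i^{(t)},\ell_i^{(t)}\>|\le\delta B$ adds at most $\delta BT$. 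To transfer this bound from played distributions to realized actions, I would apply Azuma--Hoeffding to $Z_t:=\mathcal{L}_i(a^{(t)})-\<\tilde{x}_i^{(t)},\ell_i^{(t)}\>$, adapted to the filtration $\sigma(a^{(0)},\ldots,a^{(t-1)},a_{-i}^{(t)})$ under which $Z_t$ is centered (since $a_i^{(t)}\sim\tilde{x}_i^{(t)}$ is independent of $a_{-i}^{(t)}$ given the history) and bounded by $2B$; this contributes a deviation of $O(B\sqrt{T\log(m/\alpha)})$ per player. Dividing by $T$, union-bounding over $i\in[m]$ and $a_i'\in[n]$, and plugging in $T=\tilde\Theta(B^2\log(mn/\alpha)/\varepsilon^2)$ together with $\delta\le\varepsilon/(16B(n-1))$ makes every term $O(\varepsilon)$, so $\mathcal{D}$ satisfies
\[
\E_{a\sim\mathcal{D}}[\mathcal{L}_i(a)]\le\E_{a\sim\mathcal{D}}[\mathcal{L}_i(a_i',a_{-i})]+\varepsilon \qquad \forall\,i\in[m],\ a_i'\in[n].
\]

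\textbf{Complexity.} For the query bound, the effective sampler step size in Lemma~\ref{lem:cce-sampling}, after rescaling the loss to unit sup-norm, is $\eta_{\mathrm{sam}}=\eta B=\sqrt{\log n/T}$; since $\eta_{\mathrm{sam}}\ll 1$ the dominant contribution is $\mathcal{T}_{\mathrm{samp}}=\tilde O(\sqrt{n}\,T\eta_{\mathrm{sam}}^{3/2})$, which with $T=\tilde\Theta(B^2/\varepsilon^2)$ evaluates to $\tilde O(\sqrt{n}\,B^{1/2}/\varepsilon^{1/2})$ queries per Gibbs call. Summing over the $mT$ calls yields the claimed $\tilde O(m\sqrt{n}\,B^{5/2}/\varepsilon^{5/2})$ query bound. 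For the time bound, each QRAM access analyzed in Section~\ref{sec:implemente-cce} costs $\tilde O(mT)$ additional gates because the QRAM holds $T$ entries of bit-length $m\log n$; this multiplies the query count by $\tilde\Theta(mB^2/\varepsilon^2)$ and produces the stated $\tilde O(m^2\sqrt{n}\,B^{9/2}/\varepsilon^{9/2})$ time bound.

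\textbf{Main obstacle.} The delicate step is simultaneously balancing the three error sources---the $O(B\sqrt{T\log n})$ MWU regret, the $\delta BT$ slippage from approximate Gibbs sampling, and the martingale sampling variance---against a single consistent choice of $T,\eta,\delta$ that also places the sampler in the regime where Lemma~\ref{lem:cce-sampling} attains the target sublinear-in-$n$ complexity. The martingale step is particularly subtle because $\tilde{x}_i^{(t)}$ depends on all players' past random samples, so one has to construct the filtration so that $Z_t$ is centered; this requires excluding the freshly drawn $a_i^{(t)}$ while still conditioning on the opponents' realized $a_{-i}^{(t)}$ in the same round.
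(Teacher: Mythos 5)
Your proposal is correct, and your complexity accounting (the rescaled step size $\eta B=\sqrt{\log n/T}$, per-call cost $\tilde O(\sqrt{n}\,T(\eta B)^{3/2})=\tilde O(\sqrt{n}B^{1/2}\varepsilon^{-1/2})$, $mT$ calls, and the extra $\tilde O(Tm\log n)$ gate cost per QRAM access) matches the paper's essentially line by line. The correctness argument, however, takes a genuinely different route. The paper follows \citet{bouland2023quantum}: it constructs an auxiliary (``ghost'') MWU sequence $\tilde u_i^{(t)}$ driven by the difference losses $\hat\ell_i^{(t)}-\ell_i^{(t)}$, adds the two regret bounds to control the regret against the \emph{expected} losses $\hat\ell_i^{(t)}=\mathcal{L}_i(\cdot,u_{-i}^{(t)})$, certifies that the time-average of the exact Gibbs product distributions $\bar u$ is an approximate CCE, and only then transfers (via TV and Azuma) to the sampled output. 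You skip the ghost construction entirely: you apply \thm{regret-mwu} pathwise to the exact iterates against the \emph{realized} losses $\ell_i^{(t)}=\mathcal{L}_i(\cdot,a_{-i}^{(t)})$, pay a $\delta B$ TV correction for the approximate sampler, and use one Azuma step (with the correctly enlarged filtration including $a_{-i}^{(t)}$, which is valid since the per-round samples are drawn independently across players) to pass from played distributions to realized actions; this directly certifies the empirical joint distribution of the sampled profiles, whose deviation benchmark is exact so no comparison of expected versus realized comparator terms is ever needed. Your argument is simpler and certifies precisely the sparse empirical object that the algorithm stores (consistent with the paper's own description of the output), whereas the paper's route additionally establishes that the mixture of exact Gibbs products $\bar u$ is itself an approximate CCE, mirroring the zero-sum analysis it is adapted from. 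Two cosmetic caveats: your label ``ghost-iteration coupling'' is a misnomer, since you never build a ghost iterate; and since \algo{quantum-cce} formally returns only the marginals $(\hat x_i)$, you are right to state explicitly that the certified equilibrium is the empirical joint distribution encoded by the stored samples rather than the product of the returned marginals.
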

\begin{proof}
    \textbf{Correctness.} For convenience, we denote $s^{(t)}_i $ by the vector for Gibbs sampling of player $i$ in $t$-th round, i.e., $s^{(t)}_i \coloneqq {-\eta \cdot \sum_{k=0}^{t-1}\mathcal{L}(j, a^{(k)}_{-i})}$. The proof of the correctness of \thm{quantum-cce} consists of two main parts: First, we demonstrate that the uniform mixture of Gibbs distribution of $s^{(t)}_i$ in each round is an $O(\varepsilon)$-coarse correlated equilibrium of this normal-form game. Then we consider the action strategies $a_{i}^{(t)}$ generated by the Gibbs sampling in our algorithm, and we will show that they can also derive an approximate coarse correlated equilibrium. 
    
    Denote $u^{(t)}_i\coloneqq\frac{\exp(s^{(t)}_i)}{\|\exp(s^{(t)}_i)\|_1}$ and $\ell^{(t)}_i \coloneqq \mathcal{L}_i(\cdot,a_{-i}^{(t)})$ for all $t=0,\ldots,T-1$. The regret bound of MWU (\thm{regret-mwu}) implies that \begin{align}
        \label{eq:regret}
        \sum_{t=0}^{T-1}\langle u^{(t)}_i,\ell^{(t)}_i\rangle - \sum_{t=0}^{T-1}\langle u,\ell^{(t)}_i\rangle \leq 2B\sqrt{\log(n)T}
    \end{align}
    for all $i\in[m]$ and $u\in \Delta([n])$.

We now use a ``ghost iteration'' argument in \cite{bouland2023quantum} to bound the regret of $u^{(t)}_i$ with respect to loss vectors $\hat{\ell}_i^{(t)}\coloneqq \mathcal{L}_i(\cdot,u^{(t)}_{-i})$. Denote $\tilde{\ell}_i^{(t)}\coloneqq \hat{\ell}_i^{(t)}-\ell_i^{(t)}$, $\tilde{u}_i^{(0)}\coloneqq u_i^{(0)}$, and \begin{align}
        \tilde{u}_i^{(t)}=\frac{\exp(-\eta\sum_{\tau=0}^{t-1}\tilde{\ell}_i^{(\tau)})}{\|\exp(-\eta\sum_{\tau=0}^{t-1}\tilde{\ell}_i^{(\tau)})\|_1}
    \end{align} for $t=1,\ldots,T-1$. Then \thm{regret-mwu} again implies that \begin{align}
        \label{eq:ghost-regret}
        \sum_{t=0}^{T-1}\langle \tilde{u}^{(t)}_i,\tilde{\ell}^{(t)}_i\rangle - \sum_{t=0}^{T-1}\langle u,\tilde{\ell}^{(t)}_i\rangle \leq 2B\sqrt{\log(n)T}
    \end{align}
    for all $i\in[m]$ and $u\in \Delta([n])$. 

    Summing \eq{regret} and \eq{ghost-regret} gives us \begin{align}
        \sum_{t=0}^{T-1}\langle u^{(t)}_i,\hat{\ell}^{(t)}_i\rangle -  \sum_{t=0}^{T-1}\langle u,\hat{\ell}^{(t)}_i\rangle + \sum_{t=0}^{T-1}\langle \tilde{u}^{(t)}_i-u^{(t)}_i,\hat{\ell}^{(t)}_i-\ell^{(t)}_i\rangle\leq 4B\sqrt{\log(n)T}.
    \end{align}
    Considering that $u$ can be arbitrarily chosen in $\Delta ([n])$, we have \begin{align}\label{eq:maximum}
        \max_{u \in \Delta ([n])}\left\{\sum_{t=0}^{T-1}\langle u^{(t)}_i,\hat{\ell}^{(t)}_i\rangle -  \sum_{t=0}^{T-1}\langle u,\hat{\ell}^{(t)}_i\rangle\right\} + \sum_{t=0}^{T-1}\langle \tilde{u}^{(t)}_i-u^{(t)}_i,\hat{\ell}^{(t)}_i-\ell^{(t)}_i\rangle \leq 4B\sqrt{\log(n)T}.
    \end{align}
    Taking the expectation of the left-hand side, we have \begin{align}
        \E{\left[\max_{u \in \Delta ([n])}\left\{\sum_{t=0}^{T-1}\langle u^{(t)}_i,\hat{\ell}^{(t)}_i\rangle -  \sum_{t=0}^{T-1}\langle u,\hat{\ell}^{(t)}_i\rangle\right\}\right]} +& \E{\left[\sum_{t=0}^{T-1}\langle \tilde{u}^{(t)}_i-u^{(t)}_i,\hat{\ell}^{(t)}_i-\ell^{(t)}_i\rangle\right]} \notag\\
        \leq & 4B\sqrt{\log(n)T}.\label{eq:expectation}
    \end{align}
    
    Consider the second term on the left-hand side,
    \begin{align}
        \E_{a^{(0)},\cdots, a^{(t)}}{\left[\langle \tilde{u}^{(t)}_i-u^{(t)}_i,\hat{\ell}^{(t)}_i-\ell^{(t)}_i\rangle\right]} &= \E_{a^{(0)},\cdots, a^{(t-1)}}{\left[\langle \tilde{u}^{(t)}_i-u^{(t)}_i,\hat{\ell}^{(t)}_i-\E_{a^{(t)}}{[\ell^{(t)}_i]}\rangle\right]}.
    \end{align}
    Suppose that the Gibbs sampling oracle gives $a_i^{(t)}$ from $\tilde{p}_{i}^{(t)}$, by the assumption $\|\tilde{p}_{i}^{[t]}- u_i^{(t)}\|_1 \leq \delta$, we have $\| \bigotimes_{j\neq i} \tilde{p}_{j}^{(t)} - \bigotimes_{j\neq i} u_j^{(t)}\|_1 \leq (n-1)\delta$. Note that $\E_{a^{(t)}}{[\ell_i^{(t)}]} = \mathcal{L}_i (\cdot , \tilde{p}_{-i}^{(t)})$, as $\mathcal{L}_i \in [0,B]$, we have \begin{align}\label{eq:ghost-bound}
        \E{\left[\sum_{t=0}^{T-1}\langle {u}^{(t)}_i-\tilde{u}^{(t)}_i,\hat{\ell}^{(t)}_i-\ell^{(t)}_i\rangle\right]} = \sum_{t=0}^{T-1} \E{\left[\langle {u}^{(t)}_i-\tilde{u}^{(t)}_i,\hat{\ell}^{(t)}_i-\ell^{(t)}_i\rangle\right]}
        &\leq 2(n-1)BT\delta.
    \end{align}
    Therefore, summing \eq{expectation} and \eq{ghost-bound}, taking $T \geq \frac{64B^2\log n}{\varepsilon^{2}}$ and $\delta \leq \frac{\varepsilon}{8(n-1)B}$, for $\bar{u}=(\bar{u}_1,\dots \bar{u}_m)\coloneqq \frac{1}{T} \sum_{t=0}^{T-1}(u_1^{(t)},u_1^{(t)},\cdots u_m^{(t)})$, we have 
    \begin{align}
        \notag &\underset{u_1,u_2,\dots u_{T}}{\E} \left[\max_{a_i'\in \Delta([n])}\left\{\underset{a \sim \bar{u}}{\mathbb{E}}[\mathcal{L}_i(a_i, a_{-i})] - \underset{a \sim \bar{u}}{\mathbb{E}}[\mathcal{L}_i(a_i', a_{-i})]\right\} \right] \\
        \leq&\underset{u_1,u_2,\dots u_{T}}{\E}{\left[\frac{1}{T}\max_{u \in \Delta ([n])}\left\{\sum_{t=0}^{T-1}\langle u^{(t)}_i,\hat{\ell}^{(t)}_i\rangle -  \sum_{t=0}^{T-1}\langle u,\hat{\ell}^{(t)}_i\rangle\right\}\right]} \\
        \leq & \frac{\varepsilon}{2}.
    \end{align}
    Next, by a martingale argument we will prove that with high probability, \algo{quantum-cce} implicitly provides an $\varepsilon$-coarse correlated equilibrium $\bar{u}$.

    Consider a filtration given by $\mathcal{F}_t = \sigma(s^{(0)},s^{(1)},\cdots s^{(t)})$, where $s^{(t)}\coloneqq (s_1^{(t)},s_2^{(t)},\cdots s_m^{(t)})$. Define a martingale sequence of the form $D_t\coloneqq \langle {u}^{(t)}_i-\tilde{u}^{(t)}_i,\hat{\ell}^{(t)}_i-\ell^{(t)}_i\rangle - \langle \tilde{u}^{(t)}_i-u^{(t)}_i,\hat{\ell}^{(t)}_i-\E[\ell^{(t)}_i|\mathcal{F}_{t-1}]\rangle$. Notice that with probability 1 we have $|D_t|\leq 4B$. Azuma's inequality implies that \begin{align}
        \Pr[\sum_{t=0}^{T-1} D_t \geq \frac{\varepsilon}{4}T] \leq \exp\left({\frac{-(\varepsilon T/4)^2}{2T\cdot (4B)^2}}\right) =\exp\left({\frac{-\varepsilon^2 T}{512B^2}}\right).
    \end{align}
    Taking $T\geq \frac{512 B^2 \log \frac{4}{\alpha}}{\varepsilon^2}$, we thus have \begin{align}
        \sum_{t=0}^{T-1}\langle {u}^{(t)}_i-\tilde{u}^{(t)}_i,\hat{\ell}^{(t)}_i-\ell^{(t)}_i\rangle \leq \sum_{t=0}^{T-1} \E{\left[\langle {u}^{(t)}_i-\tilde{u}^{(t)}_i,\hat{\ell}^{(t)}_i-\ell^{(t)}_i\rangle\right]} +\frac{\varepsilon}{4} T
    \end{align}
    with probability more than $1-\frac{\alpha}{4}$.
    
    Combining \eq{maximum} with \eq{ghost-bound}, it gives us \begin{align}
        \max_{a_i'\in \Delta([n])}\left\{\underset{a \sim \bar{u}}{\mathbb{E}}[\mathcal{L}_i(a_i, a_{-i})] - \underset{a \sim \bar{u}}{\mathbb{E}}[\mathcal{L}_i(a_i', a_{-i})]\right\} &= \max_{u \in \Delta ([n])}\left\{\sum_{t=0}^{T-1}\langle u^{(t)}_i,\hat{\ell}^{(t)}_i\rangle -  \sum_{t=0}^{T-1}\langle u,\hat{\ell}^{(t)}_i\rangle\right\} \notag\\ 
        &\leq \frac{3\varepsilon}{4}\label{eq:ideal-error}
    \end{align}
    with probability at least $1-\frac{\alpha}{4}$. That is to say, $\bar{u}$ is an $O(\varepsilon)$-coarse correlated equilibrium with probability at least $1-\frac{\alpha}{4}$.

    Finally, note that Gibbs sampling implicitly implements the sampling oracles for $u_i^{(t)}$, but cannot directly provide these distribution vectors explicitly. We will prove that a coarse correlated equilibrium (i.e., $\hat{x}_i$) can be found with probability at least $1-\alpha$ based on $a_i^{(t)}$ from Gibbs sampling in each round.

    We previously used the notation $\tilde{p}_i^{(t)}$ to represent the actual distribution of $a_i^{(t)}$ sampled from Gibbs sampling. Let $\bar{p}_i \coloneqq \frac{1}{T} \sum_{t=0}^{T-1} \tilde{p}_i^{(t)}$. Since $\|\tilde{p}_{i}^{[t]}- u_i^{(t)}\|_1 \leq \delta$, by the convexity of norms we have $\|\bar{p}_{i}- \bar{u}_i\|_1 \leq \delta$, thus for any action $a_i'$ of player $i$, loss of player $i$ under the two different opponent strategies is nearly the same: \begin{align}\label{eq:loss-gibbs}
        \left|\underset{a \sim \tilde{p}}{\mathbb{E}}\left[\mathcal{L}_i(a_i', a_{-i})\right] - \underset{a \sim \bar{u}}{\mathbb{E}}[\mathcal{L}_i(a_i', a_{-i})] \right|\leq (n-1)B\delta.
    \end{align}

    For a fixed strategy $a_i'$ of player $i$, let random variable $X_j$ denote player $i$'s loss when sampling the opponent's strategy $a_{-i}$ from distribution $\tilde{p}_{-i}^{(j)}$. Thus $X_t\in[0,B]$ and $\E{(X_t)} =  \underset{a \sim \tilde{p}^{(t)}}{\mathbb{E}}[\mathcal{L}_i(a_i', a_{-i})]\in[0,B]$. Note that $S_t\coloneqq \sum_{j=0}^{t-1} (X_j - E[X_j])$ is a martingale sequence generated by filtration $\mathcal F$. Again by Azuma's inequality, \begin{align}\label{eq:azuma2}
        \Pr\left[|S_{T} - S_0|\geq \frac{\varepsilon }{16}T\right] \leq 2\exp{\left(-\frac{(\varepsilon T/16)^2}{2\cdot T\cdot B^2}\right)} = 2\exp{\left(-\frac{T\varepsilon^2}{512 B^2}\right)}.
    \end{align}
    This implies that \begin{align}
       \notag\Pr\left[|\mathcal{L}_i(a_i',\hat{x}_{-i}) - \underset{a \sim \tilde{p}}{\mathbb{E}}[\mathcal{L}_i(a_i', a_{-i})]| \leq \frac{\varepsilon}{16}\right] &= \Pr\left[\frac{1}{T}\left|\sum_{t=0}^{T-1} X_t - \sum_{t=0}^{T-1} \E[X_t]\right | \geq \frac{\varepsilon}{16}\right]\\ &\leq 2\exp{\left(-\frac{T\varepsilon^2}{512B^2}\right)}.\label{eq:loss-sampling}
    \end{align}
    Take $\delta \leq \frac{\varepsilon}{16B(n-1)}$ and $T\geq \frac{512 B^2\log (4/\alpha )}{\varepsilon^2}$, combining \eq{loss-gibbs} and \eq{loss-sampling}, with probability at least $1-\frac{\alpha}{2}$ we have \begin{align}
    \label{eq:sampling-error}
       \left|\mathcal{L}_i(a_i',\hat{x}_{-i}) - \underset{a \sim \bar{u}}{\mathbb{E}}[\mathcal{L}_i(a_i', a_{-i})]\right|\leq \frac{\varepsilon}{8},
    \end{align}
    for any strategy $a_i'$.

    Summing \eq{ideal-error} and \eq{sampling-error}, we have with success probability at least $(1-\frac{\alpha}{4})\cdot(1-\frac{\alpha}{4})\geq 1-\alpha$,
    \begin{align*}
        \left|\underset{a \sim \hat{x}}{\mathbb{E}}[\mathcal{L}_i(a_i, a_{-i})] - \underset{a \sim \hat{x}}{\mathbb{E}}[\mathcal{L}_i(a_i', a_{-i})] \right| \leq \varepsilon,
    \end{align*}
    which means the output of \algo{quantum-cce} forms an $\varepsilon$-coarse correlated equilibrium for the normal-form game.
\\\\
    \noindent\textbf{Query complexity.} 
    In each round, $m$ Gibbs samplings are required, corresponding to $m$ instances of \prob{gibbs-sampling}. According to the \lem{cce-sampling}, each sampling requires $\tilde{O}(\sqrt{n}\cdot T\eta^{\frac{3}{2}}) = \tilde{O}(\sqrt{n}  B^{\frac{1}{2}}\varepsilon^{-\frac{1}{2}})$. Therefore, the total query complexity is $\tilde{O}(T\cdot  \sqrt{n} m B\varepsilon^{-\frac{1}{2}})  = \tilde{O}( n^{\frac{1}{2}} m B^\frac{5}{2} \varepsilon^{-\frac{5}{2}} )$.
\\\\
    \noindent\textbf{Time complexity.}
    By \lem{cce-sampling}, each sampling takes time $\tilde{O}(\sqrt{n}\cdot T\eta^{\frac{3}{2}} \cdot {T}  m \log n) = \tilde{O}(\sqrt{n} m B^\frac{5}{2}\varepsilon^{-\frac{5}{2}})$. The total time complexity is $\tilde{O}(T\cdot m \cdot \sqrt{n} m B^\frac{1}{2} \varepsilon^{-\frac{1}{2}} )+ \tilde{O}(\eta^3 T^3) = \tilde{O}( n^{\frac{1}{2}} m^2 B^\frac{9}{2} \varepsilon^{-\frac{9}{2}})$.
\end{proof}

\begin{remark}
Note that by replacing the quantum Gibbs sampling with exact Gibbs sampling oracles, we can follow the correctness proof above and derive a classical query complexity of $\tilde{O}(mn/\varepsilon^2)$, as is shown in \cor{classical-cce}.
\end{remark}

\section{Technical details of lower bounds}\label{append:lower-bound}
In this appendix, we present the formal proofs of the quantum query lower bounds in \sec{lower-bound}, including \thm{lower} and the associated lemmas.

\begin{proof}[Proof of \lem{ce-search-reduction}]    
For the search problem with $m$ copies, we can define the corresponding $m$-player normal-form game with utilities in \defn{hard-instance}. Then we invoke $\A$ to obtain a set of strategies. For each player's strategy (may be a mixed strategy), we perform a sampling and use the sampled result as the output of the search problem for corresponding copy. The probability that all $m$ copies of the search problem succeed is larger than:
    \begin{align*}
        (1-\delta) \cdot \left(1-\frac{\varepsilon}{B}\right)^m \geq 1 - \left(\delta + \frac{\varepsilon m}{B}\right).
    \end{align*}
    Here $(1-\delta)$ is the success probability of algorithm $\A$, and $(1-\frac{\varepsilon}{B})$ is smaller than the probability that one sampling result for index $i$ is exactly corresponded to $k_i$, according to the form of $\varepsilon$-correlated equilibrium in this hard instance.
\end{proof}

\begin{proof}[Proof of \thm{lower}]
    By \lem{ce-search-reduction}, we only need to consider the query lower bound of solving $m$ copies of the search problem. For a single search problem ($m=1$), it requires $\Omega(\sqrt{n})$ queries to $\orac_u$ by quantum query lower bound on unstructured search by \citet{bennett1997strengths}. For general $m$, we leverage the strong direct product theorem provided in \citet{lee2013strong}, giving a quantum query lower bound on an $m$-copies problem, which shows that computing $m$ copies of a function $f$ needs nearly $m$ times the queries needed for one copy.

    \begin{lemma}[Theorem 1.1 in \citet{lee2013strong}, strong direct product theorem]
    \label{lem:SDPT-lowerbound}
    Let $f\colon\mathcal D \rightarrow E$ where $\mathcal D \subseteq D^n$ for finite sets $D,E$.  
    For an integer $m > 0$, define $f^{(m)}(x^1, \ldots, x^m)=(f(x^1), \ldots, f(x^m))$. 
    Then, for any $2/3\leq k\leq 1$,
    \begin{align*}
        Q_{1- k^{m/2}}(f^{(m)})
    \ge \frac{m\ln(3 k/2)}{8000}\cdot Q_{1/4}(f) 
    \enspace.
    \end{align*}
    Here $Q_\varepsilon(f)$ denotes the query complexity of generating $f$ with error $\varepsilon$.
    \end{lemma}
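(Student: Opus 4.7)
The plan is to use the \emph{multiplicative adversary method}, a generalization of the quantum adversary bound specifically designed for direct product theorems. The ordinary (additive) adversary bound and the polynomial method only yield lower bounds of the form $T \gtrsim Q_{1/4}(f)$ for constant error probability, which would lose a factor of $m$ in this setting. To reach the error regime $k^{m/2}$, which is exponentially small in $m$, one needs a progress measure whose logarithm grows additively in the number of queries and multiplicatively in $m$, and this is exactly what the multiplicative adversary provides.

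The proof proceeds in three conceptual steps. First, for a single copy of $f$, invoke the dual of the multiplicative adversary SDP (as developed by Ambainis, Spalek--Szegedy, and refined by Lee--Roland) to extract from the hypothesis $Q_{1/4}(f) \ge q$ a positive semidefinite matrix $\Gamma_f$ indexed by inputs, together with a parameter $\lambda = \lambda(q) > 1$ and an ``output condition'' $R = R(q)$ such that: (i) for every coordinate of the input, the oracle query operator changes the operator norm of $\Gamma_f$ by at most a multiplicative factor $\lambda$; and (ii) any algorithm computing $f$ with success probability $s$ must achieve a final progress of at least $s \cdot R$. Combining these yields $T \ge \log(sR)/\log\lambda$, which recovers the hypothesized single-copy bound when $s = 3/4$. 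Second, lift the construction to $f^{(m)}$ by taking the tensor power $\Gamma_f^{\otimes m}$. The crucial observation is that an oracle query for $f^{(m)}$ accesses only one coordinate within one of the $m$ input copies, so the per-query multiplicative factor for $\Gamma_f^{\otimes m}$ remains $\lambda$, while the output condition on the product adversary becomes $R^m$ because separating the correct tensor-product answer from any incorrect one is exponentially harder across $m$ copies.

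Third, run the multiplicative progress argument for $f^{(m)}$: any $T$-query algorithm succeeding with probability $s$ must satisfy $s \cdot R^m \le \lambda^T$, hence
\[
T \;\ge\; \frac{m \log R - \log(1/s)}{\log \lambda}.
\]
Plugging in the target error $s = k^{m/2}$ and calibrating $\log R$ so that the leading term scales like $m \ln(3k/2)$ gives the claimed bound $T \ge \tfrac{m \ln(3k/2)}{8000} \cdot Q_{1/4}(f)$, with the constant $1/8000$ absorbing all the slack introduced by the SDP duality conversion and by amplifying the single-copy error from $1/4$ to the corresponding $k$-dependent threshold.

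The main obstacle is controlling the per-query multiplicative factor under the tensor product. One must show that simultaneous interference across the $m$ copies cannot amplify a query's effect beyond the single-copy factor $\lambda$. This requires a Jordan-lemma style block decomposition of the query operator with respect to the eigenspaces of $\Gamma_f$, followed by the argument that this decomposition is preserved when $\Gamma_f$ is replaced by $\Gamma_f^{\otimes m}$ (since queries remain local to one tensor factor). A secondary technical point is the explicit quantitative conversion between the SDP parameters $\lambda, R$ and the query-complexity parameters $q, k$; these are the source of the explicit constant $1/8000$ and would be the place where the argument becomes notationally heavy, but no new conceptual idea is required beyond the adversary framework itself.
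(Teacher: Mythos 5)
This lemma is not proved in the paper at all: it is imported verbatim as Theorem~1.1 of Lee and Roland (2013) and used as a black box inside the proof of the lower bound theorem, so there is no ``paper's own proof'' to compare against. Your sketch is therefore best judged as a reconstruction of the known proof from the literature, and on that score it correctly identifies the multiplicative adversary method as the right framework and correctly describes the shape of the final calculation (per-query multiplicative progress factor $\lambda$, output condition $R$ tensorizing to $R^m$, and the trade-off $s\cdot R^m \le \lambda^T$ yielding a bound linear in $m$ with an exponentially small error threshold).

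However, your first step hides the entire content of the theorem. You claim that from the hypothesis $Q_{1/4}(f)\ge q$ one can ``invoke the dual of the multiplicative adversary SDP'' to extract a witness $\Gamma_f$ with the right parameters. This is not a routine duality step: the multiplicative adversary bound is itself a maximization over adversary matrices, and a lower bound on $Q_{1/4}(f)$ does not by duality hand you a feasible point of that maximization. What is actually needed is \emph{completeness} of the multiplicative adversary method, i.e.\ that the multiplicative adversary bound is $\Omega(Q(f))$ for every $f$. Establishing this is the main contribution of Lee and Roland: they show that in the limit $\lambda\to 1$ the multiplicative adversary bound dominates the negative-weight (additive) adversary bound $\mathrm{ADV}^{\pm}(f)$, and then invoke the tight characterization $Q(f)=\Theta(\mathrm{ADV}^{\pm}(f))$ due to Reichardt et al.\ to obtain the witness you take for granted. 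Without that bridge there is no $\Gamma_f$ to tensorize and the argument does not start. A secondary (smaller) gap is that the tensorization of the output condition to $R^m$ also requires a carefully chosen ``strong output condition'' that is stable under tensor powers; it is not automatic for the naive output condition. If you intend to cite this result as the paper does, that is fine; if you intend to prove it, these two steps are where the real work lies.
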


    Denote $f$ as the search problem of finding $k_i$ for a single $i\in[m]$. Thus we have 
    \begin{align*}
        Q_{1/4}(f) = \Omega(\sqrt{n}).
    \end{align*}
    From the above analysis, finding an $\varepsilon$-correlation equilibrium is equivalent to calculating $f^{(m)}$.

    Taking $ k = (\delta + \frac{\varepsilon}{B} m)^{2/m}$, we have
    \begin{align*}
        Q_{1-(\delta + \frac{\varepsilon m}{B})}(f^{(m)}) &\geq \frac{m\ln(\frac{3}{2}\cdot (\delta + \frac{\varepsilon}{B} m)^{2/m})}{8000}\cdot Q_{1/4}(f)\\
        &=\frac{1}{8000}\cdot \left(m\ln\left(\frac{3}{2}\right) - 2\ln{\frac{1}{\delta + \frac{\varepsilon m}{B}}}\right)\cdot Q_{1/4}(f).
    \end{align*}
    Taking $\delta = \frac{1}{3}$, the above analysis gives a quantum query lower bound for finding $\varepsilon$-correlated equilibrium with success probability more than $\frac{2}{3}$:
    \begin{align*}
        \frac{1}{8000}\cdot \left(m\ln\left(\frac{3}{2}\right) - 2\ln{\frac{1}{\delta + \frac{\varepsilon m}{B}}}\right)\cdot Q_{1/4}(f) 
        &\geq 
        \frac{1}{8000}\cdot \left(m\ln\left(\frac{3}{2}\right) - 2\ln{\frac{1}{\frac{1}{3} + \frac{\varepsilon m}{B}}}\right)\cdot Q_{1/4}(f)\\
        &\geq \frac{1}{8000}\cdot \left(m\ln\left(\frac{3}{2}\right) - 2\ln{3}\right)\cdot Q_{1/4}(f)\\
        &= \Omega(m\cdot \sqrt{n}).
    \end{align*}
\end{proof}
\section{Impact of quantum sampling noise on the analysis of optimistic MWU}
\label{sec:daskalakis}
The primary difficulty in extending the proof of \citet{daskalakis2021near} to a quantum optimistic MWU algorithm is that the smoothness conditions on the higher-order discrete differentials of the loss vector sequence are violated by the sampling error induced by the quantum Gibbs sampler.

Specifically, let $(\mathrm{D}_h\ell)^{(t)}=\sum_{s=0}^h\binom{h}{s}(-1)^{h-s} \ell^{(t+s)}$ be the order-$h$ finite difference of the loss vectors $\ell^{(1)}, \dots, \ell^{(T)}$, as defined in \citet[Definition 4.1]{daskalakis2021near}. Let $H = \log T$ and $\alpha\in (0, 1/(H+3))$ be two parameters. In a classical $m$-player general-sum game where all players follow OMWU updates with step size $\eta\le \alpha/(36e^5m)$, the order-$h$ finite difference of the loss vectors for any player $i$ is bounded by:
\begin{equation}
\label{eq:main-eq}
    \|(\mathrm{D}_h \ell_i)^{(t)}\|_{\infty} \leq \alpha^h \cdot h^{3 h+1}
\end{equation}
for all integers $h\in [0, H]$ and $t\in [T-h]$ \cite[Lemma 4.4]{daskalakis2021near}. This bound is crucial for their main result.

To illustrate the difficulty of extending this proof to a quantum setting, consider a two-player game ($m=2$). Let $x_i^{(t)}$ be the strategy of player $i\in \{1,2\}$ at time $t$. In the classical setting, the loss vectors are given by $\ell_1^{(t)}=A_1 x_2^{(t)}$ and $\ell_2^{(t)}=A_2^{T} x_1^{(t)}$. The proof of \eq{main-eq} proceeds by induction, first bounding $\|(\mathrm{D}_h x_2)^{(t)}\|_1$ via the induction hypothesis and then bounding $\|(\mathrm{D}_h \ell_1)^{(t)}\|_{\infty}$ using the matrix norm inequality:
\begin{equation}
\label{eq:induction}
    \|(\mathrm{D}_h \ell_1)^{(t)}\|_{\infty}=\left\|A_1 \sum_{s=0}^h\binom{h}{s}(-1)^{h-s} x_2^{(t+s)}\right\|_{\infty} \leq \left\|\sum_{s=0}^h\binom{h}{s}(-1)^{h-s} x_2^{(t+s)}\right\|_1=\|(\mathrm{D}_h x_2)^{(t)}\|_1.
\end{equation}
However, in the quantum setting, we approximate the loss vector $\ell_i^{(t)}$ using a quantum Gibbs sampler with accuracy $\varepsilon_G$, which requires $O(\sqrt{n}/\varepsilon_G^2)$ queries. This introduces an error term. Since $\sum_{s=0}^h |\binom{h}{s}| = 2^h$, the inequality in \eq{induction} is weakened to:
\begin{equation}
    \|(\mathrm{D}_h \ell_1)^{(t)}\|_{\infty}\le\|(\mathrm{D}_h x_2)^{(t)}\|_1+2^h \varepsilon_G.
\end{equation}
For the original induction scheme to hold, the error term must be absorbed into the bound from \eq{main-eq}. This requires the sampling accuracy $\varepsilon_G$ to satisfy:
\begin{equation}
\label{eq:error-require}
    2^h \varepsilon_G\le \frac{1}{2} \alpha^h h^{3h+1}.
\end{equation}
\citet{daskalakis2021near} ultimately apply their theorem with $\alpha = 1/(4\sqrt{2}H^{7/2})$. To satisfy \eq{error-require}, we must therefore choose an $\varepsilon_G$ such that:
$$
\varepsilon_G \le \min_{h\in [0,H]} \frac{1}{2} \cdot 2^{-h} \alpha^h h^{3h+1} \approx \min_{h\in [0,H]} \frac{1}{2}\left(\frac{\alpha h^3}{2}\right)^h = \min_{h\in [0,H]} \frac{1}{2}\left(\frac{h^3}{8\sqrt{2}H^{7/2}}\right)^h.
$$
The function $f(h):=\left(\frac{ h^3}{8\sqrt{2}H^{7/2}}\right)^h$ attains its minimum at $h=e^{-1}(8\sqrt{2}H^{7/2})^{1/3}$. At this point, the minimum value is approximately:
$$
\exp\left(-\frac{3}{e}(8\sqrt{2}H^{7/2})^{1/3}\right) = \exp\left(-\frac{6}{e}2^{1/6} H^{7/6}\right).
$$
Substituting $H=\log(T)$, the required precision becomes $\varepsilon_G = \exp(-\Theta((\log T)^{7/6}))$, which is $o(1/\poly(T))$ for any polynomial in $T$. Consequently, the query complexity of the quantum Gibbs sampler, which scales with $1/\varepsilon_G^2$, becomes superpolynomial in $T$. Since computing an $\varepsilon$-CCE requires setting $T = \tilde{O}(m/\varepsilon)$, this superpolynomial overhead in $T$ translates to a superpolynomial overhead in $m$ and $1/\varepsilon$, rendering the quantum approach impractical under this proof strategy.
\end{document}